\DeclareSymbolFont{calsymbols}{OMS}{cmsy}{m}{n} \DeclareSymbolFontAlphabet{\mathcal}{calsymbols}
\def\E{\mathop\mathbb{E}\nolimits}
\def\G{\mathop\mathbb{G}\nolimits}
\begin{document}

\newtheorem{theorem}{Theorem} \newtheorem{lemma}{Lemma}
\newtheorem{definition}{Definition} \newtheorem{corollary}{Corollary}
\newtheorem{proposition}{Proposition}
\newtheorem{example}{Example}

\IEEEoverridecommandlockouts

\title{{\huge Asymptotically Optimal Policies for Hard-deadline Scheduling over Fading Channels}\thanks{The work of J.~Lee is supported by a Motorola Partnership in Research Grant.}}

\author{\authorblockN{Juyul Lee and Nihar Jindal}\\
\authorblockA{Department of Electrical and Computer Engineering\\
University of Minnesota\\
E-mail: \{juyul, nihar\}@umn.edu}}

\maketitle

\begin{abstract}
A hard-deadline, opportunistic scheduling problem in which $B$ bits must be transmitted within $T$ time-slots over a time-varying channel is studied: the transmitter must decide how many bits to serve in each slot based on knowledge of the current channel but without knowledge of the channel in future slots, with the objective of minimizing expected transmission energy.  In order to focus on the effects of delay and fading, we assume that no other packets are scheduled simultaneously and no outage is considered. We also assume that the scheduler can transmit at capacity where the underlying noise channel is Gaussian such that the energy-bit relation is a Shannon-type exponential function. No closed form solution for the optimal policy is known for this problem, which is naturally formulated as a finite-horizon dynamic program, but three different policies are shown to be optimal in the limiting regimes where $T$ is fixed and $B$ is large, $T$ is fixed and $B$ is small, and where $B$ and $T$ are simultaneously taken to infinity.  In addition, the advantage of optimal scheduling is quantified relative to a non-opportunistic (i.e., channel-blind) equal-bit policy.
\end{abstract}

\section{Introduction}

Although the basic tenants of opportunistic communication over time-varying channels are well understood, much less is known when short-term delay constraints are imposed.  Given the increasing importance of delay constrained communication, e.g., multimedia transmission, it is critical to understand how to optimize communication performance in delay-limited settings. Thereby motivated, we consider the discrete-time causal scheduling problem of transmitting a packet of $B$ bits within a hard deadline of $T$ slots over a time-varying channel. At each time slot the scheduler determines how many bits to transmit based on the current channel state information (CSI), but without future CSI, and the number of unserved bits, with the objective of minimizing the expected total energy cost. In order to focus on the interplay between opportunistic communication and delay, it is assumed that no other packets are simultaneously transmitted, and the hard deadline must always be met.

This basic problem was formulated as a finite-horizon dynamic program in \cite{Fu_WC06}, but an analytic form for the optimal scheduling policy cannot be found for most energy-bit relationships.  Indeed, such a problem is difficult to solve because the transmitter only has \textit{causal} CSI and because a particular rate must be guaranteed over a \textit{finite} time-horizon.  In our earlier work \cite{Lee_WC09}, we studied this problem in the setting where transmission occurs at the capacity of the underlying Gaussian noise channel and proposed different suboptimal scheduling policies. 

Building upon \cite{Lee_WC09}, in this work we prove the \textit{optimality} of certain scheduling policies in different asymptotic regimes.  In particular, we show that:
\begin{itemize}
    \item When the number of bits $B$ is large, the optimal scheduling policy is a linear combination of a delay-associated term and an opportunistic-term.  The opportunistic term depends on the logarithm of the channel quality, and the weight of this term decreases as the deadline approaches.
    \item When the number of bits $B$ is small, a one-shot threshold policy where all $B$ bits are transmitted in the first slot in which the channel quality is above a specified threshold is optimal.
\item When the number of bits $B$ and the time horizon $T$ are both large, a waterfilling-like policy is optimal.
\end{itemize}
 
These results are particularly important in light of the fact that the general optimal solution appears intractable.  In addition, the different asymptotically optimal schedulers provide an understanding of how 
the conflicting objectives of opportunistic communication (i.e. transmit only when the channel is strong) and delay-limited communication are optimally balanced, and how this balance depends on the time-horizon and the packet size.

In addition to showing asymptotic optimality, we also quantify the power benefits of optimal channel- and delay-aware scheduling relative to non-opportunistic equal-bit/rate transmission. These results identify that the largest benefits are obtained for severe fading, small packet size, and large time horizon. Moreover, we analyze the behavior of the scheduling policies for large and small $B$ using results on high and low SNR analysis in \cite{Shamai_IT01} and \cite{Verdu_IT02}.

\subsection{Prior Work}

The basic scheduling problem was first proposed and formulated as a finite-horizon dynamic program (DP) in \cite{Fu_WC06}. In that work a closed-form solution for the optimal scheduler is provided for the special case where the number of transmitted bits is linear in the transmit energy/power and the channel quality is restricted to integer multiples of some constant. In \cite{Zafer_WITA07}, the formulation is extended to continuous time; closed-form descriptions of the optimal policies for some specific models are found, but these do not directly apply to the discrete-time problem considered here.  In our earlier work \cite{Lee_WC09}, we specialized \cite{Fu_WC06} to the setting where the energy-bit relationship is dictated by AWGN channel capacity and proposed several different suboptimal policies.  Two of these policies are shown to be asymptotically optimal in the present work. 

Prior work has also considered the dual problem of (expected) rate maximization over a finite time horizon, i.e., the transmitter determines how to utilize a finite energy budget over a finite number of slots with the objective of maximizing the expected rate. This problem was considered in \cite{Negi_IT02}, and a one-shot threshold policy and equal power scheduling are shown to be asymptotically optimal in the low- and high-SNR regimes, respectively.  This work was extended to a multiple-access setting in \cite{Caire_IT04}.

Because transmission scheduling corresponds to power allocation, it is also useful to put the present work in the context 
of prior work on optimal power allocation in fading channels, with and without delay constraints.  In \cite{Goldsmith_IT97} it is established that waterfilling maximizes the long-term average transmitted rate; analogously, the long-term average
power needed to achieve a particular long-term average rate is minimized by waterfilling.  At the other extreme, channel inversion is known to be the optimal policy when a constant rate is desired in every fading state \cite{Caire_IT99}.  
The current setting lies between these two extremes, because our objective is to find a power allocation policy 
(based on causal CSI) such that a particular rate (i.e. $B/T$) is guaranteed over $T$ fading slots.  The case
$T=1$ clearly corresponds to zero-outage/delay-limited capacity in \cite{Caire_IT99}, while we intuitively
expect $T \rightarrow \infty$ to correspond to the long-term average rate scenario of \cite{Goldsmith_IT97}.  The
latter correspondence is made precise in Section \ref{sec-ergodic}.

\section{Problem Setup}
\label{sec:problem}

This section summarizes the scheduling problem introduced in \cite{Lee_WC09}, which is a discrete-time delay constrained scheduling problem over a wireless fading channel as illustrated in Fig.~\ref{fig:scheduling_simple}.
\begin{figure}
    \centering
    \includegraphics[width=0.45\textwidth]{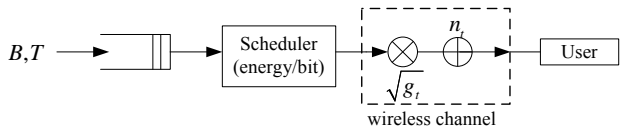}
    \caption{Point-to-point delay constrained scheduling}
    \label{fig:scheduling_simple}
\end{figure}
A packet of $B$ bits\footnote{We operate in ``nats'' instead of ``bits'' since we adopt log-base $e$ expression in the capacity formula to avoid constant factors in the analysis. We use ``bits'' and ``bit allocation'' as generic terms.} is to be transmitted within a deadline of $T$ slots. The scheduler determines the number of bits to allocate at each time slot using the fading realization/statistics to minimize the total expected transmit energy while satisfying the delay deadline constraint. We assume no other packets are to be scheduled simultaneously and that no outage is allowed.

The discrete-time slots are indexed by $t$ in descending order (i.e., starting at $t=T$ down to $t=1$), and thus $t$ represents the number of remaining slots to the deadline. The channel state (at slot $t$) is denoted by $g_t$ in power units. We assume that $g_T, g_{T-1},\cdots, g_1$ are independently and identically distributed (i.i.d.) and the probability density function (PDF) and the cumulative distribution function (CDF) are denoted by $f$ and $F$, respectively\footnote{The fading distribution must have a non-zero delay-limited capacity, i.e., $\E[1/g]<\infty$, for this problem to be feasible.}   The scheduler is assumed to have only \emph{causal} knowledge of channel states (at time $t$, $g_T,\cdots, g_t$ are known but $g_{t-1},\cdots, g_1$ are unknown). Assuming unit variance Gaussian additive noise and transmission at capacity, if  energy $E_t$ is used under channel state $g_t$, the number of transmitted bits is given by:
\begin{equation}
    b_t=\log(1+g_t E_t)
\end{equation}
By inverting this formula, the required energy $E_t$ to transmit $b_t$ bits with channel state $g_t$ is:
\begin{equation} \label{eq:Et_vs_bt}
    E_t(b_t, g_t)=\frac{e^{b_t}-1}{g_t}.
\end{equation}

The queue state is denoted by $\beta_t$, which is the number of unserved bits at the beginning of slot $t$. Thus, the number of bits to allocate at slot $t$ is determined by the queue state $\beta_t$ and the channel state $g_t$. That is, a scheduling policy is a sequence of functions, indexed by the time step, that map from the current queue and channel state to the bit allocation: $\{b_T(\beta_T,g_T), b_{T-1}(\beta_{T-1}, g_{T-1}),\cdots, b_1(\beta_1,g_1)\}$. As for terminology, the entire set $\{b_T(\cdot,\cdot), b_{T-1}(\cdot, \cdot),\cdots, b_1(\cdot,\cdot)\}$ is referred to as a \emph{policy} or a \emph{scheduler}, and each element of it is referred to as a \emph{policy function} or a \emph{scheduling function}.

\section{Optimal \& Suboptimal Schedulers}

In this section we describe the optimal scheduling policy, two suboptimal policies introduced in \cite{Lee_WC09}, and a heuristic modification of the ergodic (infinite-horizon) policy.

\subsection{The Optimal Scheduler}

The optimal scheduler for the hard-deadlined causal scheduling problem described in Section \ref{sec:problem} can be found by solving the sequential optimization:
\begin{equation} \label{eq:seq_Et}
    b_t^\text{opt}(\beta_t, g_t)= \begin{cases}
    \arg\min\limits_{0\le b_t\le \beta_t} \left\{E_t(b_t,g_t) + \E\left[\sum\limits_{s=1}^{t-1} E_s (b_s,g_s)\Bigg\vert     b_t\right]\right\}, & t=T,\ldots,2,\\
    \beta_1,&  t=1.
    \end{cases}
\end{equation}
where $\E$ denotes the expectation operator.
Equivalently, this can be formulated as a finite-horizon dynamic program (DP):
\begin{equation} \label{eq:Jt_opt}
    J_t^\text{opt}(\beta_t,g_t) =
    \begin{cases}
    \min\limits_{0\le b_t\le \beta_t} \left(\frac{e^{b_t}-1}{g_t}+\bar{J}_{t-1}^\text{opt}(\beta_t-b_t)\right), & t\ge 2\\
    \frac{e^{\beta_1}-1}{g_1}, & t=1,
    \end{cases}
\end{equation}
where
$\bar{J}_{t-1}^\text{opt}(\beta)=\E_{g}[J_{t-1}^\text{opt}(\beta,g)]$ is the cost-to-go function, i.e., the expected cost to serve $\beta$ bits in $t-1$ slots if the optimal policy is used.

At the final step ($t=1$) all $\beta_1$ remaining bits must be served because outage is not allowed. At all other steps the optimal bit allocation is determined by balancing the current energy cost $\frac{e^{b_t}-1}{g_t}$ and the expected energy expenditure in future slots $\bar{J}_{t-1}^\text{opt}(\beta_t-b_t)$.
Although the optimal scheduler can be found in closed form for $T=2$ (Section III-A in \cite{Lee_WC09}), it is not possible to do the same for $T>2$ because no close-form expression for the cost-to-go function is known for $T\ge 2$.
Nevertheless, the optimal scheduling functions can be described as \cite{Lee_WC09}:
\begin{equation} \label{eq:bt_opt_det}
    b_t^\text{opt} (\beta_t, g_t)= \begin{cases}
  0, & g_t \le \frac{1}{(\bar{J}_{t-1}^\text{opt})'(\beta_t)},\\
  \arg_b \left\{\frac{e^{b}}{g_t} = (\bar{J}_{t-1}^\text{opt})'(\beta_t-b)\right\}, &  \hspace{-10pt}\frac{1}{(\bar{J}_{t-1}^\text{opt})'(\beta_t)} < g_t < \frac{e^{\beta_t}}{(\bar{J}_{t-1}^\text{opt})'(0)}, \\
  \beta_t, & g_t \ge \frac{e^{\beta_t}}{(\bar{J}_{t-1}^\text{opt})'(0)},
 \end{cases}
 \end{equation}
where $\arg_b\{\cdot\}$ represents
the solution\footnote{Because of the convexity, the solution exists
uniquely if it exists.} of the argument equation.
The differentiability of $\bar{J}_{t-1}^\text{opt}$ can be verified by the properties of convexity and infimal convolution  (pp.~254-255 in \cite{Rockafellar_Book70}).

\vspace{10pt}
\begin{proposition} \label{prop:monotonicity_bt_opt}
    The optimal policy function $b_t^\text{opt}(\beta_t,g_t)$ has the following monotonicity properties:
    \begin{enumerate}
        \item[(a)]  For any fixed value $g_t(>0)$, $b_t^\text{opt}$ and $(\beta_t - b_t^\text{opt})$ are non-decreasing in $\beta_t$. Furthermore, there exists $\mathfrak{B}_0$ such that $b_t^\text{opt}$ and $(\beta_t - b_t^\text{opt})$ are strictly increasing in $\beta_t$ for all $\beta_t > \mathfrak{B}_0$.
        \item[(b)] For any fixed value $\beta_t(>0)$, $b_t^\text{opt}$ is non-decreasing in $g_t$.
    \end{enumerate}
\end{proposition}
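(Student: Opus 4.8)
The plan is to run everything off the convexity built into the dynamic program \eqref{eq:Jt_opt}, using two structural facts. First, for each fixed $g_t$ the stage cost $b\mapsto E_t(b,g_t)=(e^{b}-1)/g_t$ is \emph{strictly} convex, and $\bar J_{t-1}^\text{opt}$ is convex (as the text notes, via infimal convolution); in fact $\bar J_{t-1}^\text{opt}$ is strictly convex, by induction on the horizon — $\bar J_1^\text{opt}(\beta)=\E[1/g](e^{\beta}-1)$ is strictly convex, and the step \eqref{eq:Jt_opt} is a constrained infimal convolution of the strictly convex stage cost with the (inductively) strictly convex cost-to-go $\bar J_{t-2}^\text{opt}$, hence strictly convex, and $\E_g$ preserves this. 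Consequently, for fixed $(\beta_t,g_t)$ the map $b\mapsto E_t(b,g_t)+\bar J_{t-1}^\text{opt}(\beta_t-b)$ is strictly convex on $[0,\beta_t]$, so its minimizer $b_t^\text{opt}(\beta_t,g_t)$ is unique. Second, $\bar J_{t-1}^\text{opt}$ is strictly increasing with $(\bar J_{t-1}^\text{opt})'(0)>0$ and $(\bar J_{t-1}^\text{opt})'(\beta)\to\infty$ as $\beta\to\infty$ (the cost-to-go grows exponentially in $\beta$; e.g.\ it is bounded below by the clairvoyant non-causal optimum, which is $\Theta(e^{\beta/(t-1)})$ whenever $\E[1/g]<\infty$).

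For part (a) I would fix $g_t>0$ and $0<\beta<\beta'$ and write $(b,r)$, $(b',r')$ for the optimal splits into ``served now'' and ``carried forward'' at $\beta$ and $\beta'$, so $b+r=\beta$, $b'+r'=\beta'$. Supposing $b>b'$ and putting $\delta=b-b'>0$, the pairs $(b',r+\delta)$ and $(b,r'-\delta)$ are feasible splits of $\beta$ and $\beta'$ respectively (the constraints follow from $b'<b\le\beta$ and $r'-\delta=\beta'-b\ge\beta-b=r\ge0$). Summing the two optimality inequalities and cancelling the $E_t$-terms gives $\bar J_{t-1}^\text{opt}(r)+\bar J_{t-1}^\text{opt}(r')\le\bar J_{t-1}^\text{opt}(r+\delta)+\bar J_{t-1}^\text{opt}(r'-\delta)$ with $0<\delta\le r'-r$, which convexity of $\bar J_{t-1}^\text{opt}$ reverses, forcing equality; but then $(b',r+\delta)$ is a second minimizer at $\beta$, contradicting uniqueness. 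Hence $b_t^\text{opt}$ is non-decreasing in $\beta_t$, and the mirror-image exchange — which cancels the $\bar J_{t-1}^\text{opt}$-terms and instead contradicts strict convexity of $E_t$ — gives that $\beta_t-b_t^\text{opt}$ is non-decreasing in $\beta_t$. For strictness, I would read off \eqref{eq:bt_opt_det}: ``serve nothing'' needs $(\bar J_{t-1}^\text{opt})'(\beta_t)\le 1/g_t$ and ``serve everything'' needs $e^{\beta_t}\le g_t(\bar J_{t-1}^\text{opt})'(0)$, both of which fail once $\beta_t$ exceeds some $\mathfrak B_0$ (by the second structural fact); so for $\beta_t>\mathfrak B_0$ we are in the interior branch, where $b=b_t^\text{opt}$ and $r=\beta_t-b$ solve $e^{b}/g_t=(\bar J_{t-1}^\text{opt})'(r)$. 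If $r$ coincided at $\beta_t<\beta_t'$ then $e^{\beta_t-r}=e^{\beta_t'-r}$, impossible; if $b$ coincided then $(\bar J_{t-1}^\text{opt})'$ would repeat a value on $r<r'$, impossible by strict convexity. Thus both are strictly increasing for $\beta_t>\mathfrak B_0$.

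For part (b) I would fix $\beta_t>0$, take $g<g'$, and set $b=b_t^\text{opt}(\beta_t,g)$, $b'=b_t^\text{opt}(\beta_t,g')$. If $b>b'$, the optimality of $b$ at $g$ and of $b'$ at $g'$ rearrange to $\frac{e^{b'}-e^{b}}{g'}\le\bar J_{t-1}^\text{opt}(\beta_t-b)-\bar J_{t-1}^\text{opt}(\beta_t-b')\le\frac{e^{b'}-e^{b}}{g}$, and since $e^{b'}-e^{b}<0<g<g'$ the left side strictly exceeds the right — a contradiction. Hence $b_t^\text{opt}$ is non-decreasing in $g_t$. (Equivalently this is the monotone-comparative-statics conclusion for $b\mapsto(e^{b}-1)/g_t+\bar J_{t-1}^\text{opt}(\beta_t-b)$, whose mixed difference $-e^{b}/g_t^{2}$ is negative over the $g_t$-independent feasible set; or it can be read off \eqref{eq:bt_opt_det}, where as $g_t$ grows $b_t^\text{opt}$ moves continuously from $0$ through the interior branch to $\beta_t$.)

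The weak monotonicity in (a)--(b) needs nothing beyond convexity of $\bar J_{t-1}^\text{opt}$ and strict convexity of the stage cost, so I expect the real work to sit in the two items behind the \emph{strict} part of (a): establishing $(\bar J_{t-1}^\text{opt})'(\beta)\to\infty$ — which is what forces the interior regime for large $\beta_t$, and which reduces to the exponential growth of the cost-to-go under the standing hypothesis $\E[1/g]<\infty$ — and establishing that strict convexity propagates through the infimal-convolution recursion of \eqref{eq:Jt_opt} — which is what rules out $b_t^\text{opt}$ being locally flat on the interior branch. These are the steps requiring care; everything else is the exchange bookkeeping above.
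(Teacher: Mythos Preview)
Your proof is correct and takes a genuinely different route from the paper's. The paper argues case-by-case through the three-branch description \eqref{eq:bt_opt_det}, deriving each monotonicity by contradiction directly from the first-order condition $e^{b}/g_t=(\bar J_{t-1}^\text{opt})'(\beta_t-b)$: e.g.\ if $b_t^\text{opt}$ decreased in $\beta_t$ then the right side would increase while the left side decreased. You instead run exchange (swap) arguments at the level of the minimization itself---essentially a Topkis-style monotone-comparative-statics proof---and obtain the weak monotonicity in (a) and (b) without ever touching differentiability; you invoke \eqref{eq:bt_opt_det} only for the \emph{strict} part of (a). What this buys you is robustness (your weak monotonicity needs only convexity of $\bar J_{t-1}^\text{opt}$ and strict convexity of the stage cost) and a cleaner accounting of hypotheses: you make explicit, and prove by induction through the infimal-convolution recursion, that $\bar J_{t-1}^\text{opt}$ is \emph{strictly} convex---a fact the paper's argument for (a)(ii)--(iii) tacitly uses when it says ``$(\bar J_{t-1}^\text{opt})'(\beta_t-b_t^\text{opt})$ increases'' and treats that as a contradiction. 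The paper's route is shorter once \eqref{eq:bt_opt_det} and the convexity remarks in the text are taken for granted. Both approaches identify the unboundedness $(\bar J_{t-1}^\text{opt})'(\beta)\to\infty$ as the crux behind the strict claim; the paper proves it via the lower bound $\bar J_{t-1}^\text{opt}(\beta)\ge \E\bigl[(e^{\beta/t}-1)/\max(g_{t-1},\dots,g_1)\bigr]$, which is exactly the ``clairvoyant'' exponential growth you allude to.
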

\begin{proof}
    See Appendix \ref{sec:pf_monotonicity_bt_opt}.
\end{proof}
\vspace{10pt}
Intuitively, monotonicity in the queue $\beta_t$ and the channel state $g_t$ is expected because more bits should be served when there remain more unserved bits or when the channel is strong. 

\subsection{The Boundary-relaxed Scheduler}
\label{sec:boundary_relax}

The first suboptimal scheduler is derived by relaxing the boundary constraints (we no longer require $0\le b_t\le \beta_t$), while maintaining the deadline constraint $\sum_{t=1}^T b_t = B$.
The relaxed version of the original optimization \eqref{eq:Jt_opt} is given by
\begin{equation} \label{eq:Lt_opt}
    U_t(\beta_t,g_t) = \begin{cases}
        \min\limits_{b_t} \left(\frac{e^{b_t}-1}{g_t}+\bar{U}_{t-1}(\beta_t-b_t)\right), & t\ge 2,\\
        \frac{e^{\beta_1}-1}{g_1}, & t=1,
    \end{cases}
\end{equation}
where $\bar{U}_{t-1}(\beta)=\E_{g}[U_{t-1}(\beta,g)]$ and can be calculated by induction \cite{Lee_WC09}:
\begin{equation} \label{eq:barUt}
    \bar{U}_t(\beta) = t e^{\frac{\beta}{t}} \G(\nu_t,\nu_{t-1},\cdots,\nu_1) - t\nu_1,
\end{equation}
where $\G$ denotes the geometric mean operator (i.e., $\G(x_1,\cdots,x_n)=(\prod_{k=1}^n x_k)^{1/n}$) and $\nu_1,\nu_2,\cdots,\nu_t$ are the fractional moments of the fading distribution defined as:
\begin{equation} \label{eq:nu_m}
    \nu_m = \left(\E_g\left[\left(\frac{1}{g}\right)^{\frac{1}{m}}\right]\right)^m,\quad m=1,2,\cdots.
\end{equation}

Due to the simple form of the cost-to-go function $\bar{U}_t$, by substituting \eqref{eq:barUt} into \eqref{eq:Lt_opt} and solving the minimization we obtain the following closed-form description of the optimal policy for the relaxed problem \cite{Lee_WC09}:
\begin{equation} \label{eq:bt_relax_untruncated}
    b_t(\beta_t,g_t)=\frac{1}{t}\beta_t + \frac{t-1}{t}\log\left(\frac{g_t}{\eta_t^\text{relax}}\right)
\end{equation}
where $\eta_t^\text{relax}$ serves as a channel threshold given by
\begin{equation}
    \eta_t^\text{relax} = \frac{1}{\G(\nu_{t-1},\nu_{t-2},\cdots,\nu_1)}.
\end{equation}
The policy function in \eqref{eq:bt_relax_untruncated} solves the boundary-relaxed problem but does not guarantee $0\le b_t \le \beta_t$ in each slot. 

To obtain a policy for the actual unrelaxed problem, we simply truncate at 0 and $\beta_t$, and reach what we refer to as the \emph{boundary-relaxed} scheduler\footnote{This is referred to as the suboptimal II scheduler in \cite{Lee_WC09}.}:
\begin{equation} \label{eq:bt_relax}
    b_t^\text{relax}(\beta_t,g_t)=\left\langle \frac{1}{t}\beta_t + \frac{t-1}{t}\log\frac{g_t}{\eta_t^\text{relax}} \right\rangle_0^{\beta_t}
\end{equation}
where $\langle\cdot\rangle_0^{\beta_t}$ denotes truncation below $0$ and above $\beta_t$.
Notice that this policy function is optimal for $t=2$, i.e., $b_2^\text{relax}= b_2^\text{opt}$ for all $\beta_2$ and $g_2$ since $(\bar{U}_1)'=(\bar{J}_1^\text{opt})'$.

Note that this same scheduling policy can be reached using the  high-SNR approximation $\log(1 + x) \approx \log(x)$.
More specifically, if the energy-bit relationship in \eqref{eq:Et_vs_bt} is approximated by:
\begin{equation}
    E_t(b_t, g_t) = \frac{e^{b_t}-1}{g_t} \approx \frac{e^{b_t}}{g_t}.
\end{equation}
and the optimal policy is found with the same relaxation as above, the policy in \eqref{eq:bt_relax_untruncated} also
reached.

\subsection{The One-shot Scheduler}
\label{sec:one-shot}

The second scheduler is derived by modifying the boundary constraint into a stronger constraint $b_t\in\{0, \beta_t\}$ (equivalently, $b_t\in\{0, B\}$), i.e., in each slot either the entire packet is transmitted or nothing is transmitted. Then, the dynamic program is given by
\begin{equation} \label{eq:Jt_one}
    J_t^\text{one}(\beta_t,g_t) = \begin{cases}
        \min\limits_{b_t\in\{0,\beta_t\}} \left(\frac{e^{b_t}-1}{g_t}+\bar{J}_{t-1}^\text{one}(\beta_t-b_t)\right), & t\ge 2,\\
        \frac{e^{\beta_1}-1}{g_1}, & t=1,
    \end{cases}
\end{equation}
where $\bar{J}_t^\text{one}(\beta)=\E_{g}[J_t^\text{one}(\beta,g)]$. Equivalently, we can express the above DP as an optimal stopping problem \cite{Bertsekas_DP1_Book05} (this can be shown inductively with $\beta_T=B$):
\begin{equation} \label{eq:Jt_one_stopping}
    J_t^\text{one}(B,g_t) = \begin{cases}
        \min\left\{\frac{e^{B}-1}{g_t},\;\; \bar{J}_{t-1}^\text{one}(B)\right\}, & t\ge 2,\\
        \frac{e^{B}-1}{g_1}, & t=1.
    \end{cases}
\end{equation}

The optimal solution is  a \emph{sequential} threshold policy \cite{Lee_WC09}:
\begin{equation} \label{eq:bt_one}
    b_t = \begin{cases}
        B, & \text{first}\;\; t\;\;\text{such that}\;\;g_t>1/\omega_t,\\
        0,&\text{otherwise},
    \end{cases}
\end{equation}
where $1/\omega_t$ is the channel threshold in slot $t$, and is recursively computed as:
\begin{equation} \label{eq:stopping_omega_t}
\omega_t = \begin{cases}
    \E\left[\min\left(\frac{1}{g}, \omega_{t-1}\right)\right],
 & t=T,\cdots, 3, \\
 \E\left[\frac{1}{g}\right], & t=2,\\
 \infty, & t=1.
\end{cases}
\end{equation}
Notice that the thresholds depend only on the channel statistics and are independent of $B$, and that the thresholds decrease as the deadline approaches (i.e., as $t$ decreases) \cite{Lee_WC09}.

\subsection{The Delay-constrained Ergodic Scheduler}
\label{sec:constrained_erg}

The above two suboptimal policies are developed to solve the DP, formulated in \eqref{eq:Jt_opt}, by simplifying the cost-to-go function. Unlike these two policies, we now consider a policy by modifying the ergodic scheduling policy to meet the hard deadline constraint. The ergodic policy is the optimal solution to a problem of minimizing the average energy to transmit a certain \emph{average} number of bits (i.e., no hard deadline constraint). If we denote this average rate constraint as $\bar{b}$, the ergodic scheduling policy function $b(g)$, which does not depend on $t$ and determines how many bits to transmit based only upon the channel state $g$, is determined by solving:
\begin{eqnarray}
    \bar{E}^\text{erg}(\bar{b})=&\min\limits_{b(g)}& \E_g\left[\frac{e^{b(g)}-1}{g}\right] \\
    &\text{subject to}& \E_g[b(g)]\ge \bar{b}, \quad b(g)\ge 0. \nonumber
\end{eqnarray}
This optimization is readily solvable by standard waterfilling \cite{Goldsmith_Book05} and the solution is given by
\begin{equation}\label{eq:bt_erg}
    b^\text{erg}(\bar{b},g)\;\;=\;\;\left\langle \log\left(\frac{g}{\eta^\text{erg}}\right) \right\rangle_0^\infty %
    \;\;\;=\;\;\;\begin{cases}
        \log \left(\frac{g}{\eta^\text{erg}}\right), & g\ge \eta^\text{erg},\\
        0, &\text{else},
    \end{cases}
\end{equation}
where $\eta^\text{erg}$ serves as a channel threshold and is the solution to:
\begin{equation} \label{eq:b_erg_constraint}
    \E[b^\text{erg}(\bar{b},g)]=\bar{b}.
\end{equation}

When the time-horizon $T$ is large, we intuitively expect the ergodic policy to perform well in the delay-limited setting considered here. In order to meet the deadline constraint, we utilize the ergodic policy, with $\bar{b}=\frac{B}{T}+\delta$ for some $\delta>0$,\footnote{This policy is motivated by Theorem 3 of \cite{Caire_IT04}, where a modified version of the ergodic rate-maximizing policy is shown to maximize the expected transmitted rate over a finite time-horizon when the transmitter is subject to a finite energy constraint (which is the dual of the problem considered here).} at each time step with the exception that all remaining unserved bits are transmitted in the final step:
\begin{equation} \label{eq:bt_erg_delta}
    b_t^\text{constrained-erg} \left(\frac{B}{T},g_t; \delta\right) = \begin{cases}
        b^\text{erg}\left(\frac{B}{T}+\delta,g_t\right),& t=T,T-1,\cdots, 2,\\
        \beta_1,& t=1,
    \end{cases}
\end{equation}
which is referred to as the \emph{delay-constrained ergodic scheduler}.

\section{Asymptotic Optimality}

This section investigates the optimality of the suboptimal schedulers introduced in the previous section. The optimality can be analyzed in two ways: optimality in policy and optimality in the associated energy cost.
Both forms of optimality are shown for the boundary-relaxed scheduler and the one-shot scheduler, whereas energy optimality is shown for 
the delay-constrained ergodic scheduler.

\subsection{Large $B$ and Finite $T$: Asymptotic Optimality of Boundary-relaxed Scheduler}

We first prove that the boundary-relaxed scheduler converges to the optimal policy when $T$ is fixed and the number of bits $B$ is taken to infinity. When $B$ is large, we intuitively expect that the optimal policy will allocate strictly positive bits to all $T$ time slots with high probability due to the nature of the Shannon energy-bit function. Thus, we expect the boundary-relaxed scheduler to coincide with the optimal policy when the number of bits to serve is large. The following theorem makes this relationship precise:

\vspace{10pt}
\begin{theorem} \label{thm:relax_policy_conv}
    Let the PDF $f$ of $g_t$ be continuous on $[g_{\min}, g_{\max}]$ with $\text{Support}(f)=[g_{\min}, g_{\max}]$, where $g_{\min}> 0$ and $g_{\max}<\infty$.
    For every time step $t$, the boundary-relaxed policy function in \eqref{eq:bt_relax} converges to the optimal scheduling policy function uniformly on $[g_{\min}, g_{\max}]$ as the number of unserved bits $\beta$ goes to infinity: for every given $\epsilon>0$, there exists $\mathfrak{B}_0$ such that 
    \begin{equation}
        \left|b_t^\text{relax}(\beta,g_t) - b_t^\text{opt}(\beta,g_t)\right|<\epsilon, \quad \forall g_t\in [g_{\min}, g_{\max}].
    \end{equation}
    for $\beta>\mathfrak{B}_0$.
\end{theorem}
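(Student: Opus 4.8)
The plan is to argue by induction on the time index $t=1,\ldots,T$, carrying two coupled statements at each step: (i) the desired uniform convergence $\bigl|b_t^\text{relax}(\beta,g)-b_t^\text{opt}(\beta,g)\bigr|\to0$ on $[g_{\min},g_{\max}]$ as $\beta\to\infty$, and (ii) the multiplicative convergence of cost-to-go derivatives $(\bar J_t^\text{opt})'(\beta)\big/(\bar U_t)'(\beta)\to1$ as $\beta\to\infty$. Statement (ii) is the engine that keeps the induction self-sustaining; it rests on the \emph{envelope identity} $(\bar J_t^\text{opt})'(\beta)=\E_g\!\bigl[(\bar J_{t-1}^\text{opt})'(\beta-b_t^\text{opt}(\beta,g))\bigr]=\E_g\!\bigl[e^{b_t^\text{opt}(\beta,g)}/g\bigr]$, valid wherever $b_t^\text{opt}$ is interior (combine the first-order condition underlying \eqref{eq:bt_opt_det} with differentiation of the value function under the expectation, legitimate by the differentiability of $\bar J_{t-1}^\text{opt}$ noted above and standard uniform-integrability arguments), together with its relaxed analogue $(\bar U_t)'(\beta)=\G(\nu_t,\ldots,\nu_1)\,e^{\beta/t}=\E_g\!\bigl[e^{b_t(\beta,g)}/g\bigr]$ for the untruncated allocation $b_t$ of \eqref{eq:bt_relax_untruncated} --- a direct computation from the definition \eqref{eq:nu_m} of the fractional moments. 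The base case $t=1$ is immediate: $b_1^\text{opt}=\beta=b_1^\text{relax}$ and $(\bar J_1^\text{opt})'(\beta)=(\bar U_1)'(\beta)=\nu_1e^{\beta}$.

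For the inductive step, assume (i) and (ii) at step $t-1$. I would first show that there is a threshold $\mathfrak{B}_0$ beyond which \emph{no truncation} occurs, uniformly over $g\in[g_{\min},g_{\max}]$: since $(\bar J_{t-1}^\text{opt})'(\beta)\to\infty$ (otherwise $\bar J_{t-1}^\text{opt}$ would be asymptotically linear, contradicting $\bar J_{t-1}^\text{opt}\ge\bar U_{t-1}$, which grows exponentially) while $(\bar J_{t-1}^\text{opt})'(0)$ is finite and positive, the two boundary thresholds $1/(\bar J_{t-1}^\text{opt})'(\beta)$ and $e^{\beta}/(\bar J_{t-1}^\text{opt})'(0)$ in \eqref{eq:bt_opt_det} fall below $g_{\min}$ and above $g_{\max}$, respectively, once $\beta$ is large; the analogue for \eqref{eq:bt_relax} is read off \eqref{eq:bt_relax_untruncated}. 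Hence, for $\beta>\mathfrak{B}_0$, $b_t^\text{opt}(\beta,g)$ is the unique solution of $e^{b}/g=(\bar J_{t-1}^\text{opt})'(\beta-b)$ and $b_t^\text{relax}(\beta,g)$ equals the closed form \eqref{eq:bt_relax_untruncated}. I would then show the ``leftover queue'' diverges uniformly, $\inf_{g\in[g_{\min},g_{\max}]}\bigl(\beta-b_t^\text{opt}(\beta,g)\bigr)\to\infty$: if $\beta_n-b_t^\text{opt}(\beta_n,g_n)$ were bounded by some $M$ along a sequence $\beta_n\to\infty$, then the interior equation and monotonicity of $(\bar J_{t-1}^\text{opt})'$ would force $b_t^\text{opt}(\beta_n,g_n)\le\log\!\bigl(g_{\max}(\bar J_{t-1}^\text{opt})'(M)\bigr)$, which is bounded, whereas $b_t^\text{opt}(\beta_n,g_n)\ge\beta_n-M\to\infty$, a contradiction; the same is evident for $b_t^\text{relax}$ from \eqref{eq:bt_relax_untruncated}.

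With these facts I would carry out the fixed-point comparison. Hypothesis (ii) at $t-1$ gives $(\bar J_{t-1}^\text{opt})'(\beta')=\bigl(1+\rho(\beta')\bigr)\,\G(\nu_{t-1},\ldots,\nu_1)\,e^{\beta'/(t-1)}$ with $\rho(\beta')\to0$; substituting $\beta'=\beta-b_t^\text{opt}(\beta,g)$ into the interior equation and taking logarithms converts the factor $1+\rho$ into an additive perturbation $\tfrac{t-1}{t}\log\!\bigl(1+\rho(\beta-b_t^\text{opt})\bigr)$ in the value of $b$, while the relaxed allocation \eqref{eq:bt_relax_untruncated} solves the same equation with $\rho\equiv0$. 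Therefore $\bigl|b_t^\text{opt}(\beta,g)-b_t^\text{relax}(\beta,g)\bigr|\le\tfrac{t-1}{t}\bigl|\log(1+\rho(\beta-b_t^\text{opt}))\bigr|$, a bound uniform in $g$ (because $\beta-b_t^\text{opt}$ diverges uniformly) that vanishes as $\beta\to\infty$; this is (i) at step $t$, which for $t=T$ is the theorem. To propagate (ii) to step $t$, the envelope identities give, for $\beta$ large enough that the relaxed policy is untruncated, $(\bar J_t^\text{opt})'(\beta)\big/(\bar U_t)'(\beta)=\E_g[e^{b_t^\text{opt}(\beta,g)}/g]\big/\E_g[e^{b_t^\text{relax}(\beta,g)}/g]$, and since the ratio of integrands equals $e^{b_t^\text{opt}-b_t^\text{relax}}$, which lies in $[e^{-\epsilon},e^{\epsilon}]$ uniformly by the bound just established, the ratio of integrals lies in $[e^{-\epsilon},e^{\epsilon}]$ as well; letting $\beta\to\infty$ yields (ii) at $t$ and closes the induction.

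The steps I expect to demand the most care are the envelope identity together with the interchange of differentiation and expectation defining $(\bar J_t^\text{opt})'$ --- which leans on the differentiability of the infimal convolution $\bar J_{t-1}^\text{opt}$ and on continuity and local boundedness of $b_t^\text{opt}(\beta,g)$ on the relevant compacta --- and the uniform-in-$g$ bookkeeping in the two preceding paragraphs, above all the uniform divergence of $\beta-b_t^\text{opt}(\beta,g)$ and the passage from the multiplicative error in (ii) to the additive error on the policy; the monotonicity properties of Proposition~\ref{prop:monotonicity_bt_opt} can be invoked to streamline these uniformity arguments. Everything else reduces to manipulation of the explicit expressions \eqref{eq:bt_opt_det}, \eqref{eq:barUt}, and \eqref{eq:bt_relax_untruncated}.
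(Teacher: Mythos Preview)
Your proposal is correct and follows the same inductive skeleton as the paper's proof: both argue by induction on $t$, both first establish that for large $\beta$ neither policy is truncated (so both are determined by their interior first-order conditions), both show that the leftover queue $\beta-b_t^{\text{opt}}$ diverges uniformly in $g$, and both carry a statement about the convergence of the cost-to-go derivatives $(\bar J_{t-1}^{\text{opt}})'$ and $(\bar U_{t-1})'$ as the engine of the induction.

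The substantive difference is in \emph{what form of derivative convergence} is carried and how it is used. The paper proves, as a separate Lemma, the \emph{additive} statement $(\bar U_{t-1})'(\beta)-(\bar J_{t-1}^{\text{opt}})'(\beta)\to 0$, and then deduces policy convergence by a monotonicity sandwich: from $|\phi(\xi)-\psi(\xi)|<\epsilon$ and the first-order conditions $\phi(\beta-b^{\text{relax}})=e^{b^{\text{relax}}}/g$, $\psi(\beta-b^{\text{opt}})=e^{b^{\text{opt}}}/g$ one squeezes $|e^{b^{\text{relax}}}-e^{b^{\text{opt}}}|<\epsilon g_{\max}$ and hence $|b^{\text{relax}}-b^{\text{opt}}|\to 0$. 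You instead carry the \emph{multiplicative} statement $(\bar J_{t-1}^{\text{opt}})'(\beta)/(\bar U_{t-1})'(\beta)\to 1$, which after taking logarithms in the interior equation yields the explicit bound $|b_t^{\text{opt}}-b_t^{\text{relax}}|\le\tfrac{t-1}{t}\,|\log(1+\rho(\beta-b_t^{\text{opt}}))|$ directly. Your propagation of (ii) via the envelope identities and the inequality $e^{-\epsilon}\le e^{b^{\text{opt}}-b^{\text{relax}}}\le e^{\epsilon}$ on the integrands is likewise more transparent than the paper's route, which re-expands $(\bar J_{t-1}^{\text{opt}})'$ as an integral and then substitutes the closed form of $b_{t-1}^{\text{relax}}$ to cancel against the explicit $(\bar U_{t-1})'$. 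In short: same architecture, but your ratio-based bookkeeping exploits the exponential structure of $\bar U_t$ more cleanly and avoids the somewhat delicate point in the paper's Lemma (where one must control $e^{b^{\text{relax}}}-e^{b^{\text{opt}}}$, not merely $b^{\text{relax}}-b^{\text{opt}}$, to make the integral vanish). The places you flag as needing care---the envelope identity/differentiation under the expectation, and the uniform divergence of $\beta-b_t^{\text{opt}}$---are exactly the ones the paper also handles, and your sketches for both are sound; Proposition~\ref{prop:monotonicity_bt_opt} is indeed the right tool to invoke.
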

\vspace{10pt}
\begin{proof}
    See Appendix \ref{sec:pf_relax_policy_conv}.
\end{proof}
\vspace{10pt}
Figure \ref{fig:T3_b3_vs_g3}a illustrates the behaviors of $b_3^\text{relax}(\beta,g_3)$ and $b_3^\text{opt}(\beta,g_3)$ vs.~$g_3$ for different values of $\beta$ and Fig.~\ref{fig:T3_b3_vs_g3}b illustrates the behaviors in terms of $\beta$ for different values of $g_3$, when $g$ is a truncated exponential variable with a support of $[0.001, 10^6]$ (the pdf is given in \eqref{eq:trunc_pdf}). When $g_3=0.5$, for instance, it can be seen that the difference between $b_3^\text{relax}$ and $b_3^\text{opt}$ gets smaller as $\beta$ increases in both Fig.~\ref{fig:T3_b3_vs_g3}a and Fig.~\ref{fig:T3_b3_vs_g3}b. Notice also that the value of $\beta$ making the difference between $b_3^\text{relax}$ and $b_3^\text{opt}$ small varies with the value of $g_3$. As can be seen in Fig.~\ref{fig:T3_b3_vs_g3}b, larger $\beta$ is required for larger $g_3$. Additionally, we can observe from Fig.~\ref{fig:T3_b3_vs_g3}b that the slope of the plots is 1 in small $\beta$ and the slope changes to $\frac{1}{3}$ for some larger $\beta$ depending on the value of $g_3$, which is due to the policy function in \eqref{eq:bt_relax}.
\begin{figure}
    \centering
    \subfloat[$b_3^\text{relax}(\beta,g_3)$ and $b_3^\text{opt}(\beta,g_3)$ with respect to $g_3$ ]{\includegraphics[width=0.48\textwidth]{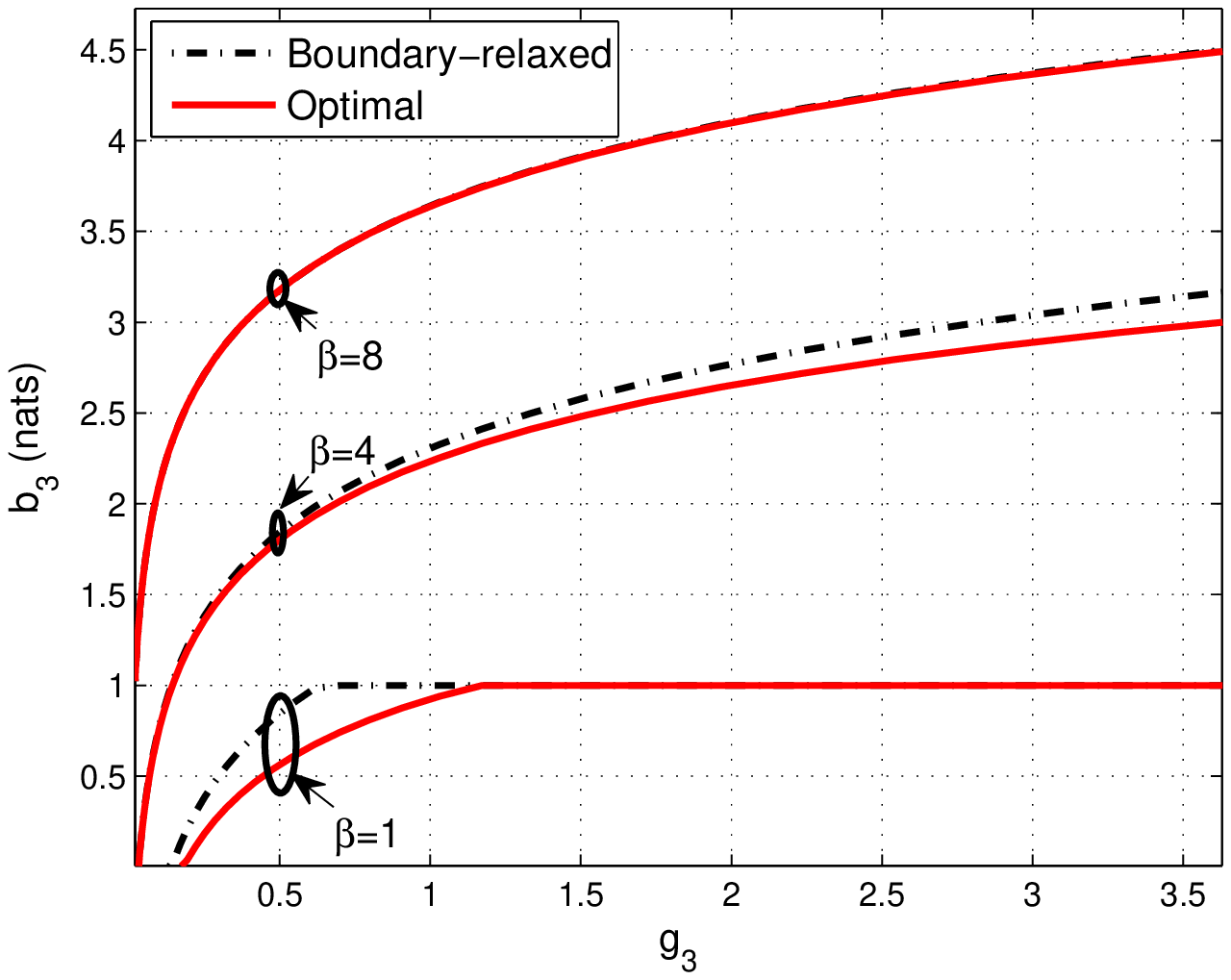}} \hfill
    \subfloat[$b_3^\text{relax}(\beta,g_3)$ and $b_3^\text{opt}(\beta,g_3)$ with respect to $\beta$]{\includegraphics[width=0.48\textwidth]{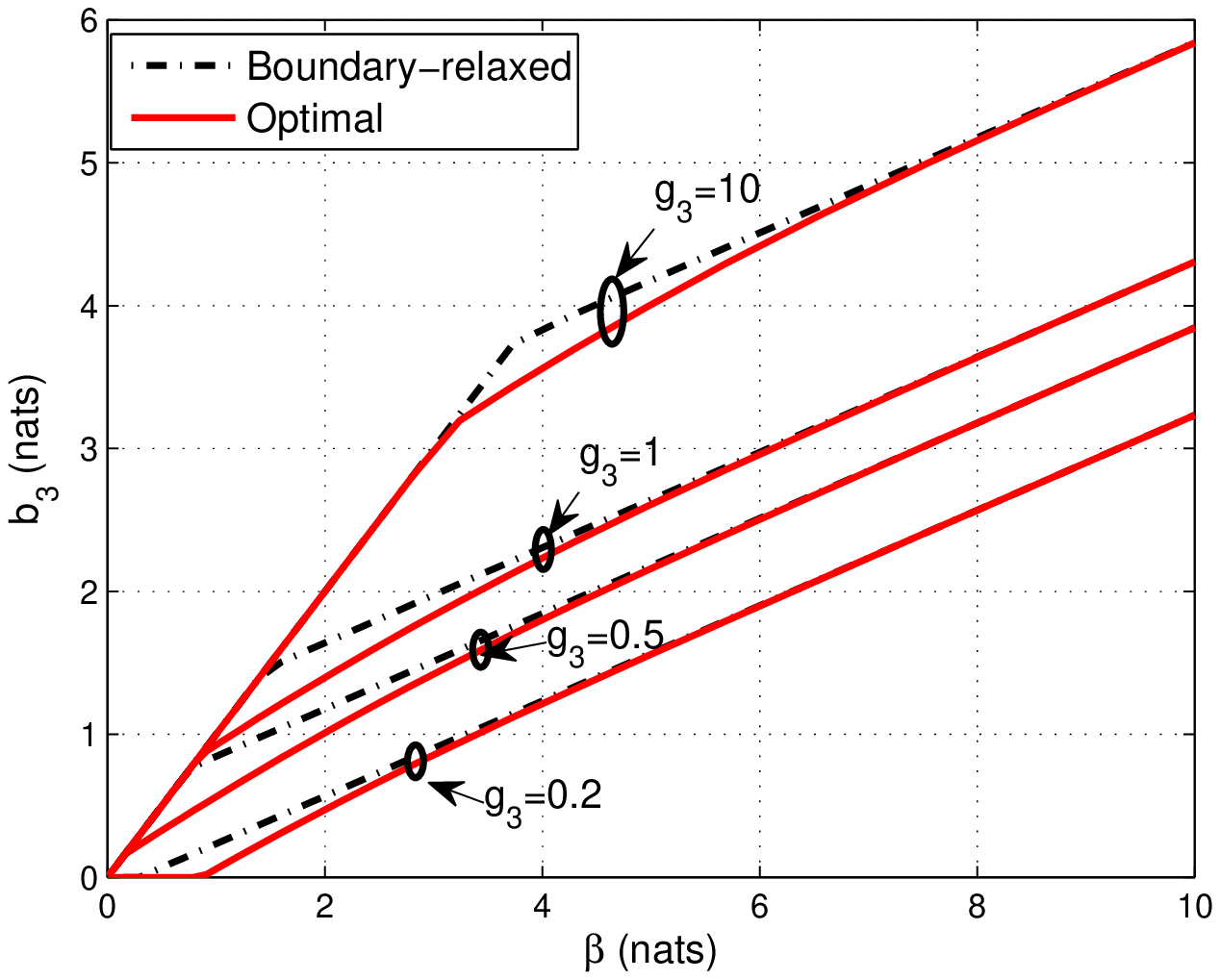}}
    \caption{The behavior of $b_3^\text{relax}$ and $b_3^\text{opt}$ when $\{g_t\}$ are truncated exponential variables with  support $[0.001, 10^6]$}
    \label{fig:T3_b3_vs_g3}
\end{figure}

We now compare the incurred energy costs of the two polices. We first define the incurred energy with the boundary-relaxed scheduler as:
\begin{equation} \label{eq:Jt_relax}
    J_t^\text{relax}(\beta_t,g_t) =
    \begin{cases}
        \frac{e^{b_t^\text{relax}}-1}{g_t} + \bar{J}_{t-1}^\text{relax}(\beta_t-b_t^\text{relax}),& t=T, T-1, \cdots, 2,\\
        \frac{e^{\beta_1}-1}{g_1}, & t=1,
    \end{cases}
\end{equation}
where $\bar{J}_{t-1}^\text{relax}(\beta)=\E_g[J_{t-1}^\text{relax}(\beta,g)]$. Notice that \eqref{eq:Jt_relax} is not an optimization but is instead a calculation based upon the definition of $b_t^\text{relax}$ in \eqref{eq:bt_relax}. 
Also notice that $\bar{J}_t^\text{relax}$ denotes the cost for the actual un-relaxed problem (the energy cost with a policy satisfying $0\le b_t\le \beta_t$ for all $t$), while the function $\bar{U}_t$ defined in Section \ref{sec:boundary_relax}  denotes the cost for the relaxed problem (the energy cost with a policy that may not satisfy $0\le b_t\le \beta_t$). 

\vspace{10pt}
\begin{theorem} \label{thm:relax_cost_conv}
    Let the PDF $f$ of $g_t$ be continuous on $[g_{\min}, g_{\max}]$ with $\text{Support}(f)=[g_{\min}, g_{\max}]$, where $g_{\min}> 0$ and $g_{\max}<\infty$.
    For any number of time slots $T$, the energy cost of the boundary-relaxed scheduler converges to the optimal energy cost as the number of bits $B$ goes to infinity:
    \begin{equation} \label{eq:lim_diff_Jt}
        \lim_{B\to\infty} \left[\bar{J}_T^\text{relax}(B)-\bar{J}_T^\text{opt}(B)\right]=0.
    \end{equation}
\end{theorem}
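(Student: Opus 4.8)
The plan is to squeeze the optimal cost between two quantities: the cost $\bar{U}_T(B)$ of the \emph{un}truncated boundary-relaxed rule of \eqref{eq:bt_relax_untruncated}, which has the closed form \eqref{eq:barUt}, and the cost $\bar{J}_T^\text{relax}(B)$ of its truncated version \eqref{eq:bt_relax}. First I would record the pointwise inequalities
\[
  \bar{U}_t(\beta)\;\le\;\bar{J}_t^\text{opt}(\beta)\;\le\;\bar{J}_t^\text{relax}(\beta),\qquad t=1,\dots,T,
\]
each obtained by a short backward induction on $t$ from the recursions \eqref{eq:Jt_opt}, \eqref{eq:Lt_opt}, \eqref{eq:Jt_relax}: the left inequality holds because $\bar{U}_t$ is the value of a relaxation of the program defining $\bar{J}_t^\text{opt}$ (the constraints $0\le b_t\le\beta_t$ are dropped), and the right inequality holds because $b_t^\text{relax}$ is a \emph{feasible} policy for the unrelaxed problem (it satisfies $0\le b_t^\text{relax}\le\beta_t$ and exhausts the queue at $t=1$), so its incurred cost cannot beat the optimum. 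Evaluating at $\beta=B$ reduces \eqref{eq:lim_diff_Jt} to showing $\bar{J}_T^\text{relax}(B)-\bar{U}_T(B)\to 0$.

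The key step is that the compact-support hypothesis forces the truncations in \eqref{eq:bt_relax} to be \emph{inactive} once $B$ is large, so that in fact $\bar{J}_T^\text{relax}(B)=\bar{U}_T(B)$ for all sufficiently large $B$. Since $g_t\in[g_{\min},g_{\max}]$ with $0<g_{\min}\le g_{\max}<\infty$, the term $\log(g_t/\eta_t^\text{relax})$ lies in a fixed bounded interval, hence the affine allocation $\tfrac{1}{t}\beta_t+\tfrac{t-1}{t}\log(g_t/\eta_t^\text{relax})$ of \eqref{eq:bt_relax_untruncated} satisfies $0<b_t<\beta_t$ for every $g_t$ in the support whenever $\beta_t$ exceeds a threshold $\tau_t$ depending only on $t,g_{\min},g_{\max}$. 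I would then run a forward induction over the (finitely many) stages: if $\beta_T=B>\tau_T$ no truncation occurs at stage $T$ and the queue passes to $\beta_{T-1}=\tfrac{T-1}{T}\bigl(B-\log(g_T/\eta_T^\text{relax})\bigr)$; unrolling this gives, along every channel trajectory in the support, the deterministic bound $\beta_t\ge\tfrac{t}{T}B-C'$ with $C'=C'(T,g_{\min},g_{\max})$. Hence there is a single $B_0$ such that for $B>B_0$ every queue state along every trajectory exceeds its threshold $\tau_t$, no truncation ever fires, $b_t^\text{relax}$ coincides with the untruncated rule at every stage, and $J_T^\text{relax}(B,g_T)=U_T(B,g_T)$ identically; taking expectations gives $\bar{J}_T^\text{relax}(B)=\bar{U}_T(B)$.

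Combining the two parts, for $B>B_0$ one has $\bar{U}_T(B)\le\bar{J}_T^\text{opt}(B)\le\bar{J}_T^\text{relax}(B)=\bar{U}_T(B)$, which forces $\bar{J}_T^\text{relax}(B)=\bar{J}_T^\text{opt}(B)$; in particular the difference in \eqref{eq:lim_diff_Jt} is identically zero for large $B$, a slightly stronger statement than the claimed limit. The part I expect to require the most care is the trajectory-uniform queue bound $\beta_t\ge\tfrac{t}{T}B-C'$ of the second paragraph: it is exactly what lets one rule out truncation \emph{simultaneously for all} channel realizations, and it is here that the boundedness of the fading support together with $g_{\min}>0$ is used in an essential way (which is also why the theorem is stated under that hypothesis). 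For fading with unbounded support the equality $\bar{J}_T^\text{relax}=\bar{U}_T$ cannot hold exactly, and one would instead have to show that the atypical realizations on which some truncation is active contribute negligibly to the expected cost — this would call for the uniform policy convergence of Theorem \ref{thm:relax_policy_conv} together with integrability/tail control of $1/g$ and of the cost-to-go, and is the genuinely delicate regime.
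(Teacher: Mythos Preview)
Your argument is correct, and it is a genuinely different and more elementary route than the paper's. The paper proves the result by establishing the two limits $\bar{J}_T^{\text{relax}}-\bar U_T\to 0$ and $\bar U_T-\bar J_T^{\text{opt}}\to 0$ separately, each by a backward induction on $T$; the second of these invokes the uniform policy convergence of Theorem~\ref{thm:relax_policy_conv} (via Lemma~\ref{lem:bt_conv_Ud_Jd}) to replace $b_t^{\text{opt}}$ by $b_t^{\text{relax}}$ inside the cost-to-go integrals. You bypass all of that by (a) the sandwich $\bar U_T\le\bar J_T^{\text{opt}}\le\bar J_T^{\text{relax}}$, which follows in one line from ``relaxation lower-bounds'' and ``feasibility upper-bounds,'' and (b) a direct trajectory argument showing that under the compact-support hypothesis the truncation in \eqref{eq:bt_relax} is inactive at every stage once $B$ is large, so $\bar J_T^{\text{relax}}(B)=\bar U_T(B)$ exactly. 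Your forward unrolling $\beta_t\ge\tfrac{t}{T}B-C'$ is valid (the telescoping of $\beta_{t-1}=\tfrac{t-1}{t}(\beta_t-\log(g_t/\eta_t^{\text{relax}}))$ gives $\beta_t=\tfrac{t}{T}B-\sum_{s>t}\tfrac{t}{s}\log(g_s/\eta_s^{\text{relax}})$, with the sum uniformly bounded since $g_s\in[g_{\min},g_{\max}]$). The payoff is twofold: your proof makes Theorem~\ref{thm:relax_cost_conv} logically independent of Theorem~\ref{thm:relax_policy_conv}, and it yields the sharper conclusion $\bar J_T^{\text{relax}}(B)=\bar J_T^{\text{opt}}(B)=\bar U_T(B)$ for all $B$ past a threshold, which in particular gives the high-SNR offset $\mathcal{L}_{\infty,T}^{\text{opt}}$ of Section~\ref{sec:largeB} directly. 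The paper's longer route, on the other hand, is structured in a way that would more naturally accommodate unbounded fading supports (where exact equality fails and one must control the truncation events in expectation), as you correctly anticipate in your final remark.
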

\vspace{10pt}
\begin{proof}
    See Appendix \ref{sec:pf_relax_cost_conv}.
\end{proof}
\vspace{10pt}
While proving Theorem \ref{thm:relax_cost_conv}, we obtain the asymptotic relations between the actual cost of the boundary-relaxed scheduler, the cost of the relaxed version, and the cost of the optimal one, i.e., $\lim_{B\to\infty} \left[\bar{J}_T^\text{relax}(B)-\bar{U}_T(B)\right]=0$ and $\lim_{B\to\infty} \left[\bar{U}_T(B)-\bar{J}_T^\text{opt}(B)\right]=0$. Since we have a closed-form expression of $\bar{U}_T(B)$ shown in \eqref{eq:barUt}, these relations help us understand the behavior of the optimal cost for large $B$, which will be discussed in Section \ref{sec:largeB}.

Although the analytic form of the optimal scheduler is not available, the above two theorems tell us that the boundary-relaxed scheduler, which has a very simple form that can be easily implemented, is asymptotically optimal when the number of bits to transmit $(B)$ is sufficiently large.
Furthermore, the scheduling function \eqref{eq:bt_relax} provides intuition on the interplay between the channel quality and the deadline.
When the deadline is far away (large $t$), the bit allocation is almost completely determined by the channel quality; on the other hand, as the deadline approaches (small $t$), the policy becomes less opportunistic.

\subsection{Small $B$ and Finite $T$: Asymptotic Optimality of One-shot Scheduler}

We now show that the one-shot scheduling policy is asymptotically optimal when $T$ is fixed and $B$ is taken to zero.  We first show convergence in terms of the policy function, and then in terms of the energy cost.

\vspace{10pt}
\begin{theorem} \label{thm:bt_one_conv_policy}
    For arbitrary time step $t$, the one-shot policy function in \eqref{eq:bt_one} converges to the optimal scheduling policy function as the number of unserved bits $\beta$ tends to zero, i.e., the optimal policy becomes a threshold policy and the threshold coincides with the threshold of the one-shot policy:
    \begin{equation}
        \lim_{\beta\to 0} \sup \{g: b_t^\text{opt}(\beta,g) =0\} = \lim_{\beta\to 0} \inf \{g : b_t^\text{opt}(\beta,g)=\beta \} =\frac{1}{\omega_t},
    \end{equation}
    where $1/\omega_t$ is the threshold of the one-shot policy as in \eqref{eq:bt_one} and \eqref{eq:stopping_omega_t}.
\end{theorem}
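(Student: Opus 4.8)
The plan is to exploit the fact that the three--region description of the optimal policy in \eqref{eq:bt_opt_det} already makes the two quantities in the statement explicit: $b_t^\text{opt}(\beta,\cdot)$ equals $0$ precisely on $\{g\le 1/(\bar{J}_{t-1}^\text{opt})'(\beta)\}$ and equals $\beta$ precisely on $\{g\ge e^{\beta}/(\bar{J}_{t-1}^\text{opt})'(0)\}$, so
\[
 \sup\{g:b_t^\text{opt}(\beta,g)=0\}=\frac{1}{(\bar{J}_{t-1}^\text{opt})'(\beta)},\qquad
 \inf\{g:b_t^\text{opt}(\beta,g)=\beta\}=\frac{e^{\beta}}{(\bar{J}_{t-1}^\text{opt})'(0)}.
\]
Thus the theorem reduces to two facts: (i) both right--hand sides tend to $1/(\bar{J}_{t-1}^\text{opt})'(0)$ as $\beta\downarrow 0$, and (ii) $(\bar{J}_{t-1}^\text{opt})'(0)=\omega_t$. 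Fact (i) is immediate: $\bar{J}_{t-1}^\text{opt}$ is convex and differentiable (as noted after \eqref{eq:bt_opt_det}), so its derivative is continuous and $(\bar{J}_{t-1}^\text{opt})'(\beta)\to(\bar{J}_{t-1}^\text{opt})'(0)$, while $e^{\beta}\to 1$.

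For Fact (ii) the plan is an induction on $t$ built on the recursion
\[
 (\bar{J}_t^\text{opt})'(0)=\E_g\!\left[\min\!\left(\tfrac{1}{g},\,(\bar{J}_{t-1}^\text{opt})'(0)\right)\right],\qquad t\ge 2,
\]
with base case $(\bar{J}_1^\text{opt})'(0)=\E[1/g]$ (finite by the standing feasibility assumption $\E[1/g]<\infty$), which follows at once from $\bar{J}_1^\text{opt}(\beta)=(e^{\beta}-1)\E[1/g]$. Since serving zero bits costs nothing, $\bar{J}_t^\text{opt}(0)=0$, hence $(\bar{J}_t^\text{opt})'(0)=\lim_{\beta\downarrow 0}\bar{J}_t^\text{opt}(\beta)/\beta$, and it suffices to sandwich this difference quotient. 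For the upper bound I would evaluate the feasible (generally suboptimal) policy that, in slot $t$, transmits all $\beta$ bits when $g_t>c:=1/(\bar{J}_{t-1}^\text{opt})'(0)$ and otherwise defers, handing the $\beta$ bits to the optimal $(t-1)$--slot continuation; its expected cost is $(e^{\beta}-1)\,\E[\mathbf{1}\{g>c\}/g]+\Pr(g\le c)\,\bar{J}_{t-1}^\text{opt}(\beta)$, and dividing by $\beta$, letting $\beta\downarrow 0$, and using that $\{g>c\}=\{1/g<(\bar{J}_{t-1}^\text{opt})'(0)\}$ recovers exactly the right--hand side of the recursion as an upper bound. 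For the matching lower bound I would insert $e^{b}-1\ge b$ and the convex tangent--line bound $\bar{J}_{t-1}^\text{opt}(\beta-b)\ge(\bar{J}_{t-1}^\text{opt})'(0)(\beta-b)$ into the minimization in \eqref{eq:Jt_opt}, which turns it into $\min_{0\le b\le\beta}\{(\bar{J}_{t-1}^\text{opt})'(0)\,\beta+b(1/g-(\bar{J}_{t-1}^\text{opt})'(0))\}=\beta\min(1/g,(\bar{J}_{t-1}^\text{opt})'(0))$; taking $\E_g$ and dividing by $\beta$ gives the lower bound. With the recursion and base case in hand, comparison with \eqref{eq:stopping_omega_t} gives, by an immediate induction, $(\bar{J}_t^\text{opt})'(0)=\omega_{t+1}$ for all $t\ge 1$, i.e. $(\bar{J}_{t-1}^\text{opt})'(0)=\omega_t$, which is Fact (ii).

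I expect the main obstacle to be establishing the recursion for $(\bar{J}_t^\text{opt})'(0)$ in a clean way. The direct route --- differentiating $\bar{J}_t^\text{opt}(\beta)=\E_g[J_t^\text{opt}(\beta,g)]$ under the expectation via the envelope theorem and the three regions of \eqref{eq:bt_opt_det} --- works, but requires justifying the interchange of derivative and expectation, checking that the contributions of the two moving thresholds cancel by continuity of $b_t^\text{opt}(\beta,\cdot)$, and verifying that the shrinking interior region is asymptotically negligible; the sandwich argument above sidesteps all of this and is, I expect, the shortest path. One minor point to dispose of separately is the case $t=1$, where $b_1^\text{opt}\equiv\beta$ so the relevant threshold is $0$, consistent with the convention $\omega_1=\infty$.
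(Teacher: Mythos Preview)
Your proposal is correct and the overall structure---identifying the sup and inf from the three-region description \eqref{eq:bt_opt_det}, reducing to $(\bar{J}_{t-1}^\text{opt})'(0)=\omega_t$, and proving this by induction on $t$---matches the paper exactly. The genuine difference is in how the inductive recursion $(\bar{J}_t^\text{opt})'(0)=\E[\min(1/g,(\bar{J}_{t-1}^\text{opt})'(0))]$ is obtained. The paper does not use a sandwich argument; instead it reuses the explicit integral formula for $(\bar{J}_{t-1}^\text{opt})'(\beta)$ that was derived earlier (in the proof of Lemma~\ref{lem:bt_conv_Ud_Jd}, equations \eqref{eq:Jt1d}--\eqref{eq:Jt1d_sub}) by differentiating $\bar{J}_{t-1}^\text{opt}$ across the three regions, and then simply lets $\beta\downarrow 0$ in that formula so that the interior-region integral vanishes and the two remaining pieces assemble into $\omega_t$. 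Your sandwich---a feasible one-shot policy for the upper bound and the linearization $e^b-1\ge b$, $\bar{J}_{t-1}^\text{opt}(\beta-b)\ge(\bar{J}_{t-1}^\text{opt})'(0)(\beta-b)$ for the lower bound---is a cleaner, self-contained alternative that avoids the Leibniz-rule differentiation and the handling of moving integration limits; the paper's route is shorter only because that differentiation machinery is already in place from the large-$B$ analysis.
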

\vspace{5pt}
\begin{proof}
    See Appendix \ref{sec:pf_bt_one_conv_policy}.
\end{proof}

Furthermore, we claim that the costs of the two policies also converge to one another. Since the average costs for the two policies converge to zero as $B\to 0$, cost convergence is investigated by studying the ratio, rather than the absolute difference, between the two costs:
\vspace{5pt}
\begin{theorem} \label{thm:oneshot_cost_conv}
    For arbitrary delay deadline $T$, the energy cost of the one-shot scheduler converges to the optimal energy cost as the number of bits $B$ goes to zero:
    \begin{equation}
        \lim_{B\to 0} \frac{\bar{J}_T^\text{one}(B)}{\bar{J}_T^\text{opt}(B)}=1.
    \end{equation}
\end{theorem}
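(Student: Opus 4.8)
The plan is to prove the cost-ratio convergence by sandwiching $\bar{J}_T^\text{one}(B)$ and $\bar{J}_T^\text{opt}(B)$ between common upper and lower bounds whose ratio tends to $1$ as $B \to 0$. The obvious lower bound is $\bar{J}_T^\text{opt}(B) \le \bar{J}_T^\text{one}(B)$, which always holds since the one-shot policy is feasible for the unrelaxed problem. So the content is an upper bound on $\bar{J}_T^\text{one}(B)$ that, to leading order in $B$, matches a lower bound on $\bar{J}_T^\text{opt}(B)$. The natural candidate for both asymptotics is the ``leading-order cost'' $c_T \cdot B$ for an appropriate constant $c_T$ determined by the thresholds $\omega_t$; more precisely, since $e^B - 1 = B + o(B)$, the one-shot stopping recursion \eqref{eq:Jt_one_stopping} with $J^\text{one}_t(B,g) = \min\{(e^B-1)/g,\ \bar{J}^\text{one}_{t-1}(B)\}$ should, after dividing through by $(e^B-1)$, converge to the $B$-independent recursion $\omega_t = \E[\min(1/g,\ \omega_{t-1})]$ of \eqref{eq:stopping_omega_t}. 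Thus I expect $\bar{J}_T^\text{one}(B) = (e^B - 1)\,\omega_T + o(B)$, and in fact I would try to show $\bar{J}_T^\text{one}(B)/(e^B-1) \to \omega_T$ exactly, by induction on $t$ using \eqref{eq:Jt_one_stopping}: if $\bar{J}^\text{one}_{t-1}(B)/(e^B-1)\to\omega_{t-1}$ and this ratio is monotone/controlled in $B$, then $\E_g[\min\{1/g,\ \bar{J}^\text{one}_{t-1}(B)/(e^B-1)\}] \to \E_g[\min\{1/g,\ \omega_{t-1}\}] = \omega_t$ by dominated convergence (dominated by $\E[1/g] = \omega_2 < \infty$, which is exactly the feasibility hypothesis).

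The harder half is the matching lower bound $\bar{J}_T^\text{opt}(B) \ge (e^B-1)\,\omega_T - o(B)$, i.e.\ showing the optimal cost cannot beat the one-shot cost by more than a vanishing fraction. One clean route is to appeal to Theorem~\ref{thm:bt_one_conv_policy}, which has already established that the optimal policy becomes a pure threshold policy at threshold $1/\omega_t$ as $\beta \to 0$: for small $B$, with high probability the optimal policy either serves $0$ bits or essentially all remaining bits, so its behavior is well-approximated by the one-shot stopping structure, and the incurred cost is correspondingly close. To make this rigorous I would set up an induction on $t$ for the normalized cost-to-go $\bar{J}^\text{opt}_t(\beta)/(e^\beta - 1)$ and argue it converges to $\omega_t$ as $\beta \to 0$; the key step is to show that the fractional-bit regime in \eqref{eq:bt_opt_det} contributes negligibly. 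Concretely, when $b^\text{opt}_t(\beta,g)$ is strictly between $0$ and $\beta$, one has $e^{b}/g = (\bar{J}^\text{opt}_{t-1})'(\beta-b)$, and as $\beta\to 0$ the derivative $(\bar{J}^\text{opt}_{t-1})'$ near $0$ is bounded (it tends to $\omega_{t-1}$, the slope at the origin), so the current-slot cost $(e^{b}-1)/g$ and the residual cost $\bar{J}^\text{opt}_{t-1}(\beta-b)$ are both $O(\beta)$ with a combined coefficient pinned down by the same $\min$ operation. Alternatively — and perhaps more robustly — I would lower-bound $\bar{J}^\text{opt}_T(B)$ by the optimal cost of a \emph{relaxed} problem in which the scheduler is additionally told all future channel realizations (a genie), or by using the linearized energy function $E_t \approx b_t/g_t$ valid for small $b_t$, turning the problem into a tractable linear program whose value is exactly $\omega_T \cdot B$; controlling the error from linearization uniformly over the (vanishing) bit allocations is then the remaining estimate.

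Putting the pieces together: from the upper bound $\bar{J}_T^\text{one}(B) = (e^B-1)\omega_T(1 + o(1))$ and the lower bound $\bar{J}_T^\text{opt}(B) \ge (e^B-1)\omega_T(1 - o(1))$, together with the trivial $\bar{J}_T^\text{opt}(B) \le \bar{J}_T^\text{one}(B)$, we get
\begin{equation}
    1 \ \le\ \frac{\bar{J}_T^\text{one}(B)}{\bar{J}_T^\text{opt}(B)} \ \le\ \frac{(e^B-1)\omega_T(1+o(1))}{(e^B-1)\omega_T(1-o(1))} \ \longrightarrow\ 1
\end{equation}
as $B \to 0$, which is the claim. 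The main obstacle I anticipate is the lower bound on $\bar{J}_T^\text{opt}$: one must rule out the possibility that spreading a few bits across several slots — exploiting the convexity of $e^{b}-1$ near $b=0$ where the marginal cost is smallest — lets the optimal policy shave off a non-vanishing fraction of the one-shot cost. This is where Theorem~\ref{thm:bt_one_conv_policy} (optimal policy $\to$ threshold policy) does the heavy lifting, so the real work is quantifying the rate at which the fractional-allocation region shrinks and bounding its energy contribution, likely via the monotonicity and differentiability facts recorded in Proposition~\ref{prop:monotonicity_bt_opt} and the discussion around \eqref{eq:bt_opt_det}, plus a uniform bound on $(\bar{J}^\text{opt}_{t-1})'$ near the origin obtained inductively.
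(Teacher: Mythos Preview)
Your approach is correct and essentially parallels the paper's, though the paper's presentation is shorter. Two minor corrections first: (i) your constant should be $\omega_{T+1}$, not $\omega_T$ --- since $\bar{J}_1^\text{one}(B)=(e^B-1)\E[1/g]=(e^B-1)\omega_2$, the recursion gives $\bar{J}_t^\text{one}(B)=(e^B-1)\omega_{t+1}$; (ii) dividing the stopping recursion \eqref{eq:Jt_one_stopping} through by $(e^B-1)$ shows that $\bar{J}_t^\text{one}(B)/(e^B-1)=\omega_{t+1}$ \emph{exactly} for all $B>0$, so no dominated convergence or $o(1)$ term is needed for the one-shot side.

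For the optimal side, your plan of inducting on the normalized cost-to-go and appealing to Theorem~\ref{thm:bt_one_conv_policy} is precisely what the paper does, but phrased via derivatives: the proof of Theorem~\ref{thm:bt_one_conv_policy} already establishes $(\bar{J}_{t-1}^\text{opt})'(0)=\omega_t$ for every $t$, hence $(\bar{J}_T^\text{opt})'(0)=\omega_{T+1}$. Since $\bar{J}_T^\text{one}(0)=\bar{J}_T^\text{opt}(0)=0$, L'H\^opital's rule then gives
\[
\lim_{B\to 0}\frac{\bar{J}_T^\text{one}(B)}{\bar{J}_T^\text{opt}(B)}
=\frac{(\bar{J}_T^\text{one})'(0)}{(\bar{J}_T^\text{opt})'(0)}
=\frac{\omega_{T+1}}{\omega_{T+1}}=1.
\]
This bypasses your explicit squeeze argument and the concern about the fractional-allocation region: the derivative computation in the proof of Theorem~\ref{thm:bt_one_conv_policy} already disposes of that region by using the first-order condition $e^{b}/g=(\bar{J}_{t-2}^\text{opt})'(\beta-b)$ to collapse the integrand. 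So the ``main obstacle'' you anticipate is in fact already handled by the result you cite; you do not need a separate linearized-LP lower bound.
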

\vspace{5pt}
\begin{proof}
    See Appendix \ref{sec:pf_oneshot_cost_conv}.
\end{proof}
In Fig.~\ref{fig:simul_barJT_one} the additional power cost of one-shot scheduling relative to optimal scheduling  (i.e., $10\log_{10}\frac{\bar{J}_T^\text{one}(B)}{\bar{J}_T^\text{opt}(B)}$) is plotted versus the number of bits $B$ for $T=2$ and $T=3$ when $g$ is a truncated exponential variable with a support of $[0.001, 10^6]$. As can be seen, the ratio converges to 1 (0 dB) as $B$ converges to 0.
\begin{figure}
    \centering
    \includegraphics[width=0.6\textwidth]{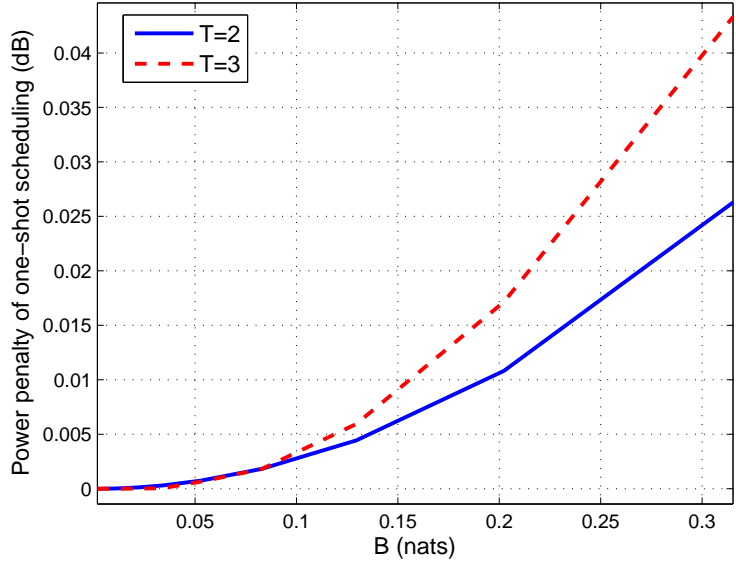}
    \caption{Additional power cost of one-shot scheduling relative to optimal scheduling as a function of $B$, when $g$ is a truncated exponential variable with support $[0.001, 10^6]$}
    \label{fig:simul_barJT_one}
\end{figure}

The optimality of one-shot scheduling can also be seen by upper and lower bounding the energy-bit function by linear functions. Using $x\le e^x-1 \le x e^B$ for $0\le x \le B$, we have:
\begin{equation}
    \frac{b_t}{g_t} \le E_t(b_t, g_t) \le \frac{b_t e^B}{g_t}.
\end{equation}

If we solve the DP using either of these bounds on the energy-bit function, the optimization in \eqref{eq:Jt_opt} becomes a linear program and thus a one-shot policy is optimal because a constrained linear program has a solution at a boundary of the constraint. Furthermore, the one-shot policy based on the upper and lower bounds converge to the one-shot policy described in Section \ref{sec:one-shot} as $B\to 0$ because the bounds themselves converge.

\subsection{Large $T$: Asymptotic Optimality of Causal Delay-constrained Ergodic Scheduler}
\label{sec-ergodic}

When $B$ and $T$ are simultaneously taken to infinity at a particular ratio (i.e., $B, T\to \infty$ with $B=\bar{b}T$ for some constant $\bar{b}>0$), we can show the energy-cost optimality of the ergodic policy in Section \ref{sec:constrained_erg}.

The average energy cost of the delay-constrained ergodic scheduler is given by
\begin{equation} \label{eq:barJT_constrained_delta}
    \bar{J}_T^\text{constrained-erg}(\bar{b}T; \delta) = \E\left[ \sum_{t=1}^T \frac{e^{b_t^\text{constrained-erg}}-1}{g_t} \right]
    = \E\left[ \sum_{t=2}^T \frac{e^{b_t^\text{erg}(\bar{b}+\delta, g_t)}-1}{g_t}\right] + \E\left[\frac{e^{\beta_1}-1}{g_1}\right],
\end{equation}
where $\beta_1$ denotes the remaining bits at the final slot and the value of $\delta$ is chosen such that
\begin{equation} \label{eq:barJT_constrained}
    \bar{J}_T^\text{constrained-erg}(\bar{b}T) = \inf_{\delta>0} \bar{J}_T^\text{constrained-erg}(\bar{b}T; \delta).
\end{equation}
\vspace{10pt}
\begin{theorem} \label{thm:J_ergdeltaopt_conv}
    For any given average rate $\bar{b}(>0)$, the per-slot energy cost of the delay-constrained ergodic policy converges to the optimal ergodic energy cost as $T$ tends to infinity:
    \begin{equation} \label{eq:perslot_JT_ergdelta_to_E_erg}
        \lim_{T\to\infty} \frac{1}{T}\bar{J}_T^\text{constrained-erg}(\bar{b}T) = \lim_{T\to\infty} \frac{1}{T} \bar{J}_T^\text{opt}(\bar{b}T)=\bar{E}^\text{erg}(\bar{b}).
    \end{equation}
\end{theorem}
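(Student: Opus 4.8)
The plan is to sandwich the per-slot optimal cost between $\bar{E}^\text{erg}(\bar{b})$ from below (for every $T$) and $\bar{E}^\text{erg}(\bar{b})$ from above (as $T\to\infty$), and then to transfer the conclusion to $\bar{J}_T^\text{constrained-erg}$ via $\bar{J}_T^\text{opt}(\bar{b}T)\le\bar{J}_T^\text{constrained-erg}(\bar{b}T)$, which holds because \eqref{eq:Jt_opt} is the minimum cost over all feasible policies and the delay-constrained ergodic scheduler is one such policy. For the lower bound I would exhibit the ergodic program as a relaxation: fix any feasible policy and let $\mu_t(g)=\E[\,b_t\mid g_t=g\,]$ be its expected slot-$t$ allocation given the current channel (legitimate since $g_t$ is independent of the past $g_{t+1},\dots,g_T$ on which $b_t$ may depend). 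Convexity of $b\mapsto(e^{b}-1)/g$ and Jensen's inequality give $\E[(e^{b_t}-1)/g_t]\ge\E_g[(e^{\mu_t(g)}-1)/g]$; averaging over $t$ and applying Jensen once more with $\bar{\mu}(g)=\tfrac1T\sum_{t=1}^T\mu_t(g)$ yields $\tfrac1T\bar{J}_T^\text{opt}(\bar bT)\ge\E_g[(e^{\bar\mu(g)}-1)/g]$. Since $\E_g[\bar\mu(g)]=B/T=\bar b$ (the optimal policy serves exactly $B$ bits) and $\bar\mu\ge0$, the map $\bar\mu$ is feasible for the ergodic program, so the right-hand side is at least $\bar{E}^\text{erg}(\bar b)$.

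For the upper bound I would run the delay-constrained ergodic scheduler with a fixed excess rate $\delta>0$, set $S=\sum_{t=2}^{T}b^\text{erg}(\bar b+\delta,g_t)$, and split on whether the packet is finished before the last slot, i.e.\ on $\{S\ge\bar bT\}$. Because the summands are i.i.d., bounded on $[g_{\min},g_{\max}]$, and have mean $\bar b+\delta>\bar b=B/T$, a Hoeffding/Cram\'er bound gives $\Pr[S<\bar bT]\le e^{-c(\delta)T}$ with $c(\delta)>0$. On the ``finished'' event the realized cost is at most $\sum_{t=2}^{T}(e^{b^\text{erg}(\bar b+\delta,g_t)}-1)/g_t$, with expectation $(T-1)\bar{E}^\text{erg}(\bar b+\delta)$, while the last slot contributes $0$; dividing by $T$, using continuity of $\bar{E}^\text{erg}(\cdot)$ (immediate from the waterfilling description \eqref{eq:bt_erg}--\eqref{eq:b_erg_constraint}), and sending $T\to\infty$ and then $\delta\downarrow0$, this part tends to $\bar{E}^\text{erg}(\bar b)$. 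Combined with the lower bound and $\bar{J}_T^\text{opt}\le\bar{J}_T^\text{constrained-erg}$, this gives \eqref{eq:perslot_JT_ergdelta_to_E_erg}. (Existence of $\lim_T\tfrac1T\bar{J}_T^\text{opt}(\bar bT)$ alone also follows from Fekete's lemma, since $T\mapsto\bar{J}_T^\text{opt}(\bar bT)$ is subadditive: a $(T_1{+}T_2)$-slot problem may be split rigidly into disjoint $T_1$- and $T_2$-slot subproblems.)

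The main obstacle is the complementary event $\{S<\bar bT\}$: there the residual $\beta_1=\bar bT-S$ can be of order $T$, so the forced last-slot energy $(e^{\beta_1}-1)/g_1$ is \emph{exponentially} large in $T$, and one must show the exponential smallness of $\Pr[S<\bar bT]$ strictly dominates it. Concretely one writes $\E[(e^{\beta_1}-1)/g_1]=\E[1/g_1]\,\E[(e^{\bar bT-S}-1)^{+}]$ and bounds $\E[e^{\bar bT-S}\mathbf 1\{S<\bar bT\}]$ by a tilted moment $e^{\theta\bar bT}\big(\E[e^{-\theta b^\text{erg}(\bar b+\delta,g)}]\big)^{T-1}$ for a suitable $\theta\ge1$; making this $o(T)$ requires the excess rate $\delta$ (equivalently, the overshoot margin $(T-1)\delta$ relative to the $O(\sqrt T)$ fluctuations of $S$) to be taken large enough that the tilted geometric mean drops below $e^{-\bar b}$, after which one must still engineer $\delta=\delta_T\downarrow0$ carefully enough that no blow-up is revived and $\bar{E}^\text{erg}(\bar b+\delta_T)\to\bar{E}^\text{erg}(\bar b)$. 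This large-deviations bookkeeping — reconciling an exponentially small probability against an exponentially large cost uniformly as $\delta\downarrow0$ — is where essentially all the difficulty lies; the remainder is convexity and continuity.
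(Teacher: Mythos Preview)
Your sandwich is exactly the paper's: lower-bound $\tfrac{1}{T}\bar J_T^{\text{opt}}$ by the ergodic cost (the paper simply states this ``by definition''; your Jensen/relaxation argument makes it explicit), and upper-bound it by running the constrained-ergodic policy with margin $\delta$ and invoking continuity of $\bar E^{\text{erg}}$. The divergence is in the treatment of the last-slot term $\tfrac{1}{T}\E[(e^{\beta_1}-1)/g_1]$. The paper dispatches it via the law of large numbers, asserting that $\beta_1=\bar bT-\sum_{t\ge2}b_t^{\text{erg}}\le\bar b+\delta$ a.s.\ and hence $\E[e^{\beta_1}]\le e^{\bar b+\delta}$; as you correctly sense, this step is not rigorous as written---the a.s.\ inequality fails for every finite $T$ (it holds only eventually along a.e.\ sample path), and eventual a.s.\ control does not by itself bound the expectation of an exponential. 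Your large-deviations route is the natural repair, and your diagnosis that the whole difficulty lives in balancing the exponentially small overshoot probability against the exponentially large residual cost is accurate. So: same skeleton as the paper, with your proposal locating precisely the step that the paper's LLN argument treats heuristically, though neither version fully completes it.
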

\vspace{10pt}
\begin{proof}
    See Appendix \ref{sec:pf_J_ergdeltaopt_conv}.
\end{proof}
\vspace{10pt}
The effect of the hard-deadline becomes inconsequential for large $T$ because the channel realizations over the deadline horizon closely match the fading distribution. As a result, the delay-constrained ergodic scheduler performs similar to the ergodic scheduler when $T$ is large. Moreover, the delay-constrained ergodic scheduler becomes causal optimal since any causal policy cannot be better than the ergodic policy.

\subsection{Numerical Results: Policy Comparison}

In order to compare the different asymptotically optimal policies, we compare their respective energy costs for different time-horizons $(T)$. Since the analytical expression for the optimal policy is not available for $T>2$, we solve the dynamic programming \eqref{eq:Jt_opt} numerically by the discretization method \cite{Bertsekas_AC75}. In Fig.~\ref{fig:simul_JT_one_delta_relaxed_Blin} the per-slot energy consumption of
\begin{figure}
    \centering
    \subfloat[$T=5$]{\includegraphics[width=0.48\textwidth]{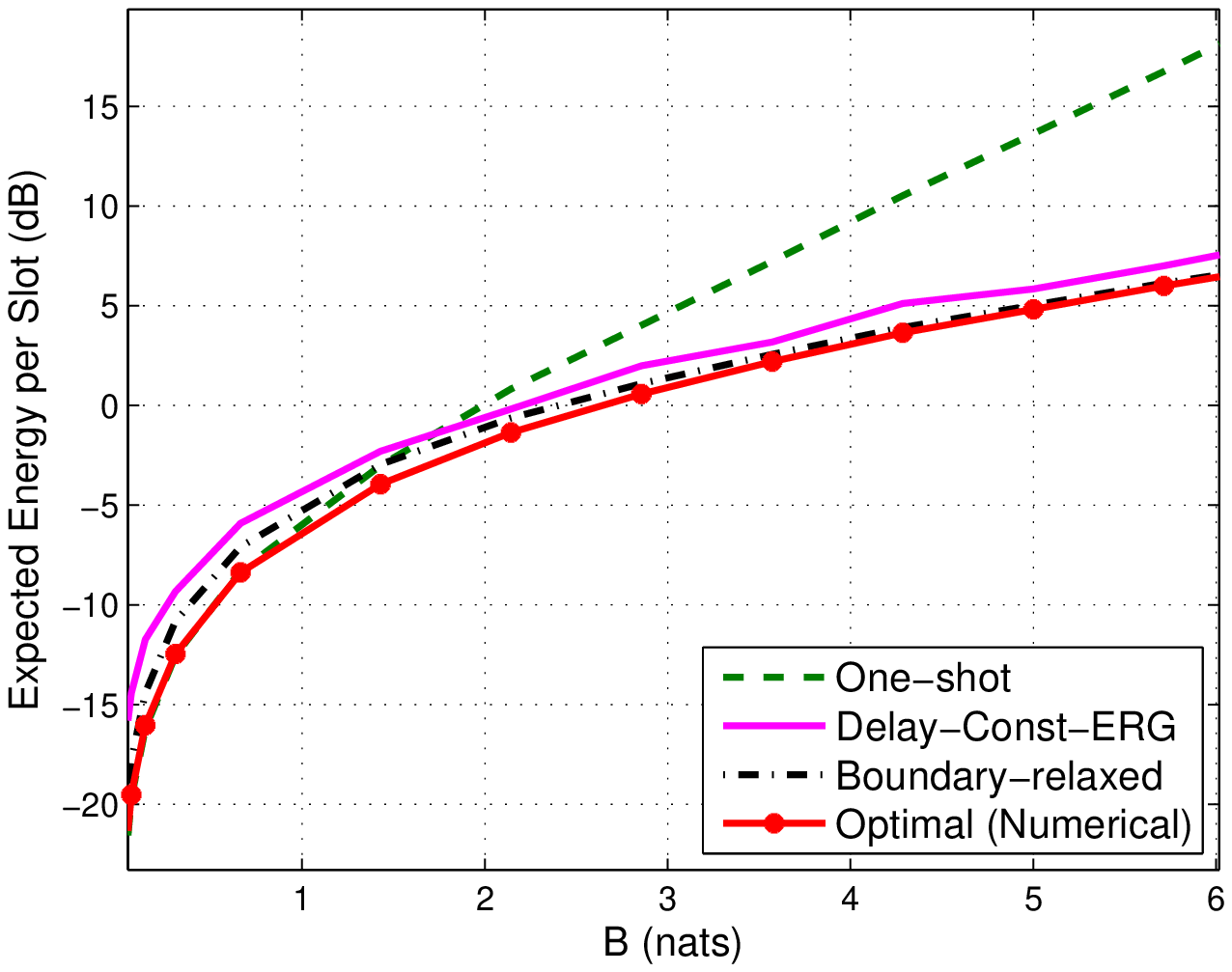}} \hfill
    \subfloat[$T=50$]{\includegraphics[width=0.48\textwidth]{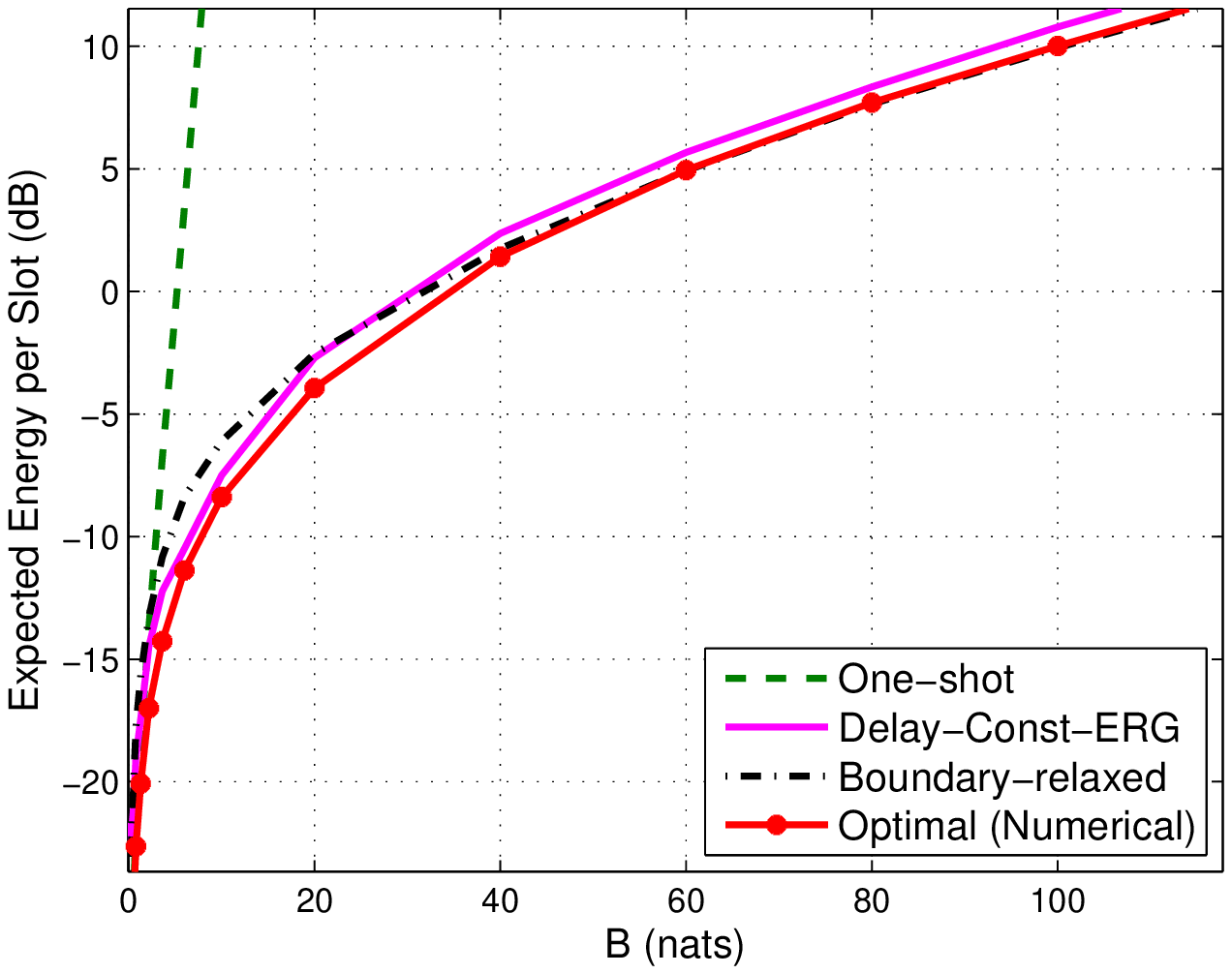}}
    \caption{Per slot energy cost for $T=5$ and $T=50$}
    \label{fig:simul_JT_one_delta_relaxed_Blin}
\end{figure}
the suboptimal schedulers is plotted for $T=5$ and $T=50$ assuming that the fading $\{g_t\}_{t=1}^T$ are i.i.d.~truncated exponential with a support of $[0.001, 10^6]$, i.e.,
\begin{equation} \label{eq:trunc_pdf}
    f(g) = \begin{cases}
        ce^{-(g-0.001)},& \text{if}\;\; 0.001\le g \le 10^6,\\
        0,& \text{otherwise},
    \end{cases}
\end{equation}
where $c$ is a normalization factor.
As can be seen, the one-shot scheduler is near-optimal only when $B$ is small. The other schedulers performs close to the optimal through all ranges of $B$. When $T=5$, as in Fig.~\ref{fig:simul_JT_one_delta_relaxed_Blin}a, the delay-constrained ergodic scheduler performs worse than the boundary-relaxed for all $B$. This is because $T=5$ is too small for the delay-constrained ergodic scheduler to perform like the optimal. When $T=50$, as in Fig.~\ref{fig:simul_JT_one_delta_relaxed_Blin}b, there exists a range of $B$ such that the delay-constrained ergodic scheduler outperforms the boundary-relaxed scheduler.  This phenomenon can be clearly illustrated in Fig.~\ref{fig:simul_J50_one_delta_relaxed_Blog}, where the number of bits are given in logarithmic scale.
\begin{figure}
    \centering
    \includegraphics[width=0.6\textwidth]{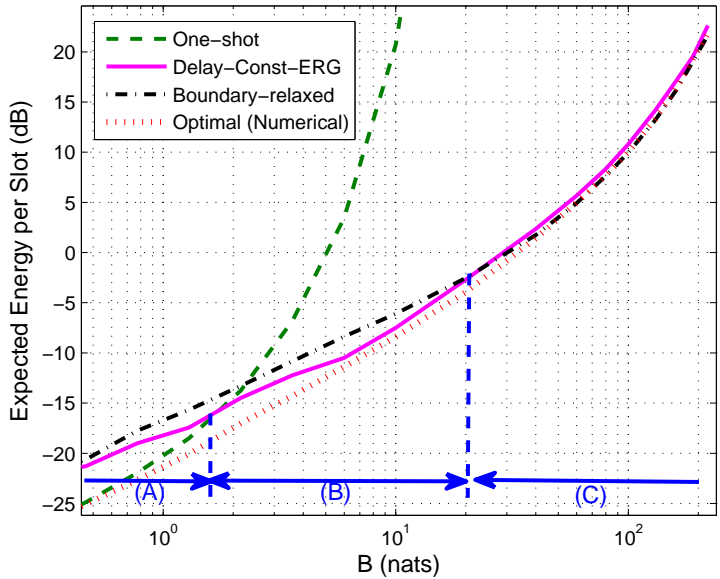}
    \caption{Average energy cost per slot for $T=50$ when $g$ is a truncated exponential variable with support $[0.001, 10^6]$}
    \label{fig:simul_J50_one_delta_relaxed_Blog}
\end{figure}
As can be seen in Fig.~\ref{fig:simul_J50_one_delta_relaxed_Blog}, the one-shot scheduler performs best for small $B$ (region $\mathsf{A}$) and the boundary-relaxed scheduler outperforms when $B$ is very large (region $\mathsf{C}$). In the middle range (region $\mathsf{B}$), the delay-constrained ergodic scheduler performs better than the other two.

\vspace{20pt}
\section{Scheduling Gain}

We have shown that the boundary-relaxed and the one-shot schedulers are asymptotically optimal as $B\to\infty$ and $B\to 0$, respectively. Another interesting issue is quantifying the advantage these schedulers provide compared to a non-opportunistic equal-bit scheduler that simply transmits $B/T$ bits during each time slot.

To compare energy performance, we first calculate the expected energy cost of the equal-bit scheduler, which is
\begin{equation} \label{eq:JT_eq}
        \bar{J}_T^\text{equal}(B) = \E\left[\sum_{t=1}^T \frac{e^{\frac{B}{T}}-1}{g_t}\right] = T\left(e^{\frac{B}{T}} \nu_1-\nu_1\right),
\end{equation}
since the equal-bit scheduler chooses $b_t=B/T$ for all $t$. Notice that the equal-bit scheduler achieves the delay-limited capacity \cite{Caire_IT99} \cite{Hanly_IT98} (i.e., zero-outage capacity) with rate $B/T$.

We define the scheduling gain as the ratio between the expected energy expenditures:
\begin{equation} \label{eq:scheduling_gain_opt}
    \Delta_T^\text{opt}(B) \triangleq \frac{\bar{J}_T^\text{equal}(B)}{\bar{J}_T^\text{opt}(B)}
\end{equation}
and quantify its behavior in the following theorem:
\vspace{10pt}
\begin{theorem} \label{thm:scheduling_gain_bounds}
    For any $T$, the scheduling gain $\Delta_T^{\rm opt}(B)$ is monotonically decreasing with respect to $B$. Furthermore, the limiting scheduling gains are given by:
    \begin{equation}
        \lim_{B\to 0} \Delta_T^{\rm opt}(B) = \lim_{B\to 0} \frac{\bar{J}_T^{\rm equal}(B)}{\bar{J}_T^{\rm one}(B)} = \frac{\nu_1}{\omega_{T+1}}, \label{eq:Delta_smallB}
    \end{equation}
    and if the PDF of the fading distribution is compactly supported and continuous,
    \begin{equation}
        \lim_{B\to\infty}\Delta_T^{\rm opt}(B) = \lim_{B\to\infty} \frac{\bar{J}_T^{\rm equal}(B)}{\bar{J}_T^{\rm relax}(B)} = \frac{\nu_1}{\G(\nu_T,\cdots, \nu_1)}. \label{eq:Delta_largeB}
    \end{equation}
\end{theorem}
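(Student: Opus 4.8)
The theorem bundles three claims---that $\Delta_T^\text{opt}(B)$ is monotonically decreasing in $B$, that $\lim_{B\to0}\Delta_T^\text{opt}(B)=\nu_1/\omega_{T+1}$, and (under the compact-support hypothesis) that $\lim_{B\to\infty}\Delta_T^\text{opt}(B)=\nu_1/\G(\nu_T,\cdots,\nu_1)$---of which the two limits follow fairly mechanically from the asymptotic-optimality theorems already established, so I would concentrate on the monotonicity. Since $\bar J_T^\text{equal}(B)=T\nu_1(e^{B/T}-1)$ by \eqref{eq:JT_eq}, we have $\Delta_T^\text{opt}(B)=T\nu_1(e^{B/T}-1)/\bar J_T^\text{opt}(B)$, so monotone decrease of $\Delta_T^\text{opt}$ is equivalent to monotone increase of $\bar J_T^\text{opt}(B)/(e^{B/T}-1)$, which after one differentiation reduces to the differential inequality
\begin{equation*}
    \bar J_T^\text{opt}(\beta)\;\le\;T\bigl(1-e^{-\beta/T}\bigr)\bigl(\bar J_T^\text{opt}\bigr)'(\beta),\qquad\beta>0,
\end{equation*}
which I denote $(\star)$.

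The plan is to prove $(\star)$ by induction on $T$. The base case $T=1$ holds with equality, since $\bar J_1^\text{opt}(\beta)=\nu_1(e^\beta-1)$. For the inductive step, write $\bar J_T^\text{opt}(\beta)=\E_g[V(\beta,g)]$ with $V(\beta,g)=\min_{0\le b\le\beta}\bigl(\tfrac{e^b-1}{g}+\bar J_{T-1}^\text{opt}(\beta-b)\bigr)$. The differentiability of $\bar J_{T-1}^\text{opt}$---already invoked in the paper via the infimal-convolution results of \cite{Rockafellar_Book70}---propagates to show that $V(\cdot,g)$ is differentiable on $(0,\infty)$ and that $(\bar J_T^\text{opt})'(\beta)=\E_g[\partial_\beta V(\beta,g)]$, so it suffices to establish $V(\beta,g)\le T(1-e^{-\beta/T})\,\partial_\beta V(\beta,g)$ pointwise in $g$ and then integrate. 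Using the description \eqref{eq:bt_opt_det} of the optimal policy, this splits into three cases for the minimizer $b^\ast=b_T^\text{opt}(\beta,g)$. If $b^\ast=\beta$, then $V=\tfrac{e^\beta-1}{g}$ and $\partial_\beta V=\tfrac{e^\beta}{g}$, and the claim becomes $1-x^T\le T(1-x)$ with $x=e^{-\beta/T}$, which is immediate from $1-x^T=(1-x)\sum_{k=0}^{T-1}x^k$. If $b^\ast=0$, then $V=\bar J_{T-1}^\text{opt}(\beta)$ and $\partial_\beta V=(\bar J_{T-1}^\text{opt})'(\beta)$, and the claim follows from the induction hypothesis at level $T-1$ together with the elementary fact that $s\mapsto s(1-e^{-\beta/s})$ is increasing. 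If $0<b^\ast<\beta$, then the first-order condition gives $\partial_\beta V=(\bar J_{T-1}^\text{opt})'(\beta-b^\ast)=e^{b^\ast}/g$; writing $\tfrac{e^{b^\ast}-1}{g}=\tfrac{e^{b^\ast}}{g}(1-e^{-b^\ast})$ and, via the induction hypothesis and the first-order condition, $\bar J_{T-1}^\text{opt}(\beta-b^\ast)\le(T-1)\bigl(1-e^{-(\beta-b^\ast)/(T-1)}\bigr)\tfrac{e^{b^\ast}}{g}$, the claim reduces to $T e^{-\beta/T}\le e^{-b^\ast}+(T-1)e^{-(\beta-b^\ast)/(T-1)}$, which is AM--GM applied to $e^{-b^\ast}$ and $T-1$ copies of $e^{-(\beta-b^\ast)/(T-1)}$. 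This proves $(\star)$, hence the monotonicity (the gain being constant in the degenerate case $T=1$).

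The two limits are then short. For $B\to0$: a brief induction on the optimal-stopping recursion \eqref{eq:Jt_one_stopping} shows that each $J_t^\text{one}(B,g)$ is a fixed multiple of $e^B-1$, and matching these multipliers with \eqref{eq:stopping_omega_t} yields the exact identity $\bar J_T^\text{one}(B)=\omega_{T+1}(e^B-1)$, so $\bar J_T^\text{equal}(B)/\bar J_T^\text{one}(B)=T\nu_1(e^{B/T}-1)/[\omega_{T+1}(e^B-1)]\to\nu_1/\omega_{T+1}$; combining this with Theorem \ref{thm:oneshot_cost_conv} (which gives $\bar J_T^\text{one}(B)/\bar J_T^\text{opt}(B)\to1$) through the identity $\Delta_T^\text{opt}(B)=\frac{\bar J_T^\text{equal}(B)}{\bar J_T^\text{one}(B)}\cdot\frac{\bar J_T^\text{one}(B)}{\bar J_T^\text{opt}(B)}$ gives $\lim_{B\to0}\Delta_T^\text{opt}(B)=\nu_1/\omega_{T+1}$. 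For $B\to\infty$: Theorem \ref{thm:relax_cost_conv} and the remark following it give $\bar J_T^\text{opt}(B)=\bar U_T(B)+o(1)$ and $\bar J_T^\text{relax}(B)=\bar U_T(B)+o(1)$, and since $\bar U_T(B)=T\,\G(\nu_T,\cdots,\nu_1)e^{B/T}-T\nu_1\to\infty$ by \eqref{eq:barUt}, dividing $\bar J_T^\text{equal}(B)=T\nu_1(e^{B/T}-1)$ by either quantity and letting $B\to\infty$ gives $\nu_1/\G(\nu_T,\cdots,\nu_1)$.

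The main obstacle is the monotonicity: finding the sharp differential inequality $(\star)$ and pushing the induction through the dynamic program, especially handling the two boundary cases $b^\ast\in\{0,\beta\}$ of the constrained inner minimization, where the naive envelope identity $\partial_\beta V=(\bar J_{T-1}^\text{opt})'(\beta-b^\ast)$ does not hold. Once $(\star)$ is in hand, the remaining two parts are bookkeeping on top of the previously established results.
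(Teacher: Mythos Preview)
Your proof is correct and shares the paper's skeleton: both reduce monotonicity of $\Delta_T^\text{opt}$ to the differential inequality $\bar J_T^\text{opt}(\beta)\le T(1-e^{-\beta/T})(\bar J_T^\text{opt})'(\beta)$ (the paper writes this as nonpositivity of the quotient-rule numerator $(\bar J_T^\text{eq})'\bar J_T^\text{opt}-\bar J_T^\text{eq}(\bar J_T^\text{opt})'$, which after dividing by $\nu_1 e^{B/T}$ is exactly your $(\star)$), prove it by induction on $T$, and at the inductive step split according to whether the DP minimizer $b^\ast$ lies at $0$, at $\beta$, or in the interior. Where you differ is in the elementary inequalities used in each case: the paper constructs ad hoc concave functions $\phi(z)=T-z^{T-1}-(T-1)z^{-1}$ and $\psi(z)=-(T-1)z^T+Tz^{T-1}-1$ and argues each is $\le 0$, whereas you invoke the monotonicity of $s\mapsto s(1-e^{-\beta/s})$ (for $b^\ast=0$), the factorization $1-x^T=(1-x)\sum x^k$ (for $b^\ast=\beta$), and AM--GM (for the interior case). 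Your route is tidier and makes the reason each case works more transparent; the paper's route stays closer to the raw integrands but requires verifying several auxiliary concavity claims. You also start the induction at $T=1$ with equality, which is cleaner than the paper's explicit $T=2$ base case. For the two limits, both you and the paper simply invoke Theorems~\ref{thm:relax_cost_conv} and~\ref{thm:oneshot_cost_conv}; your observation that $\bar J_T^\text{one}(B)=\omega_{T+1}(e^B-1)$ holds \emph{exactly} (not just asymptotically) is a small bonus that the paper does not state. One minor inaccuracy in your closing remarks: the envelope identity $\partial_\beta V=(\bar J_{T-1}^\text{opt})'(\beta-b^\ast)$ actually \emph{does} hold at the boundary $b^\ast=0$; it fails only at $b^\ast=\beta$.
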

\vspace{10pt}
\begin{proof}
    See Appendix \ref{sec:pf_scheduling_gain_bounds}.
\end{proof}
\vspace{10pt}
Since the boundary-relaxed scheduler is optimal as $B\to\infty$, the scheduling gain of the optimal scheduler and that of the boundary-relaxed scheduler are the same as $B\to\infty$; the same is true for the optimal and the one-shot scheduler as $B\to 0$.
The plot of scheduling gain vs.~$B$ in Fig.~\ref{fig:gain_T5} agrees with the results of Theorem \ref{thm:scheduling_gain_bounds}.
Intuitively, scheduling delivers a larger power gain for small $B$ because in such scenarios one can be very opportunistic and transmit the entire packet once a sufficiently good channel state is realized. For larger $B$, however, it is inefficient to transmit the entire packet in a single slot (because energy increases exponentially with the number of bits) and thus transmissions must be spread across many slots (in fact, all slots are used as $B\to \infty$), which reduces the channel quality during those transmissions and thus reduces the benefit of scheduling.

In Table \ref{tbl:scheduling_gain} the limited scheduling gains are showed for various fading distributions. As intuitively expected, the scheduling gain is larger for more severe fading distributions and for larger time horizons $T$. From the fact that both $\G(\nu_T, \cdots, \nu_1)$ and $\omega_{T+1}$ decrease as $T$ increases \cite{Lee_WC09}, the asymptotic scheduling gains in \eqref{eq:Delta_smallB} and \eqref{eq:Delta_largeB} increase with $T$. 

\begin{figure}
    \centering
    \includegraphics[width=0.6\textwidth]{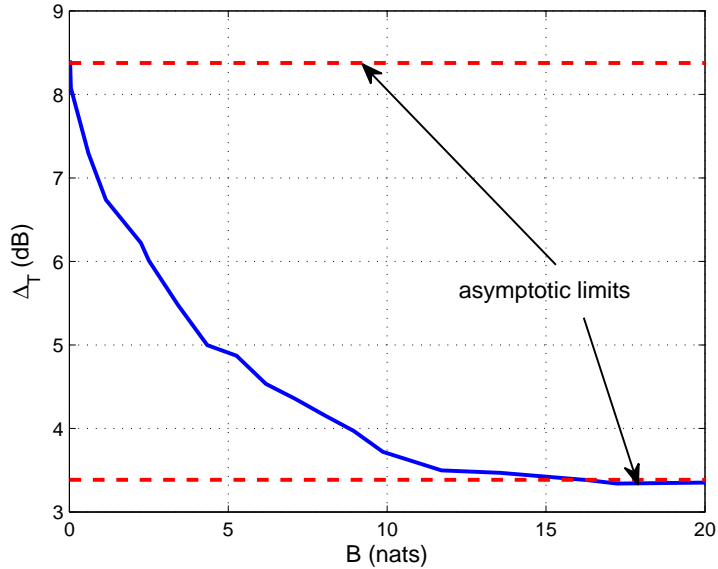}
    \caption{Scheduling gain $\Delta_5$ when $g$ is a truncated exponential variable with support $[0.001, 10^6]$}
    \label{fig:gain_T5}
\end{figure}

\begin{table*}
    \centering
    \caption{Scheduling gain examples for several fading distributions} \label{tbl:scheduling_gain}
\begin{tabulary}{50pt}{|l||p{1.6cm}|p{1.6cm}|p{1.6cm}|p{1.6cm}|p{1.6cm}|p{1.6cm}|}
\hline
& \multicolumn{2}{c|}{$T=5$} & \multicolumn{2}{c|}{$T=10$} & \multicolumn{2}{c|}{$T=50$} \\ \cline{2-3} \cline{4-5} \cline{6-7}
\raisebox{1.5ex}[0cm][0cm]{distribution of channel state $g$} & $\lim\limits_{B\to\infty}\Delta_5^\text{opt}(B)$ & $\lim\limits_{B\to 0}\Delta_5^\text{opt}(B)$ & $\lim\limits_{B\to \infty}\Delta_{10}^\text{opt}(B)$ & $\lim\limits_{B\to 0}\Delta_{10}^\text{opt}(B)$ & $\lim\limits_{B\to \infty}\Delta_{50}^\text{opt}(B)$ & $\lim\limits_{B\to 0}\Delta_{50}^\text{opt}(B)$ \\ \hline
truncated exponential with supp.~$[0.1, 10^6]$ & 0.97 dB & 4.42 dB & 1.26 dB & 5.98 dB & 1.63 dB & 8.59 dB  \\ \hline
truncated exponential with supp.~$[0.01, 10^6]$ & 2.19 dB & 6.72 dB & 2.80 dB & 8.63 dB & 3.52 dB & 11.51 dB \\ \hline
truncated exponential with supp.~$[0.001, 10^6]$ & 3.38 dB & 8.38 dB & 4.22 dB & 10.44 dB  & 5.17 dB & 13.40 dB \\ \hline
\end{tabulary}
\end{table*}

\subsection{Large $B$ Behavior (High SNR)}
\label{sec:largeB}

When $B$ is large relative to $T$, it is useful to interpret the scheduling gain in terms of the well-known affine approximation to high-SNR ($P$) capacity \cite{Shamai_IT01}:
    $C(P)=\mathcal{S}_\infty (\log P -\mathcal{L}_\infty) + o(1)$,
where $\mathcal{S}_\infty$ denotes the slope representing the multiplexing gain and $\mathcal{L}_\infty$ denotes the constant term representing the power/rate offset. We define the average SNR on a per-slot basis, i.e., $P=\bar{J_T}/T$. Similarly, the average rate is defined as $R_T = B/T$, which represents the average spectral efficiency per slot. Then, we investigate $R_T$ in terms of $P$ and $T$:
\begin{equation} \label{eq:high_affine}
    R_T(P)=\mathcal{S}_\infty (\log P-\mathcal{L}_{\infty, T}) +o(1).
\end{equation}

With algebraic calculations, we can obtain $\mathcal{S}_\infty$ and $\mathcal{L}_{\infty, T}$ for the equal-bit policy, the optimal scheduler (which is equal to the boundary-relaxed scheduler in this regime\footnote{We obtain this result in the process of proving Theorem \ref{thm:relax_cost_conv}, and thus we limit the fading distribution as conditioned in Theorem \ref{thm:relax_cost_conv}, i.e., the PDF $f$ is compactly supported and is continuous on the support.}), as well as the ergodic capacity (see Appendix \ref{sec:deriv_high_snr_approx} for derivation).
The three policies have the same multiplexing gain (degrees of freedom) per slot ($\mathcal{S}_\infty=1$), but the offsets $\mathcal{L}_{\infty,T}$ are different:
\begin{eqnarray}
    \mathcal{L}_{\infty,T}^\text{equal} &=& \log \nu_1, \\
    \mathcal{L}_{\infty,T}^\text{opt} &=& \log \G\left(\nu_T, \nu_{T-1},\cdots,\nu_1\right), \\
    \mathcal{L}_{\infty,T}^\text{erg} &=& \log \nu_\infty.
\end{eqnarray}
The offset of the equal-bit scheduler is independent of $T$ since it does not take advantage of time diversity. On the other hand, the offset of the boundary-relaxed scheduler decreases with $T$ since $\G\left(\nu_T, \nu_{T-1},\cdots, \nu_1\right)$ decreases \cite{Lee_WC09}. Moreover, the offset of the boundary-relaxed scheduler converges to that of the ergodic capacity because $\G\left(\nu_T, \nu_{T-1},\cdots, \nu_1\right)\to \nu_\infty$ as $T\to\infty$ \cite{Lee_WC09}.
Figure \ref{fig:Linf_vs_T} illustrates the offsets $\mathcal{L}_{\infty,T}$ for several fading distributions. As can be seen, $\mathcal{L}_{\infty,T}^\text{opt}$ for all the fading distributions decreases from $\mathcal{L}_{\infty,T}^\text{equal}$ as $T$ increases and converges to $\mathcal{L}_{\infty,T}^\text{erg}$. We can also see that the offsets $\mathcal{L}_{\infty,T}$ have larger values for more severe fading distributions.
\begin{figure}
    \centering
    \includegraphics[width=0.6\textwidth]{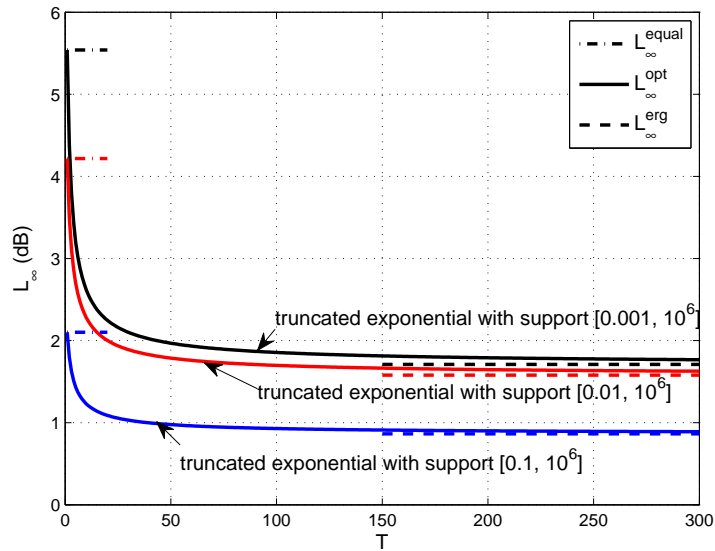}
    \caption{$\mathcal{L}_{\infty,T}$ for several fading distributions}
    \label{fig:Linf_vs_T}
\end{figure}

Figure \ref{fig:high_SNR_approx} illustrates the behavior of the spectral efficiency versus SNR. The dashed lines are obtained from the affine approximations in \eqref{eq:high_affine} while the solid lines are obtained numerically by running the optimal scheduling policy. As can be seen, the affine approximations are very accurate when SNR is 20dB or higher.
\begin{figure}
    \centering
    \includegraphics[width=0.6\textwidth]{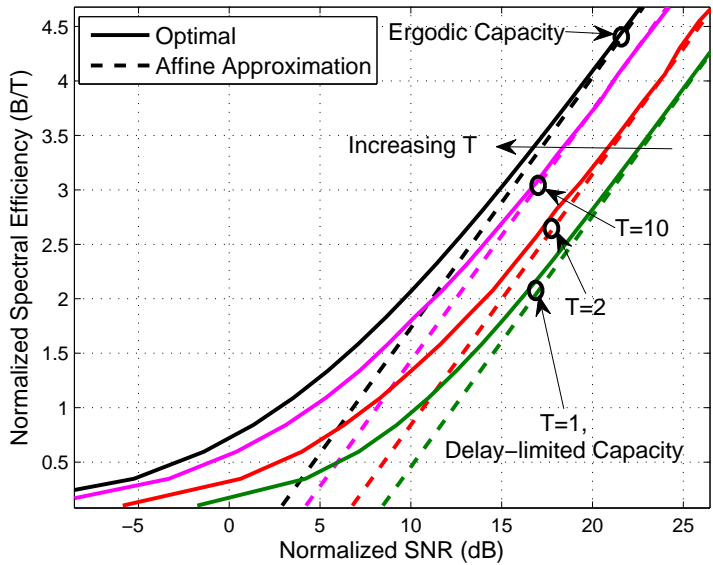}
    \caption{High SNR behavior when  $g$ is a truncated exponential variable with support $[0.001, 10^6]$}
    \label{fig:high_SNR_approx}
\end{figure}
Furthermore, as $T$ increases the spectral efficiency increases from the delay-limited capacity (achieved with the $T=1$ optimal scheduling or the equal-bit scheduling) to the ergodic capacity (achieved with the $T=\infty$ optimal scheduling).

It is interesting to note that for the dual problem of rate maximization over a finite time horizon when subject to a per-realization energy constraint (i.e., for every realization of channel gains $g_T, \cdots, g_1$ the amount of energy used by the scheduling policy cannot exceed some constraint $E$ considered in \cite{Negi_IT02} and \cite{Caire_IT04}, at high SNR the optimal policy converges to uniform power allocation (independent of channel state) and there is no advantage to using an intelligent scheduling policy. This is to be contrasted with the setting considered here, where there is a non-vanishing benefit to using the optimal scheduler even at high SNR (i.e., large $B$).

\subsection{Small $B$ Behavior (Low SNR)}

In this regime, we characterize the linear approximation to the spectral efficiency versus $\frac{E_b}{N_0}$ curve based on the wideband analysis in \cite{Verdu_IT02}.
The linear approximation consists of a constant term $\left(\frac{E_b}{N_0}\right)_{\min}$ and a slope $\mathcal{S}_0$ that represent the minimum energy per bit for reliable communication and the growth of spectral efficiency with respect to $\frac{E_b}{N_0}$. To be clear, we adopt the notion of $E_b$ as the required energy per slot to transmit one bit per slot instead of the required energy to transmit one bit throughout the entire $T$ slots:
\begin{equation}
    R_T\left(\frac{E_b}{N_0}\right) \approx \mathcal{S}_{0,T} \left(\frac{\frac{E_b}{N_0}\Big\vert_{\text{dB}}-\left(\frac{E_b}{N_0}\right)_{\min,T}\Big\vert_{\text{dB}}}{3\; \text{dB}}\right).
\end{equation}

These parameters $\mathcal{S}_{0,T}$ and $\left(\frac{E_b}{N_0}\right)_{\min,T}$ can be obtained for the equal-bit scheduler and the one-shot scheduler, which is optimal for $B\to 0$, (see Appendix \ref{sec:deriv_low_snr_approx} for derivations):
\begin{eqnarray}
    &\mathcal{S}_{0,T}^\text{eq} = 2, & \left(\frac{E_b}{N_0}\right)_{\min,T}^\text{eq} = (\log 2)\nu_1,\\
    &\mathcal{S}_{0,T}^\text{one} = \frac{2}{T}, & \left(\frac{E_b}{N_0}\right)_{\min,T}^\text{one} = (\log 2) \E\left[\min\left(\frac{1}{g_T}, \E\left[\min\left(\frac{1}{g_{T-1}}, \cdots \E\left[\min\left(\frac{1}{g_2},\E\left[\frac{1}{g_1}\right]\right)\right] \right)\right]\right)\right],
\end{eqnarray}
and both $\mathcal{S}_{0,T}^\text{erg}$ and $\left(\frac{E_b}{N_0}\right)_{\min,T}^\text{erg}$ are zero for ergodic capacity.

Figure \ref{fig:low_SNR_approx} illustrates the behavior of $\left(\frac{E_b}{N_0}\right)_{\min,T}$ and $\mathcal{S}_{0,T}$ with respect to $T$. As can be seen, both $\left(\frac{E_b}{N_0}\right)_{\min,T}$ and $\mathcal{S}_{0,T}$ decrease from the  delay-limited values to the ergodic capacity values as $T\to\infty$ due to the available time diversity.
\begin{figure}
    \centering
    \subfloat[$\left(\frac{E_b}{N_0}\right)_{\min,T}$ versus $T$]{\includegraphics[width=0.48\textwidth]{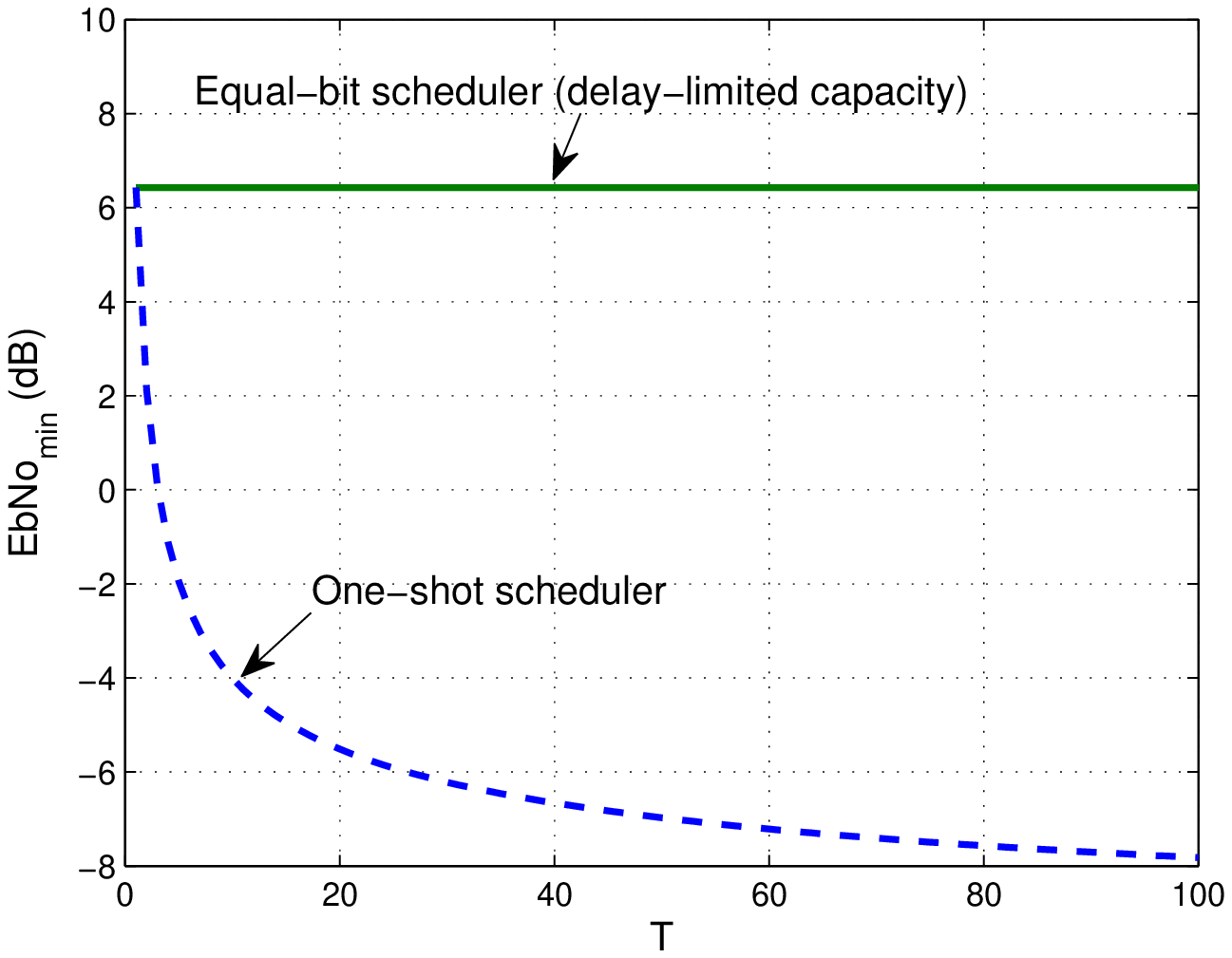}}\
    \hfill
    \subfloat[$\mathcal{S}_{0,T}$ versus $T$]{\includegraphics[width=0.48\textwidth]{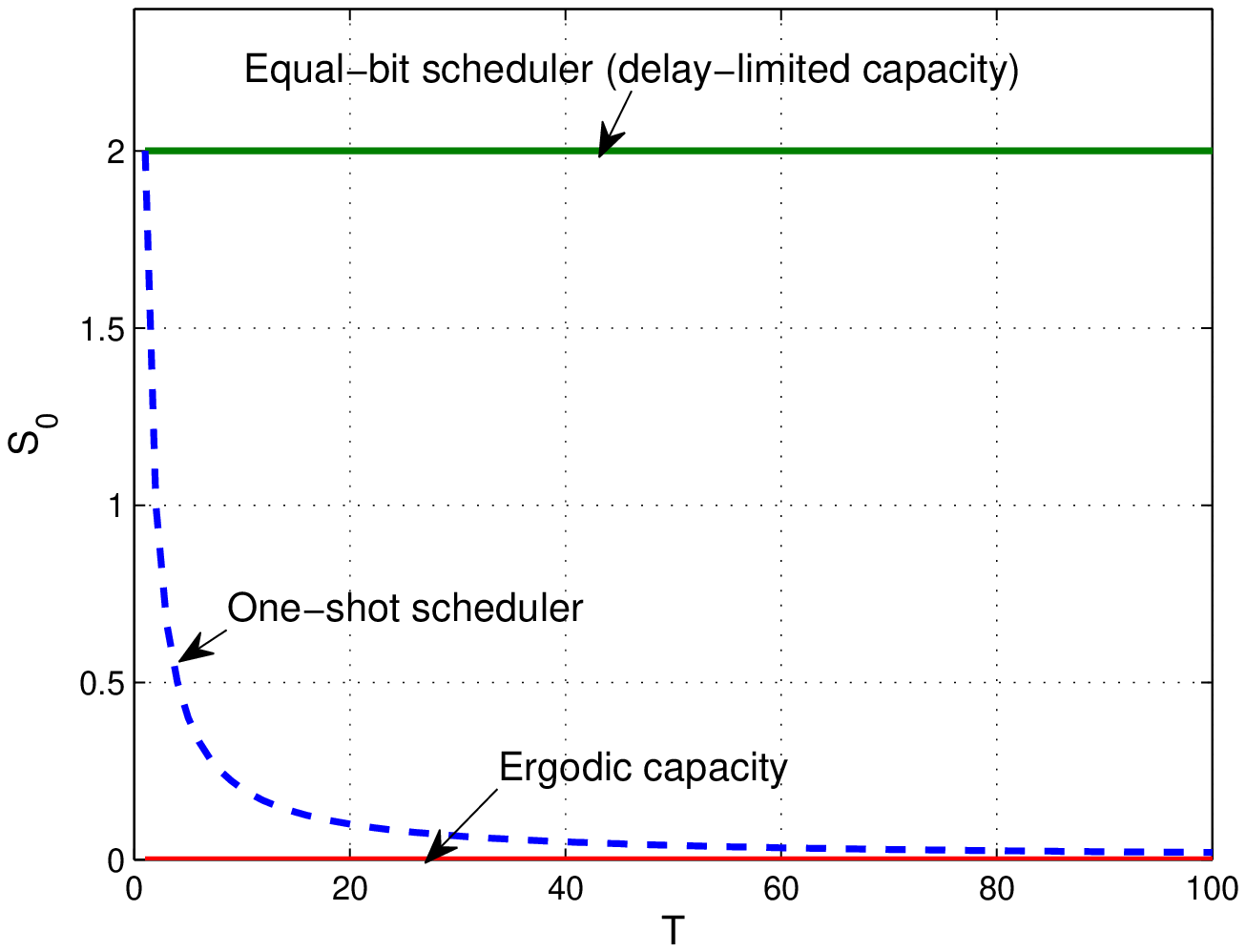}}
    \caption{Low SNR behavior when  $g$ is a truncated exponential variable with support $[0.001, 10^6]$}
    \label{fig:low_SNR_approx}
\end{figure}

\section{Conclusion}

We have shown the asymptotic optimality of three different scheduling policies for delay-constrained transmission over a fading channel. When only a small number of bits need to be served, a one-shot threshold
policy is optimal: once a sufficiently good channel state is experienced, the entire packet is transmitted. On the other hand, when the number of bits is large, the number of transmitted bits at each time step should be a weighted sum of the unserved bits and a channel state-related term, where the weight is proportional to the time to deadline. In each of these two policies, the scheduler is opportunistic while also being cognizant of the deadline. Furthermore, a modification of the ergodic
waterfilling policy is shown to be optimal when the number of bits and the time horizon are both large.

Although problems involving delay-limited communication are of great practical importance and have been the subject of considerable research, such problems generally do not have closed-form solutions.  In this work, however, we are able to circumvent this general difficulty by considering different asymptotic regimes.  It would be interesting to see if the asymptotically optimal policies identified here, which admit a very simple analytical form, can be extended to other more general settings.  For example, to scheduling with time-varying channels and randomly arriving packets \cite{Berry_IT02}\cite{Rajan_IT04} and possibly to multi-user channels \cite{Yeh_ISIT03}.

\appendices

\section{Proof of Proposition \ref{prop:monotonicity_bt_opt}}
\label{sec:pf_monotonicity_bt_opt}
    \begin{enumerate}
        \item[(a)]
        \begin{enumerate}
            \item[(i)] We show the monotonicity of $b_t^\text{opt}$ in $\beta_t$. When $g_t\le \frac{1}{(\bar{J}_{t-1}^\text{opt})'(\beta_t)}$, $b_t^\text{opt}=0$ and thus is non-decreasing in $\beta_t$. When $\frac{1}{(\bar{J}_{t-1}^\text{opt})'(\beta_t)}< g_t < \frac{e^{\beta_t}}{(\bar{J}_{t-1}^\text{opt})'(0)}$, we suppose that $b_t^\text{opt}$ decreases in $\beta_t$. Then, $(\beta_t-b_t^\text{opt})$ increases and $(\bar{J}_{t-1}^\text{opt})'(\beta_t-b_t^\text{opt})$ increases. As a result, $b_t^\text{opt}$ increases but this leads a contradiction. Thus, $b_t^\text{opt}$ is non-decreasing in $\beta_t$ when $\frac{1}{(\bar{J}_{t-1}^\text{opt})'(\beta_t)}< g_t < \frac{e^{\beta_t}}{(\bar{J}_{t-1}^\text{opt})'(0)}$. When $g_t \ge \frac{e^{\beta_t}}{(\bar{J}_{t-1}^\text{opt})'(0)}$, $b_t^\text{opt}=\beta_t$ and thus is non-decreasing in $\beta_t$.

            \item[(ii)] We show the strict monotonicity of $b_t^\text{opt}$ for large $\beta_t$. To do this, we first show the unboundedness of $(\bar{J}_{t-1}^\text{opt})'$. Suppose not, i.e., there exists $M (<\infty)$ such that $(\bar{J}_{t-1}^\text{opt})'(\beta)\le M$ for all $\beta(\ge 0)$. By integrating both sides, we have $\bar{J}_{t-1}^\text{opt}(\beta)\le M\beta$ for all $\beta(\ge 0)$. Note also that $\bar{J}_{t-1}^\text{opt}(\beta)\ge \E\left[\frac{e^{\beta/t}-1}{\max(g_{t-1},\cdots,g_1)}\right]$ for all $\beta(\ge 0)$ by \eqref{eq:Jt_opt}. Consequently, we have
$\E\left[\frac{e^{\beta/t}-1}{\max(g_{t-1},\cdots,g_1)}\right] \le M\beta$ for all $\beta(\ge 0)$, which leads a contradiction. Therefore,  $(\bar{J}_{t-1}^\text{opt})'$ is unbounded.

            Since $(\bar{J}_{t-1}^\text{opt})'$ is unbounded and monotonically increasing, for any given $g_t$ there exists $\mathfrak{B}_0$ such that $\frac{1}{(\bar{J}_{t-1}^\text{opt})'(\beta_t)}< g_t < \frac{e^{\beta_t}}{(\bar{J}_{t-1}^\text{opt})'(0)}$ for all $\beta_t> \mathfrak{B}_0$. In this region of $g_t$, we showed that $b_t^\text{opt}$ is non-decreasing in $\beta_t$ by (i). Suppose $b_t^\text{opt}$ maintains a constant value as $\beta_t$ increases in this region of $g_t$. Then, $(\beta_t-b_t^\text{opt})$ increases and $(\bar{J}_{t-1}^\text{opt})'(\beta_t-b_t^\text{opt})$ increases. As a result, $b_t^\text{opt}$ increases but this leads a contradiction, too. Therefore, $b_t^\text{opt}$ increases strictly in $\beta_t$ if $\beta_t> \mathfrak{B}_0$.

            \item[(iii)] Finally, we show the monotonicity of $(\beta_t-b_t^\text{opt})$ in $\beta_t$. Since $b_t^\text{opt}$ is non-decreasing in $\beta_t$, $\frac{e^{b_t^\text{opt}}}{g_t}$ is non-decreasing. Since $(\bar{J}_{t-1}^\text{opt})'(\cdot)$ is an increasing function, $(\beta_t-b_t^\text{opt})$ must be non-decreasing in $\beta_t$. If $\beta_t>\mathfrak{B}_0$, then $b_t^\text{opt}$ is strictly increasing by (ii) and thus $(\beta_t-b_t^\text{opt})$ is strictly increasing by the same argument.
        \end{enumerate}

        \item[(b)] If $g_t \le \frac{1}{(\bar{J}_{t-1}^\text{opt})'(\beta_t)}$ and $g_t \ge \frac{e^{\beta_t}}{(\bar{J}_{t-1}^\text{opt})'(0)}$, $b_t^\text{opt}$ is constant as $g_t$ increases, and thus is non-decreasing with respect to $g_t$. When $\frac{1}{(\bar{J}_{t-1}^\text{opt})'(\beta_t)} < g_t < \frac{e^{\beta_t}}{(\bar{J}_{t-1}^\text{opt})'(0)}$,
        \begin{equation} \label{eq:bt_logged}
            b_t^\text{opt} = \log g_t + \log \left[(\bar{J}_{t-1}^\text{opt})'(\beta_t-b_t^\text{opt})\right].
        \end{equation}
        If we suppose that $b_t^\text{opt}$ decreases strictly as $g_t$ increases, $(\beta_t-b_t^\text{opt})$ will increase and thus $(\bar{J}_{t-1}^\text{opt})'(\beta_t-b_t^\text{opt})$ will also increase. This leads a contradiction because the left hand side of \eqref{eq:bt_logged} decreases strictly while the right hand side increases. Therefore, $b_t^\text{opt}$ is non-decreasing in $g_t$.
    \end{enumerate}

\section{Proof of Theorem \ref{thm:relax_policy_conv}}
\label{sec:pf_relax_policy_conv}
    We show the result by induction, i.e., we show that if the scheduling functions converge at time step $t-1$, then the functions also converge at time step $t$.
    The base cases occur at $t=1$ and $t=2$: by construction, $b_2^\text{relax}(\beta,g_2)= b_2^\text{opt}(\beta,g_2)$ for every $(\beta,g_2)$, and $b_1^\text{relax}(\beta,g_1) =b_1^\text{opt}(\beta,g_1)$ for every $(\beta,g_1)$.

        In order to show policy convergence, it is useful to write $b_t^\text{relax}$ as:
    \begin{equation} \label{eq:bt_relax_det}
         b_t^\text{relax} (\beta_t, g_t)=
        \begin{cases}
        0, & g_t \le \frac{1}{(\bar{U}_{t-1})'(\beta_t)},\\
        \arg_b \left\{\frac{e^{b}}{g_t} = (\bar{U}_{t-1})'(\beta_t-b)\right\}, & \hspace{-10pt}         \frac{1}{(\bar{U}_{t-1})'(\beta_t)} < g_t < \frac{e^{\beta_t}}{(\bar{U}_{t-1})'(0)}, \\
  \beta_t, & g_t \ge \frac{e^{\beta_t}}{(\bar{U}_{t-1})'(0)},
         \end{cases}
    \end{equation}
    which is identical to the expression for $b_t^\text{opt}$ in \eqref{eq:bt_opt_det} except replacing $(\bar{J}_{t-1}^\text{opt})'$ with $(\bar{U}_{t-1})'$.
    Since $\bar{J}_{t-1}^\text{opt}$ and $\bar{U}_{t-1}$ are convex, $(\bar{J}_{t-1}^\text{opt})'$ and $(\bar{U}_{t-1})'$ are increasing and moreover unbounded (shown in Appendix \ref{sec:pf_monotonicity_bt_opt} (a)(ii)).
    Since $g_{\min}>0$ and $g_{\max}<\infty$ where $g_{\min}$ and $g_{\max}$ are the lower and the upper bounds of the support of the PDF $f$ ($\text{Support}(f)=[g_{\min}, g_{\max}]$), there exists $\mathfrak{B}_0$ such that if $\beta>\mathfrak{B}_0$ then
    \begin{eqnarray}
        g_{\min}&>& \max\left(\frac{1}{(\bar{U}_{t-1})'(\beta)}, \; \frac{1}{(\bar{J}_{t-1}^\text{opt})'(\beta)} \right) \\
        g_{\max}&<& \min\left(\frac{e^\beta}{(\bar{U}_{t-1})'(0)},\; \frac{e^\beta}{(\bar{J}_{t-1}^\text{opt})'(0)} \right).
    \end{eqnarray}
    Henceforth, we only consider $\beta>\mathfrak{B}_0$, and thus, no truncation occurs in both policy functions, i.e., $b_t^\text{relax}$ and $b_t^\text{opt}$ are determined by
    \begin{eqnarray}
        \frac{e^{b_t^\text{relax}}}{g_t} &=& \phi(\beta-b_t^\text{relax}), \label{eq:ebrelax_phi} \\
        \frac{e^{b_t^\text{opt}}}{g_t} &=& \psi(\beta-b_t^\text{opt}), \label{eq:ebopt_psi}
    \end{eqnarray}
    for $\beta>\mathfrak{B}_0$, where $\phi(\beta) = (\bar{U}_{t-1})'(\beta)$ and $\psi(\beta) = (\bar{J}_{t-1}^\text{opt})'(\beta)$.

    Let $\epsilon>0$ be given. By Lemma \ref{lem:bt_conv_Ud_Jd} (stated later in Appendix \ref{sec:pf_relax_policy_conv}), there exists $\mathfrak{B}_1(\ge \mathfrak{B}_0)$ such that if $\xi>\mathfrak{B}_1$, then
    \begin{equation}
        \left|\phi(\xi)-\psi(\xi)\right| < \epsilon.
    \end{equation}
    Since  $\beta-b_t^\text{opt}$ and $b_t^\text{opt}$ are strictly increasing in $\beta$ (when $\beta$ is sufficiently large) by Proposition \ref{prop:monotonicity_bt_opt} and $\beta-b_t^\text{opt}(=\psi^{-1}(e^{b_t^\text{opt}}/g_t))$ is unbounded due to the unboundedness and the monotonicity of $\psi$,
 there exists $\mathfrak{B}_2$ such that $\beta>\mathfrak{B}_2$ implies
    \begin{equation}
        \beta-b_t^\text{opt}(\beta,g_t) > \mathfrak{B}_1,\qquad \forall g_t\in [g_{\min}, g_{\max}].
    \end{equation}
    Therefore, if $\beta> \mathfrak{B}_2$,
        \begin{equation} \label{eq:phi_psi_same}
        \left|\phi(\beta-b_t^\text{opt}(\beta, g_t))-\psi(\beta-b_t^\text{opt}(\beta,g_t))\right| < \epsilon, \quad \forall g_t\in [g_{\min}, g_{\max}].
    \end{equation}
    If $b_t^\text{opt}\le b_t^\text{relax}$,
    \begin{equation} \label{eq:phi_psi_diff_1}
        \phi(\beta-b_t^\text{relax}) \le \phi(\beta-b_t^\text{opt}) < \psi (\beta-b_t^\text{opt})+\epsilon,
    \end{equation}
    where the last inequality follows from \eqref{eq:phi_psi_same}. Additionally, we have
    \begin{equation} \label{eq:phi_psi_diff_2}
        \phi(\beta-b_t^\text{relax})=\frac{e^{b_t^\text{relax}}}{g_t} \ge \frac{e^{b_t^\text{opt}}}{g_t} = \psi(\beta-b_t^\text{opt}).
    \end{equation}
    Combining \eqref{eq:phi_psi_diff_1} and \eqref{eq:phi_psi_diff_2}, we have $\phi(\beta-b_t^\text{relax})-\psi(\beta-b_t^\text{opt})<\epsilon$. By the same argument for $b_t^\text{opt}>b_t^\text{relax}$, we have $\psi(\beta-b_t^\text{opt})-\phi(\beta-b_t^\text{relax})<\epsilon$. Thus, we obtain
    \begin{equation}
        \left|\phi(\beta-b_t^\text{relax}(\beta,g_t))-\psi(\beta-b_t^\text{opt}(\beta,g_t))\right| < \epsilon,\qquad \forall g_t\in [g_{\min}, g_{\max}].
    \end{equation}
    By \eqref{eq:ebrelax_phi}, \eqref{eq:ebopt_psi}, and the continuity, $b_t^\text{relax}(\cdot, g_t) \to b_t^\text{opt}(\cdot, g_t)$ uniformly on $[g_{\min}, g_{\max}]$ is obtained.
\hfill\QED

\vspace{20pt}
\begin{lemma} \label{lem:bt_conv_Ud_Jd}
        If $b_{t-1}^\text{relax}(\cdot,g_{t-1}) \to b_{t-1}^\text{opt}(\cdot,g_{t-1})$ uniformly on $[g_{\min}, g_{\max}]$, then
        \begin{equation}
            \lim_{\beta\to\infty} \left[(\bar{U}_{t-1})'(\beta)-(\bar{J}_{t-1}^\text{opt})'(\beta) \right]=0.
        \end{equation}
    \end{lemma}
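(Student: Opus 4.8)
The plan is to prove this by induction on $t$, interleaved with the induction in the proof of Theorem~\ref{thm:relax_policy_conv}: when establishing the claim at step $t-1$ I may take as already known the statement one step earlier, namely that $\eta(\xi):=(\bar{U}_{t-2})'(\xi)-(\bar{J}_{t-2}^\text{opt})'(\xi)\to 0$ as $\xi\to\infty$. This holds at the base steps, since $\bar{U}_1=\bar{J}_1^\text{opt}$ and, because $g$ is bounded away from $0$ and $\infty$, no truncation occurs at step $2$ for large $\beta$ and hence $\bar{U}_2(\beta)=\bar{J}_2^\text{opt}(\beta)$ there. Note $\eta$ is continuous with limit $0$, hence bounded; combined with the closed form $(\bar{U}_{t-2})'(\xi)=\G(\nu_{t-2},\dots,\nu_1)\,e^{\xi/(t-2)}$ from \eqref{eq:barUt}, this yields two-sided bounds $\tfrac12 G e^{\xi/(t-2)}\le (\bar{J}_{t-2}^\text{opt})'(\xi)\le 2G e^{\xi/(t-2)}$ for all large $\xi$, where $G:=\G(\nu_{t-2},\dots,\nu_1)>0$.

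First I would pass to the no-truncation regime exactly as in the proof of Theorem~\ref{thm:relax_policy_conv}: $(\bar{U}_{t-2})'$ and $(\bar{J}_{t-2}^\text{opt})'$ are unbounded and $\mathrm{Support}(f)=[g_{\min},g_{\max}]$ is compact, so for all sufficiently large $\beta$ neither $b_{t-1}^\text{relax}(\beta,\cdot)$ nor $b_{t-1}^\text{opt}(\beta,\cdot)$ is truncated on $[g_{\min},g_{\max}]$; thus $b_{t-1}^\text{relax}=\tfrac{1}{t-1}\beta+\tfrac{t-2}{t-1}\log(g/\eta_{t-1}^\text{relax})$ and both allocations satisfy the interior first-order condition $e^{b}/g=(\text{cost-to-go})'(\beta-b)$. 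By the envelope theorem applied to the step-$(t-1)$ dynamic programs together with the asserted differentiability of the cost-to-go functions, $(\bar{U}_{t-1})'(\beta)=\E_g[e^{b_{t-1}^\text{relax}(\beta,g)}/g]$ and $(\bar{J}_{t-1}^\text{opt})'(\beta)=\E_g[e^{b_{t-1}^\text{opt}(\beta,g)}/g]$, so it suffices to prove
\[
\lim_{\beta\to\infty}\ \E_g\!\left[\frac{e^{b_{t-1}^\text{relax}(\beta,g)}-e^{b_{t-1}^\text{opt}(\beta,g)}}{g}\right]=0.
\]

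The crux is a bound on $|b_{t-1}^\text{relax}-b_{t-1}^\text{opt}|$ sharp enough to beat the exponential growth of $e^{b_{t-1}^\text{opt}}$. Define $H(b):=g\,(\bar{J}_{t-2}^\text{opt})'(\beta-b)-e^{b}$; then $H$ is strictly decreasing, $H(b_{t-1}^\text{opt})=0$ by the FOC, and since $e^{b_{t-1}^\text{relax}}=g\,(\bar{U}_{t-2})'(\beta-b_{t-1}^\text{relax})$ we get $H(b_{t-1}^\text{relax})=-g\,\eta(\beta-b_{t-1}^\text{relax})$. Because $H(b_1)-H(b_2)\ge e^{b_2}-e^{b_1}\ge e^{b_1}(b_2-b_1)$ for $b_1<b_2$, this elementary monotonicity bound gives $|b_{t-1}^\text{relax}-b_{t-1}^\text{opt}|\le g_{\max}\,|\eta(\beta-b_{t-1}^\text{relax})|\,e^{-\min(b_{t-1}^\text{relax},\,b_{t-1}^\text{opt})}$. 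Substituting the FOC and the two-sided bound on $(\bar{J}_{t-2}^\text{opt})'$ back in shows $b_{t-1}^\text{opt},b_{t-1}^\text{relax}\ge\tfrac{\beta}{t-1}-c$ and $e^{b_{t-1}^\text{opt}}=g\,(\bar{J}_{t-2}^\text{opt})'(\beta-b_{t-1}^\text{opt})\le C\,e^{\beta/(t-1)}$ for constants $c,C$ depending only on the fading support and $t$. Combining these with $|e^{b_{t-1}^\text{relax}}-e^{b_{t-1}^\text{opt}}|\le 2\,e^{b_{t-1}^\text{opt}}\,|b_{t-1}^\text{relax}-b_{t-1}^\text{opt}|$ (valid once the gap is below $\log 2$, which Theorem~\ref{thm:relax_policy_conv}, or the preceding bound, guarantees), the factors $e^{\beta/(t-1)}$ and $e^{-\beta/(t-1)}$ cancel, leaving $|e^{b_{t-1}^\text{relax}(\beta,g)}-e^{b_{t-1}^\text{opt}(\beta,g)}|\le C^{\ast}\,|\eta(\beta-b_{t-1}^\text{relax}(\beta,g))|$ uniformly in $g\in[g_{\min},g_{\max}]$. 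Finally $\beta-b_{t-1}^\text{relax}(\beta,g)\ge\tfrac{t-2}{t-1}\beta-c'\to\infty$ uniformly in $g$, so $\sup_{g}|\eta(\beta-b_{t-1}^\text{relax}(\beta,g))|\to 0$; dividing by $g\ge g_{\min}$ and taking the expectation then proves the displayed limit, hence the lemma.

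The step I expect to be the main obstacle is precisely that cancellation. Merely invoking the uniform policy convergence of Theorem~\ref{thm:relax_policy_conv} is useless here, because $(\bar{J}_{t-1}^\text{opt})'(\beta)\to\infty$: an additive $\epsilon$-closeness of $b_{t-1}^\text{relax}$ and $b_{t-1}^\text{opt}$ does not control $e^{b_{t-1}^\text{relax}}-e^{b_{t-1}^\text{opt}}$. One must make the fixed-point comparison quantitative and show the allocation gap decays at the rate $e^{-\beta/(t-1)}$, exactly the rate at which the marginal-energy sensitivity $e^{b_{t-1}^\text{opt}}$ grows; this is what forces the interleaved induction, since the argument uses $\eta$ at step $t-2$ rather than only policy convergence at step $t-1$.
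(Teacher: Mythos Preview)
Your argument is correct, and structurally it runs parallel to the paper's: both the paper and you express the two derivatives, in the no-truncation regime, as $(\bar U_{t-1})'(\beta)=\E_g[e^{b_{t-1}^{\text{relax}}}/g]$ and $(\bar J_{t-1}^{\text{opt}})'(\beta)=\E_g[e^{b_{t-1}^{\text{opt}}}/g]$ (the paper via explicit Leibniz differentiation, you via the envelope theorem), then check by direct substitution of \eqref{eq:bt_relax} that $(\bar U_{t-1})'(\beta)=\E_g[e^{b_{t-1}^{\text{relax}}}/g]$ exactly, so everything reduces to showing
\[
\E_g\!\left[\frac{e^{b_{t-1}^{\text{relax}}(\beta,g)}-e^{b_{t-1}^{\text{opt}}(\beta,g)}}{g}\right]\longrightarrow 0.
\]

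The difference is how this last limit is established. The paper simply cites the lemma's hypothesis---uniform convergence of $b_{t-1}^{\text{relax}}$ to $b_{t-1}^{\text{opt}}$ on $[g_{\min},g_{\max}]$---and declares the integral vanishes. That is exactly the step you singled out as the obstacle, and your diagnosis is right: an additive $\epsilon$-gap between $b_{t-1}^{\text{relax}}$ and $b_{t-1}^{\text{opt}}$ does \emph{not} by itself control $e^{b_{t-1}^{\text{relax}}}-e^{b_{t-1}^{\text{opt}}}$, since both allocations grow like $\beta/(t-1)$. Your fixed-point comparison via $H(b)=g\,(\bar J_{t-2}^{\text{opt}})'(\beta-b)-e^b$, together with the step-$(t-2)$ induction hypothesis $\eta\to 0$ and the two-sided estimate $(\bar J_{t-2}^{\text{opt}})'(\xi)\asymp G\,e^{\xi/(t-2)}$, yields the quantitative bound $|b_{t-1}^{\text{relax}}-b_{t-1}^{\text{opt}}|\lesssim |\eta(\beta-b_{t-1}^{\text{relax}})|\,e^{-\beta/(t-1)}$, whose decay rate precisely cancels the growth of $e^{b_{t-1}^{\text{opt}}}$. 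This is genuinely additional work beyond what the paper writes down.

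In short: same skeleton as the paper, but you supply the missing rate argument at the crux. The paper's proof, as written, leaves that cancellation unjustified; your interleaved induction (using the lemma's conclusion at step $t-2$ rather than only the policy-convergence hypothesis at step $t-1$) is what actually makes the limit rigorous.
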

    \begin{proof}
    From \eqref{eq:Jt_opt} and \eqref{eq:bt_opt_det}, we write the expected cost-to-go as:
    \begin{equation}
        \bar{J}_{t-1}^\text{opt}(\beta)
        = \int_0^{\frac{1}{(\bar{J}_{t-2}^\text{opt})'(\beta)}} \bar{J}_{t-2}^\text{opt}(\beta) dF(x)
        + \int_{\frac{e^\beta}{(\bar{J}_{t-2}^\text{opt})'(0)}}^\infty \frac{e^\beta-1}{x} dF(x)
        + \int_{\frac{1}{(\bar{J}_{t-2}^\text{opt})'(\beta)}}^{\frac{e^\beta}{(\bar{J}_{t-2}^\text{opt})'(0)}}
        \left[\frac{e^{b_{t-1}^\text{opt}}-1}{x} + \bar{J}_{t-2}^\text{opt}(\beta-b_{t-1}^\text{opt}) \right] dF(x),
    \end{equation}
    where $b_{t-1}^\text{opt}$ is a function of $\beta$ (and $x$).
    By differentiating $\bar{J}_{t-1}^\text{opt}$ using integral calculus\footnote{ $H'(x)=h(x,\varphi(x))\varphi'(x)+\int_a^{\varphi(x)} \frac{\partial h}{\partial x}(x,y) dy$ for $H(x)=\int_a^{\varphi(x)} h(x,y) dy$}, the derivative (with respect to $\beta$) of $\bar{J}_{t-1}^\text{opt}$ is:
    \begin{equation} \label{eq:Jt1d}
        (\bar{J}_{t-1}^\text{opt})'(\beta) =\int_0^{\frac{1}{(\bar{J}_{t-2}^\text{opt})'(\beta)}} (\bar{J}_{t-2}^\text{opt})'(\beta) f(x) dx
        + e^\beta \int_{\frac{e^\beta}{(\bar{J}_{t-2}^\text{opt})'(0)}}^\infty \frac{1}{x} f(x)dx
        + \int_{\frac{1}{(\bar{J}_{t-2}^\text{opt})'(\beta)}}^{\frac{e^\beta}{(\bar{J}_{t-2}^\text{opt})'(0)}}
        \frac{d}{d\beta}\left[\frac{e^{b_{t-1}^\text{opt}}-1}{x}+\bar{J}_{t-2}^\text{opt}(\beta-b_{t-1}^\text{opt})\right] dF(x)
    \end{equation}
    Since  $(\bar{J}_{t-2}^\text{opt})'$ is unbounded increasing and $\text{Support}(f)=[g_{\min}, g_{\max}]$, $\frac{1}{(\bar{J}_{t-2}^\text{opt})'(\beta)}< g_{\min}$ and $\frac{e^\beta}{(\bar{J}_{t-2}^\text{opt})'(0)}>g_{\max}$ for sufficiently large $\beta$, and thus
    \begin{eqnarray}
        &&\lim_{\beta\to\infty} \int_0^{\frac{1}{(\bar{J}_{t-2}^\text{opt})'(\beta)}} (\bar{J}_{t-2}^\text{opt})'(\beta) f(x) dx=       0,\\
        &&\lim_{\beta\to\infty} e^\beta \int_{\frac{e^\beta}{(\bar{J}_{t-2}^\text{opt})'(0)}}^\infty \frac{1}{x} f(x) dx
        = 0.
    \end{eqnarray}
    Since $\frac{e^{b_{t-1}^\text{opt}}}{x}=(\bar{J}_{t-2}^\text{opt})'(\beta-b_{t-1}^\text{opt})$ for $x\in \left(\frac{1}{(\bar{J}_{t-2}^\text{opt})'(\beta)}, \frac{e^\beta}{(\bar{J}_{t-2}^\text{opt})'(0)}\right)$ by \eqref{eq:bt_opt_det},
    \begin{equation} \label{eq:Jt1d_sub}
        \frac{d}{d\beta}\left[\frac{e^{b_{t-1}^\text{opt}}-1}{x}+\bar{J}_{t-2}^\text{opt}(\beta-b_{t-1}^\text{opt})\right]
        = \frac{e^{b_{t-1}^\text{opt}}}{x}.
    \end{equation}
    As a result, the derivative of the expected cost-to-go can be stated simply in the limit of large $\beta$:
    \begin{equation}
        \lim_{\beta\to\infty} (\bar{J}_{t-1}^\text{opt})'(\beta)
        = \lim_{\beta\to\infty} \int_{\frac{1}{(\bar{J}_{t-2}^\text{opt})'(\beta)}}^{\frac{e^\beta}{(\bar{J}_{t-2}^\text{opt})'(0)}} \frac{e^{b_{t-1}^\text{opt}(\beta,x)}}{x} dF(x).
    \end{equation}
    From \eqref{eq:barUt}, we have
    \begin{equation}
        \begin{split}
            \lim_{\beta\to\infty} \Big[(\bar{U}_{t-1})'(\beta)&-(\bar{J}_{t-1}^\text{opt})'(\beta) \Big]
            =\lim_{\beta\to\infty} \left[e^{\frac{\beta}{t-1}}\G(\nu_{t-1},\cdots, \nu_1)-\int_{\frac{1}{(\bar{J}_{t-2}^\text{opt})'(\beta)}}^{\frac{e^\beta}{(\bar{J}_{t-2}^\text{opt})'(0)}} \frac{e^{b_{t-1}^\text{opt}}}{x} dF(x)\right] \\
            &= \lim_{\beta\to\infty} \Bigg[e^{\frac{\beta}{t-1}}\G(\nu_{t-1},\cdots, \nu_1)-\int_{\frac{1}{(\bar{J}_{t-2}^\text{opt})'(\beta)}}^{\frac{e^\beta}{(\bar{J}_{t-2}^\text{opt})'(0)}} \frac{e^{b_{t-1}^\text{relax}}}{x} dF(x) +\int_{\frac{1}{(\bar{J}_{t-2}^\text{opt})'(\beta)}}^{\frac{e^\beta}{(\bar{J}_{t-2}^\text{opt})'(0)}} \frac{e^{b_{t-1}^\text{relax}}-e^{b_{t-1}^\text{opt}}}{x} dF(x)\Bigg].
        \end{split}
    \end{equation}
    Since $\text{Support}(f)=[g_{\min},g_{\max}]\subset \left[\frac{1}{(\bar{J}_{t-2}^\text{opt})'(\beta)},\frac{e^\beta}{(\bar{J}_{t-2}^\text{opt})'(0)}\right]$ (for large $\beta$) and the induction hypothesis that $b_{t-1}^\text{relax}(\beta,x)$ converges to $b_{t-1}^\text{opt}(\beta, x)$ uniformly on $x\in [g_{\min}, g_{\max}]$,
    \begin{equation}
        \lim_{\beta\to\infty} \int_{\frac{1}{(\bar{J}_{t-2}^\text{opt})'(\beta)}}^{\frac{e^\beta}{(\bar{J}_{t-2}^\text{opt})'(0)}} \frac{e^{b_{t-1}^\text{relax}}-e^{b_{t-1}^\text{opt}}}{x} f(x) dx= \lim_{\beta\to\infty}\int_{g_{\min}}^{g_{\max}} \frac{e^{b_{t-1}^\text{relax}}-e^{b_{t-1}^\text{opt}}}{x} f(x) dx = 0,
    \end{equation}
    and thus
    \begin{equation}
        \lim_{\beta\to\infty} \left[(\bar{U}_{t-1})'(\beta)-(\bar{J}_{t-1}^\text{opt})'(\beta) \right]
        =\lim_{\beta\to\infty} \left[ e^{\frac{\beta}{t-1}}\G(\nu_{t-1},\cdots, \nu_1) -\int_{\frac{1}{(\bar{J}_{t-2}^\text{opt})'(\beta)}}^{\frac{e^\beta}{(\bar{J}_{t-2}^\text{opt})'(0)}}\frac{e^{b_{t-1}^\text{relax}}}{x} dF(x)\right].
    \end{equation}

    By substituting \eqref{eq:bt_relax} into $b_{t-1}^\text{relax}$ and re-writing $\G(\nu_{t-1},\cdots, \nu_1)$ as
    \begin{equation}
        \G(\nu_{t-1},\cdots, \nu_1) = \left(\G(\nu_{t-2},\cdots, \nu_1)\right)^{\frac{t-2}{t-1}} \int \left(\frac{1}{x}\right)^{\frac{1}{t-1}} dF(x),
    \end{equation}
    we have
        $\lim_{\beta\to\infty} \left[(\bar{U}_{t-1})'(\beta)-(\bar{J}_{t-1}^\text{opt})'(\beta) \right]=0.$
    \end{proof}

\section{Proof of Theorem \ref{thm:relax_cost_conv}}
\label{sec:pf_relax_cost_conv}
We will prove this by showing that $\lim_{B\to\infty}\left[\bar{J}_{T}^\text{relax}(B) - \bar{U}_{T}(B)\right]=0$ and $\lim_{B\to\infty} \left[\bar{U}_T(B) - \bar{J}_T^\text{opt}(B)\right]=0$.
\begin{enumerate}
    \item[(i)]  First, we show $\lim\limits_{B\to\infty}\left[\bar{J}_{T}^\text{relax}(B) - \bar{U}_{T}(B)\right]=0$ by induction. Notice that $J_{t}^\text{relax}(\beta,g) \ge U_{t}(\beta,g)$ for all values of $\beta$ and $g$ by the constructions \eqref{eq:Lt_opt} and \eqref{eq:Jt_relax}, and thus, $\left|J_{t}^\text{relax}(\beta,g) - U_{t}(\beta,g)\right|=J_{t}^\text{relax}(\beta,g) - U_{t}(\beta,g)$ and $\left|\bar{J}_{t}^\text{relax}(\beta) - \bar{U}_{t}(\beta)\right|=\bar{J}_{t}^\text{relax}(\beta) - \bar{U}_{t}(\beta)$.

        By \eqref{eq:barUt} and \eqref{eq:Jt_relax}, $\bar{J}_1^\text{relax}\equiv \bar{U}_1$. Let $\epsilon (>0)$ be given.  As an induction hypothesis, we assume that
        \begin{equation} \label{eq:Jrelax_Ut_hypo}
            \lim\limits_{B\to\infty}\left[\bar{J}_{T-1}^\text{relax}(B) - \bar{U}_{T-1}(B)\right]=0,
        \end{equation}
        i.e., there exists $\mathfrak{B}_0$ such that if $B>\mathfrak{B}_0$ then $\bar{J}_{T-1}^\text{relax}(B) - \bar{U}_{T-1}(B) <\epsilon$.

    To differentiate $b_t^\text{relax}$ and the solution to \eqref{eq:Lt_opt}, we let $b_t^\text{untruncated}$ be the solution to \eqref{eq:Lt_opt}, and thus the relation of the two is $b_t^\text{relax}(\beta_t,g_t)= \left\langle b_t^\text{untruncated}(\beta_t,g_t)\right\rangle_0^{\beta_t}$, where $b_t^\text{untruncated}(\beta_t,g_t)=\frac{1}{t}\beta_t +\frac{t-1}{t}\log\left(\frac{g_t}{\eta_t^\text{relax}}\right)$ by \eqref{eq:bt_relax_untruncated}.

    Notice that
    \begin{equation}
    \footnotesize
        \begin{split}
        \bar{J}_{T}^\text{relax}(B) &- \bar{U}_{T}(B)
        = \E\left[J_{T}^\text{relax}(B,g_T) - U_{T}(B,g_T)\right] \\
        &= \E\left[J_{T}^\text{relax}(B,g_T) - U_{T}(B,g_T) \Bigg\vert \frac{1}{(\bar{U}_{T-1})'(B)} < g_T < \frac{e^{B}}{(\bar{U}_{T-1})'(0)}\right]\Pr\left\{\frac{1}{(\bar{U}_{T-1})'(B)} < g_T < \frac{e^{B}}{(\bar{U}_{T-1})'(0)}\right\}\\
        &+ \E\left[J_{T}^\text{relax}(B,g_T) - U_{T}(B,g_T) \Bigg\vert g_T\le \frac{1}{(\bar{U}_{T-1})'(B)} \right]\Pr\left\{g_T\le \frac{1}{(\bar{U}_{T-1})'(B)} \right\} \\
        &+ \E\left[J_{T}^\text{relax}(B,g_T) - U_{T}(B,g_T) \Bigg\vert  g_T \ge \frac{e^{B}}{(\bar{U}_{T-1})'(0)}\right]\Pr\left\{g_T \ge \frac{e^{B}}{(\bar{U}_{T-1})'(0)}\right\}
        \end{split}
    \end{equation}

     When $\frac{1}{(\bar{U}_{T-1})'(B)} < g_T < \frac{e^{B}}{(\bar{U}_{T-1})'(0)}$, $b_T^\text{relax}(B,g_T)=b_T^\text{untruncated}(B,g_T)$ by \eqref{eq:bt_relax_det}, i.e., no boundary cases occur,
     \begin{equation}
        J_{T}^\text{relax}(B,g_T) - U_{T}(B,g_T)=\bar{J}_{T-1}^\text{relax}(B-b_T^\text{relax}) - \bar{U}_{T-1}(B-b_T^\text{relax}),
     \end{equation}
     where $b_T^\text{relax}=\frac{1}{T}B + \frac{T-1}{T}\log g_T\G(\nu_{T-1},\cdots, \nu_1)$ by \eqref{eq:bt_relax} for $g_T \in \left(\frac{1}{(\bar{U}_{T-1})'(B)},\; \frac{e^{B}}{(\bar{U}_{T-1})'(0)}\right)$. Thus,
     \begin{equation}
        B-b_T^\text{relax} = \frac{T-1}{T}B - \frac{T-1}{T}\log g_T \G(\nu_{T-1},\cdots, \nu_1),
     \end{equation}
     which is strictly increasing in $B$ and unbounded. Therefore, there exists $\mathfrak{B}_1$ such that $B>\mathfrak{B}_1$ implies $B-b_T^\text{relax}>\mathfrak{B}_0$ uniformly for all $g_T\in [g_{\min},g_{\max}]$. Thus, $\bar{J}_{T-1}^\text{relax}(B-b_T^\text{relax}) - \bar{U}_{T-1}(B-b_T^\text{relax})<\epsilon$ for $B>\mathfrak{B}_1$ uniformly for all $g_T\in [g_{\min},g_{\max}]$ by \eqref{eq:Jrelax_Ut_hypo} and consequently,
     \begin{equation}
     \begin{split}
        \lim_{B\to\infty}\E\Bigg[J_{T}^\text{relax}(B,g_T) - U_{T}(B,g_T) &\Bigg\vert \frac{1}{(\bar{U}_{T-1})'(B)} < g_T < \frac{e^{B}}{(\bar{U}_{T-1})'(0)}\Bigg]\Pr\left\{\frac{1}{(\bar{U}_{T-1})'(B)} < g_T < \frac{e^{B}}{(\bar{U}_{T-1})'(0)}\right\} \\
        &= \lim_{B\to\infty}\int_{\frac{1}{(\bar{U}_{T-1})'(B)}}^{\frac{e^{B}}{(\bar{U}_{T-1})'(0)}} \left[J_{T}^\text{relax}(B,x) - U_{T}(B,x)\right]  f(x) dx\\
        &= \lim_{B\to\infty}\int_{g_{\min}}^{g_{\max}} \left[J_{T}^\text{relax}(B,x) - U_{T}(B,x)\right]  f(x) dx\\
        &= \lim_{B\to\infty}\int_{g_{\min}}^{g_{\max}} \left[\bar{J}_{T-1}^\text{relax}(B-b_T^\text{relax}) - \bar{U}_{T-1}(B-b_T^\text{relax})\right]  f(x) dx=0
     \end{split}
     \end{equation}

    For sufficiently large $B$, $\frac{1}{(\bar{U}_{T-1})'(B)} < g_{\min}$ since $(\bar{U}_{T-1})'(B)=e^{B/T}\G(\nu_{T-1},\cdots, \nu_1)$ and $ \frac{e^{B}}{(\bar{U}_{T-1})'(0)}> g_{\max}$, and thus
    \begin{equation}
        \Pr\left\{g_T\le \frac{1}{(\bar{U}_{T-1})'(B)} \right\}  = \Pr\left\{g_T \ge \frac{e^{B}}{(\bar{U}_{T-1})'(0)}\right\} =0.
    \end{equation}

    Consequently, the induction follows.

    \item[(ii)] Second we show $\lim_{B\to\infty} \left[\bar{U}_T(B) - \bar{J}_T^\text{opt}(B)\right]=0$ by induction again.

        At $t=1$, all the bits are to be served and thus:
    \begin{equation}
        \bar{J}_1^\text{opt}(\beta)=\E\left[\frac{e^{\beta}-1}{g}\right]= e^{\beta}\nu_1-\nu_1 = \bar{U}_1(\beta),\qquad \forall \beta (\ge 0),
    \end{equation}
    where $\nu_1$ is defined in \eqref{eq:nu_m}.
    As an induction hypothesis, we assume that
    \begin{equation} \label{eq:Ut1_Jt1}
        \lim_{\beta\to\infty}\left[\bar{U}_{t-1}(\beta) - \bar{J}_{t-1}^\text{opt}(\beta)\right]=0.
    \end{equation}
    From \eqref{eq:Jt_opt} and \eqref{eq:bt_opt_det}, we write the expected cost-to-go as:
    \begin{equation}
        \bar{J}_{t}^\text{opt}(\beta)
        = \int_0^{\frac{1}{(\bar{J}_{t-1}^\text{opt})'(\beta)}} \bar{J}_{t-1}^\text{opt}(\beta) f(x)dx
        + \int_{\frac{e^\beta}{(\bar{J}_{t-1}^\text{opt})'(0)}}^\infty \frac{e^\beta-1}{x} f(x) dx
        + \int_{\frac{1}{(\bar{J}_{t-1}^\text{opt})'(\beta)}}^{\frac{e^\beta}{(\bar{J}_{t-1}^\text{opt})'(0)}}
        \left[\frac{e^{b_{t}^\text{opt}}-1}{x} + \bar{J}_{t-1}^\text{opt}(\beta-b_{t}^\text{opt}) \right] dF(x),
    \end{equation}
    where $b_{t}^\text{opt}$ is a function of $\beta$ (and $x$).
    Since  $(\bar{J}_{t-1}^\text{opt})'$ is unbounded increasing and $\text{Support}(f)=[g_{\min}, g_{\max}]$, $\frac{1}{(\bar{J}_{t-1}^\text{opt})'(\beta)}< g_{\min}$ and $\frac{e^\beta}{(\bar{J}_{t-1}^\text{opt})'(0)}>g_{\max}$ for sufficiently large $\beta$ as did in Lemma \ref{lem:bt_conv_Ud_Jd}, and thus
    \begin{eqnarray}
        &&\lim_{\beta\to\infty} \int_0^{\frac{1}{(\bar{J}_{t-1}^\text{opt})'(\beta)}} \bar{J}_{t-1}^\text{opt}(\beta) f(x) dx =0\\
        && \lim_{\beta\to\infty}  \int_{\frac{e^\beta}{(\bar{J}_{t-1}^\text{opt})'(0)}}^\infty \frac{e^\beta-1}{x} f(x) dx  = 0.
    \end{eqnarray}
    From Theorem \ref{thm:relax_policy_conv} and the induction hypothesis \eqref{eq:Ut1_Jt1},
    \begin{eqnarray}
        &&\lim_{\beta\to\infty} \left[ b_t^\text{relax}(\beta,g)-b_t^\text{opt}(\beta,g)\right]=0\quad \text{uniformly}\;\; \forall g\in [g_{\min}, g_{\max}],\\
        &&\lim_{\beta\to\infty} \left[\bar{U}_{t-1}(\beta-b_t^\text{relax}(\beta, g)) -\bar{J}_{t-1}^\text{opt}(\beta-b_t^\text{opt}(\beta,g))\right] =0\quad \text{uniformly}\;\; \forall g\in [g_{\min}, g_{\max}],
    \end{eqnarray}
    and thus,
    \begin{equation}
        \lim_{\beta\to\infty} \bar{J}_t^\text{opt}(\beta)= \lim_{\beta\to\infty}
        \int_{\frac{1}{(\bar{J}_{t-1}^\text{opt})'(\beta)}}^{\frac{e^\beta}{(\bar{J}_{t-1}^\text{opt})'(0)}}
        \left[\frac{e^{b_{t}^\text{relax}}-1}{x} + \bar{U}_{t-1}(\beta-b_{t}^\text{relax}) \right] dF(x)
    \end{equation}
    Therefore,
    \begin{equation}
        \begin{split}
            \lim_{\beta\to\infty} [\bar{U}_t(\beta) &- \bar{J}_t^\text{opt}(\beta)]
            = \lim_{\beta\to\infty} \left[t e^{\frac{\beta}{t}}\G\left(\nu_t,\nu_{t-1},\cdots, \nu_1\right)-t\nu_1 - \int_{\frac{1}{(\bar{J}_{t-1}^\text{opt})'(\beta)}}^{\frac{e^\beta}{(\bar{J}_{t-1}^\text{opt})'(0)}}
        \left[\frac{e^{b_{t}^\text{relax}}-1}{x} + \bar{U}_{t-1}(\beta-b_{t}^\text{relax}) \right] dF(x) \right] \\
            &=\lim_{\beta\to\infty} \left[t e^{\frac{\beta}{t}}\G\left(\nu_t,\nu_{t-1},\cdots, \nu_1\right)-\int_{\frac{1}{(\bar{J}_{t-1}^\text{opt})'(\beta)}}^{\frac{e^\beta}{(\bar{J}_{t-1}^\text{opt})'(0)}}
        \left[\frac{e^{b_{t}^\text{relax}}-1}{x} + (t-1) e^{\frac{\beta-b_t^\text{relax}}{t-1}} \G\left(\nu_{t-1},\cdots,\nu_1\right) \right] dF(x) \right]
        \end{split}
    \end{equation}
    By substituting \eqref{eq:bt_relax} into $b_t^\text{relax}$, we have $\lim_{\beta\to\infty} [\bar{U}_t(\beta) - \bar{J}_t^\text{opt}(\beta)]=0$ as desired. Thus, the induction holds.
\end{enumerate}
    By (i) and (ii), we obtain the result as desired.

\section{Proof of Theorem \ref{thm:bt_one_conv_policy}}
\label{sec:pf_bt_one_conv_policy}
    First, we show that the optimal scheduler becomes a threshold policy as $\beta\to 0$. As in \eqref{eq:bt_opt_det}, the optimal policy is determined differently by the range of $g_t$: $g_t\le 1/(\bar{J}^\text{opt}_{t-1})'(\beta)$, $1/(\bar{J}^\text{opt}_{t-1})'(\beta) < g_t < e^{\beta}/(\bar{J}^\text{opt}_{t-1})'(0)$, or $g_t\ge e^{\beta}/(\bar{J}^\text{opt}_{t-1})'(0)$. Since $\lim_{\beta\to 0} e^\beta=1$ and $\lim_{\beta\to 0} (\bar{J}^\text{opt}_{t-1})'(\beta)= (\bar{J}^\text{opt}_{t-1})'(0)$,
    \begin{equation} \label{eq:lim_bounds}
        \lim_{\beta\to 0} \left[\frac{e^\beta}{(\bar{J}_{t-1}^\text{opt})'(0)}-\frac{1}{(\bar{J}_{t-1}^\text{opt})'(\beta)}\right]=0,
    \end{equation}
    which implies that the case of $1/(\bar{J}^\text{opt}_{t-1})'(\beta) < g_t < e^{\beta}/(\bar{J}^\text{opt}_{t-1})'(0)$ occurs with vanishing probability as $\beta\to 0$. Thus, the optimal policy is a threshold policy , i.e.,
    \begin{equation} \label{eq:bt_opt_threshold}
        b_t^\text{opt}(\beta,g_t) = \begin{cases}
            \beta, & g_t> \frac{1}{(\bar{J}_{t-1}^\text{opt})'(0)}, \\
            0, & g_t\le \frac{1}{(\bar{J}_{t-1}^\text{opt})'(0)}
        \end{cases}
\end{equation}
    as $\beta\to 0$. This implies that
    \begin{equation}
        \lim_{\beta\to 0} \sup \{g: b_t^\text{opt}(\beta,g) =0\} = \lim_{\beta\to 0} \inf \{g : b_t^\text{opt}(\beta,g)=\beta \} = \frac{1}{(\bar{J}_{t-1}^\text{opt})'(0)}.
    \end{equation}

    Second, we show that the thresholds are identical, i.e., $\lim_{\beta\to 0} (\bar{J}_{t-1}^\text{opt})'(\beta)=\omega_t$ for every $t$, where $\omega_t$ is defined in \eqref{eq:stopping_omega_t}. When $t=2$, this holds by construction.
    As an induction hypothesis, we suppose that $\lim_{\beta\to 0} (\bar{J}_{t-2}^\text{opt})'(\beta)=(\bar{J}_{t-2}^\text{opt})'(0)=\omega_{t-1}$. By \eqref{eq:Jt1d} and \eqref{eq:Jt1d_sub},
    \begin{equation} 
        \begin{split}
            \lim_{\beta\to 0} (\bar{J}_{t-1}^\text{opt})'(\beta)
            &=\lim_{\beta\to 0}\left[\int_0^{\frac{1}{(\bar{J}_{t-2}^\text{opt})'(\beta)}} (\bar{J}_{t-2}^\text{opt})'(\beta) dF(x)
        + e^\beta \int_{\frac{e^\beta}{(\bar{J}_{t-2}^\text{opt})'(0)}}^\infty \frac{1}{x} dF(x)
        + \int_{\frac{1}{(\bar{J}_{t-2}^\text{opt})'(\beta)}}^{\frac{e^\beta}{(\bar{J}_{t-2}^\text{opt})'(0)}}
        \frac{e^{b_{t-1}^\text{opt}}}{x} dF(x)\right] \\
            &= (\bar{J}_{t-2}^\text{opt})'(0) \int_0^{\frac{1}{(\bar{J}_{t-2}^\text{opt})'(0)}}
            dF(x) + \int_{\frac{1}{(\bar{J}_{t-2}^\text{opt})'(0)}}^\infty \frac{1}{x} dF(x) \\
            &= (\bar{J}_{t-2}^\text{opt})'(0) \Pr\left\{g_t \le \frac{1}{(\bar{J}_{t-2}^\text{opt})'(0)}\right\} +
             \E\left[\frac{1}{g_t} \Bigg\vert g_t > \frac{1}{(\bar{J}_{t-2}^\text{opt})'(0)}\right] \Pr\left\{g_t > \frac{1}{(\bar{J}_{t-2}^\text{opt})'(0)}\right\}= \omega_{t}
        \end{split}
    \end{equation}
    where the last equality follows from \eqref{eq:stopping_omega_t} by substituting $\omega_{t-1}$ into $(\bar{J}_{t-2}^\text{opt})'(0)$ from the induction hypothesis.
    Thus, the induction holds.

\section{Proof of Theorem \ref{thm:oneshot_cost_conv}}
\label{sec:pf_oneshot_cost_conv}
    Since $\bar{J}_1^\text{one}(B)=(e^B-1)\E\left[\frac{1}{g}\right]$, $(\bar{J}_1^\text{one})'(0)=\E\left[\frac{1}{g}\right]=\omega_2$ by \eqref{eq:stopping_omega_t}. If we suppose that $\lim_{B\to 0} (\bar{J}_{t-2}^\text{one})'(B)=(\bar{J}_{t-2}^\text{one})'(0)=\omega_{t-1}$, then from \eqref{eq:Jt_one_stopping}
    \begin{eqnarray}
        \bar{J}_{t-1}^\text{one}(B) &=& \int_0^{\frac{1}{\omega_{t-1}}} \bar{J}_{t-2}^\text{one}(B) dF(x) + \int_{\frac{1}{\omega_{t-1}}}^\infty \frac{e^B-1}{x}dF(x) \\
        (\bar{J}_{t-1}^\text{one})'(B) &=& \int_0^{\frac{1}{\omega_{t-1}}} (\bar{J}_{t-2}^\text{one})'(B) dF(x) + \int_{\frac{1}{\omega_{t-1}}}^\infty \frac{e^B}{x}dF(x).
    \end{eqnarray}
    Thus,
    \begin{equation}
        \begin{split}
            \lim_{B\to 0} (\bar{J}_{t-1}^\text{one})'(B)
            &= (\bar{J}_{t-2}^\text{one})'(0) \int_0^{\frac{1}{\omega_{t-1}}} dF(x) + \int_{\frac{1}{\omega_{t-1}}}^\infty \frac{1}{x} dF(x) \\
            &= \omega_{t-1}\Pr\left\{\frac{1}{g}\ge \omega_{t-1}\right\} + \E\left[\frac{1}{g}\Bigg\vert\frac{1}{g}\ge \omega_{t-1} \right]\Pr\left\{\frac{1}{g}\ge \omega_{t-1}\right\} = \omega_t.
        \end{split}
    \end{equation}
    By induction, $\lim_{B\to 0} (\bar{J}_{T}^\text{one})'(B)=\omega_{T+1}$. In the proof of Theorem \ref{thm:bt_one_conv_policy}, we have shown that $\lim_{B\to 0} (\bar{J}_{T}^\text{opt})'(B)=\omega_{T+1}$ also. Since $\lim_{B\to 0} \bar{J}_T^\text{one}(B) = \lim_{B\to 0} \bar{J}_T^\text{opt}(B)=0$, by L'Hopital's rule, we have
    \begin{equation}
        \lim_{B\to 0} \frac{\bar{J}_T^\text{one}(B)}{\bar{J}_T^\text{opt}(B)} =\lim_{B\to 0} \frac{(\bar{J}_T^\text{one})'(B)}{(\bar{J}_T^\text{opt})'(B)}=1.
    \end{equation}

\section{Proof of Theorem \ref{thm:J_ergdeltaopt_conv}}
\label{sec:pf_J_ergdeltaopt_conv}
    By definition,
    \begin{equation} \label{eq:inequalities_Jerg_Jergdeltaopt}
        \bar{E}^\text{erg}(\bar{b}) \le \frac{1}{T}\bar{J}_T^\text{opt}(\bar{b}T)\le 
        \frac{1}{T}\bar{J}_T^\text{constrained-erg}(\bar{b}T).
    \end{equation}
    Let $\epsilon>0$ be given. Since $\bar{E}^\text{erg}(\bar{b})$ is an increasing continuous function of $\bar{b}$, there exists $\delta>0$ such that
    \begin{equation}
        \bar{E}^\text{erg}(\bar{b})+\epsilon = \bar{E}^\text{erg}(\bar{b}+\delta).
    \end{equation}
    We use this $\delta$ for $b_t^\text{erg-delta}$. Then, by \eqref{eq:barJT_constrained_delta} and \eqref{eq:barJT_constrained},
    \begin{equation}
        \begin{split}
        \frac{1}{T} \bar{J}_T^\text{constrained-erg}(\bar{b}T)
                &\le \frac{1}{T}\E\left[ \sum_{t=2}^T \frac{e^{b_t^\text{erg}(\bar{b}+\delta, g_t)}-1}{g_t} + \frac{e^{\beta_1}-1}{g_1} \right] \\
                &= \frac{T-1}{T}\E\left[\frac{1}{T-1} \sum_{t=2}^T \frac{e^{b_t^\text{erg}(\bar{b}+\delta, g_t)}-1}{g_t} \right]+ \frac{1}{T}\E_{g_1}\left[\E_{\beta_1}\left[\frac{e^{\beta_1}-1}{g_1} \right]\right]
        \end{split}
    \end{equation}
    Notice that $\left\{\frac{e^{b_t^\text{erg}(\bar{b}+\delta, g_t)}-1}{g_t}\right\}_{t=2}^T$ are i.i.d.~and thus,
    \begin{equation}
        \E\left[\frac{1}{T-1} \sum_{t=2}^T \frac{e^{b_t^\text{erg}(\bar{b}+\delta, g_t)}-1}{g_t}\right] = \E \left[\frac{e^{b_t^\text{erg}(\bar{b}+\delta, g_t)}-1}{g_t}\right] = \bar{E}^\text{erg}(\bar{b}+\delta).
    \end{equation}
    Since  $\{b_t^\text{erg}\}_{t=2}^T$ are i.i.d., $\frac{1}{T-1}\sum_{t=2}^T b_t^\text{erg} \to \E[b_t^\text{erg}]=\bar{b}+\delta$ almost surely (a.s.) as $T\to \infty$ by the law of large number, and thus, the remaining bits at the final slot is given by
    \begin{equation}
        \bar{b}T-\sum_{t=2}^T b_t^\text{erg}=(T-1)\left(\frac{T}{T-1}\bar{b}-\frac{1}{T-1}\sum_{t=2}^T b_t^\text{erg}\right) \le \bar{b}+\delta\quad \text{a.s.}
    \end{equation}
    That is, $e^{\beta_1}\le e^{\bar{b}+\delta}$ a.s.~and therefore $\E\left[e^{\beta_1}\right]\le e^{\bar{b}+\delta}$.
    \begin{equation}
        \E_{g_1}\left[\E_{\beta_1}\left[\frac{e^{\beta_1}-1}{g_1} \right]\right]
        \le \E_{g_1}\left[\frac{e^{\bar{b}+\delta}-1}{g_1}\right]
        = \min_{b=\bar{b}+\delta}\;\;\E_{g_1}\left[\frac{e^{b}-1}{g_1}\right]
        \le \min_{\E[b]=\bar{b}+\delta}\;\;\E\left[\frac{e^{b}-1}{g_1}\right] = \bar{E}^\text{erg}(\bar{b}+\delta)
    \end{equation}
    Thus,
    \begin{equation}
        \lim_{T\to\infty} \frac{1}{T} J_T^\text{constrained-erg}(\bar{b}T)
        \le \lim_{T\to\infty}\left[\frac{T-1}{T} \bar{E}^\text{erg}(\bar{b}+\delta) + \frac{1}{T} \bar{E}^\text{erg}(\bar{b}+\delta)\right] = \bar{E}^\text{erg}(\bar{b}+\delta).
    \end{equation}
    Therefore,
    \begin{equation}
        \bar{E}^\text{erg}(\bar{b}) \le \lim_{T\to\infty}\frac{1}{T}\bar{J}_T^\text{constrained-erg}(\bar{b}T)\le \bar{E}^\text{erg}(\bar{b})+\epsilon.
    \end{equation}
    Since $\epsilon$ is arbitrary, we have the result.

\section{Proof of Theorem \ref{thm:scheduling_gain_bounds}}
\label{sec:pf_scheduling_gain_bounds}
    First, we show the monotonicity of $\Delta_T^\text{opt}(B)$. Since
    \begin{equation} \label{eq:dDelta_T}
    \frac{d}{dB}\Delta_T^\text{opt}(B) = \frac{(\bar{J}_T^\text{eq})'(B)\bar{J}_T^\text{opt}(B)-\bar{J}_T^\text{eq}(B)(\bar{J}_T^\text{opt})'(B)}{\left(\bar{J}_T^\text{opt}(B)\right)^2},
    \end{equation}
    we will investigate the quantity $(\bar{J}_T^\text{eq})'(B)\bar{J}_T^\text{opt}(B)-\bar{J}_T^\text{eq}(B)(\bar{J}_T^\text{opt})'(B)$.

    From \eqref{eq:Jt_opt}, \eqref{eq:bt_opt_det}, and \eqref{eq:JT_eq}, we have
\begin{eqnarray}
    \bar{J}_T^\text{opt}(B) &=& \int_0^{\frac{1}{(\bar{J}_{T-1}^\text{opt})'(B)}} \bar{J}_{T-1}^\text{opt}(B) dF(x)
     + \int_{\frac{1}{(\bar{J}_{T-1}^\text{opt})'(B)}}^{\frac{e^B}{(\bar{J}_{T-1}^\text{opt})'(0)}} \left[\frac{e^{b_T^\text{opt}}-1}{x}+\bar{J}_{T-1}^\text{opt}(B-b_{T}^\text{opt}) \right] dF(x)\nonumber \\
    &&+ \int_{\frac{e^B}{(\bar{J}_{T-1}^\text{opt})'(0)}}^\infty \frac{e^B-1}{x} dF(x) \\
    (\bar{J}_T^\text{opt})'(B) &=& \int_0^{\frac{1}{(\bar{J}_{T-1}^\text{opt})'(B)}} (\bar{J}_{T-1}^\text{opt})'(B) dF(x)
    + \int_{\frac{1}{(\bar{J}_{T-1}^\text{opt})'(B)}}^{\frac{e^B}{(\bar{J}_{T-1}^\text{opt})'(0)}} \frac{d}{dB}\left[\frac{e^{b_T^\text{opt}}-1}{x}+\bar{J}_{T-1}^\text{opt}(B-b_{T}^\text{opt}) \right] dF(x) \nonumber \\
    &&+ \int_{\frac{e^B}{(\bar{J}_{T-1}^\text{opt})'(0)}}^\infty \frac{e^B}{x} dF(x) \\
    \bar{J}_T^\text{eq}(B) &=& T(e^{\frac{B}{T}}-1)\nu_1 \label{eq:barJT_eq_B}\\
    (\bar{J}_T^\text{eq})'(B) &=& e^{\frac{B}{T}}\nu_1 \label{eq:dbardJT_eq_B}
\end{eqnarray}
    When $T=2$,
    \begin{equation}
    \begin{split}
    (\bar{J}_2^\text{eq})'(B)\bar{J}_2^\text{opt}(B) - \bar{J}_2^\text{eq}(B)(\bar{J}_2^\text{opt})'(B)
    &= \int_0^{\frac{e^{-B}}{\nu_1}} e^{\frac{B}{2}}\nu_1^2 (-1)(e^{\frac{B}{2}}-1)^2 f(x) dx  \\
    &+\int_{\frac{e^{-B}}{\nu_1}}^{\frac{e^{B}}{\nu_1}} e^{\frac{B}{2}}\nu_1 (-1)\left[\left(\frac{1}{x}\right)^{\frac{1}{2}}-\left(\frac{1}{\nu_1}\right)^{\frac{1}{2}}\right]^2 f(x) dx \\
    &+ \int_{\frac{e^{B}}{\nu_1}}^\infty \frac{e^{\frac{B}{2}}\nu_1}{x} (-1)(e^{\frac{B}{2}}-1)^2 f(x) dx \\
    &\le 0.
    \end{split}
    \end{equation}
    That is, $\frac{d\Delta_2^\text{opt}(B)}{dB}\le 0$.

We now suppose that $\frac{d\Delta_{T-1}^\text{opt}(B)}{dB}\le 0$ and examine $\frac{d\Delta_T^\text{opt}(B)}{dB}$. That is, we assume that
    \begin{equation} \label{eq:num_dDelta_T1}
        (\bar{J}_{T-1}^\text{eq})'(B)\bar{J}_{T-1}^\text{opt}(B) - \bar{J}_{T-1}^\text{eq}(B)(\bar{J}_{T-1}^\text{opt})'(B) \le 0,
    \end{equation}
    where the left hand side is the numerator of $\frac{d\Delta_{T-1}^\text{opt}(B)}{dB}$ from \eqref{eq:dDelta_T}. The numerator of $\frac{d\Delta_{T}^\text{opt}(B)}{dB}$ is given by
    \begin{equation} \label{eq:num_dDelta}
    	\footnotesize
		\begin{split}
    	(\bar{J}_T^\text{eq})'(B)\bar{J}_T^\text{opt}(B) &- \bar{J}_T^\text{eq}(B)(\bar{J}_T^\text{opt})'(B)\\
    &= \int_0^{\frac{1}{(\bar{J}_{T-1}^\text{opt})'(B)}} \nu_1 \left[e^{\frac{B}{T}} \bar{J}_{T-1}^\text{opt}(B)- T(e^{\frac{B}{T}}-1) (\bar{J}_{T-1}^\text{opt})'(B) \right] f(x) dx \\
    &+ \int_{\frac{e^B}{(\bar{J}_{T-1}^\text{opt})'(0)}}^\infty \nu_1 \left[e^{\frac{B}{T}} \frac{e^B-1}{x}  - T(e^{\frac{B}{T}}-1) \frac{e^B}{x}  \right] f(x) dx \\
    &+ \int_{\frac{1}{(\bar{J}_{T-1}^\text{opt})'(B)}}^{\frac{e^B}{(\bar{J}_{T-1}^\text{opt})'(0)}}  \nu_1\left[e^{\frac{B}{T}} \left(\frac{e^{b_T^\text{opt}}-1}{x}+\bar{J}_{T-1}^\text{opt}(B-b_{T-1}^\text{opt}) \right) - T(e^{\frac{B}{T}}-1) \frac{d}{dB}\left(\frac{e^{b_T^\text{opt}}-1}{x}+\bar{J}_{T-1}^\text{opt}(B-b_{T-1}^\text{opt}) \right)  \right] f(x) dx 
		\end{split}
    \end{equation}
	From the integrand of the first integral in \eqref{eq:num_dDelta},
    \begin{multline} \label{eq:first_int}
        e^{\frac{B}{T}}\bar{J}_{T-1}^\text{opt}(B) -T(e^{\frac{B}{T}}-1)(\bar{J}_{T-1}^\text{opt})'(B)
            =  e^{-\frac{B}{T(T-1)}}\left[e^{\frac{B}{T-1}}\bar{J}_{T-1}^\text{opt}(B) - (T-1)(e^{\frac{B}{T-1}}-1)(\bar{J}_{T-1}^\text{opt})'(B)\right]\\+\left[T-e^{\frac{B}{T}} - (T-1)e^{-\frac{B}{T(T-1)}} \right](\bar{J}_{T-1}^\text{opt})'(B)
    \end{multline}
    From \eqref{eq:barJT_eq_B}, \eqref{eq:dbardJT_eq_B}, and the hypothesis \eqref{eq:num_dDelta_T1}, we have
    \begin{equation} \label{eq:first_int_1}
    	e^{\frac{B}{T-1}}\bar{J}_{T-1}^\text{opt}(B) - (T-1)(e^{\frac{B}{T-1}}-1)(\bar{J}_{T-1}^\text{opt})'(B) \le 0.
    \end{equation}
    We define a concave function $\phi$ such that
    \begin{equation}
        \phi(z)=T-z^{T-1}-(T-1)z^{-1},\qquad z > 0.
    \end{equation}
    Since the concavity and $\phi'(z)=0$ yield that $\phi$ attains its maximum at $z=1$ and $\phi(1)=0$, $\phi(z)\le 0$ for all $z>0$. Since $\phi(e^{\frac{B}{T(T-1)}})=T-e^{\frac{B}{T}} - (T-1)e^{-\frac{B}{T(T-1)}}$,
    \begin{equation} \label{eq:first_int_2}
    	T-e^{\frac{B}{T}} - (T-1)e^{-\frac{B}{T(T-1)}} \le 0.
    \end{equation}
    From \eqref{eq:first_int_1} and \eqref{eq:first_int_2} along with the fact that $\bar{J}_{T-1}^\text{opt}$ is convex, \eqref{eq:first_int} becomes
    \begin{equation}
        e^{\frac{B}{T}}\bar{J}_{T-1}^\text{opt}(B) -T(e^{\frac{B}{T}}-1)(\bar{J}_{T-1}^\text{opt})'(B)\le 0.
    \end{equation}
	
	Likewise, from the integrand of the second integral in \eqref{eq:num_dDelta}, we want to show that 
	\begin{equation} \label{eq:int_2}
		e^{\frac{B}{T}} (e^B-1)  - T(e^{\frac{B}{T}}-1) e^B \le 0,
	\end{equation}
	which is equivalent to show $e^B-1-T(1-e^{-\frac{B}{T}})e^B\le 0$. If we define a concave function $\psi(z)=-(T-1)z^T+T z^{T-1}-1$ for $z>0$, $\psi(e^{\frac{B}{T}})=e^B-1-T(1-e^{-\frac{B}{T}})e^B$. As we did before, we can show that $\psi(z)\le 0$, and thus \eqref{eq:int_2} holds. 
	
	From the integrand of the third integral in \eqref{eq:num_dDelta}, we want to show that
	\begin{eqnarray}
		e^{\frac{B}{T}}\left(e^{b_T^\text{opt}}-1\right) - T\left(e^{\frac{B}{T}}-1\right) e^{b_T^\text{opt}}\le 0, \label{eq:int_3_part}\\
		e^{\frac{B}{T}} \bar{J}_{T-1}^\text{opt}(B-b_{T-1}^\text{opt}) - T\left(e^{\frac{B}{T}}-1\right) \frac{d}{dB} \bar{J}_{T-1}^\text{opt}(B-b_{T-1}^\text{opt})\le 0. \label{eq:int_3_part_2}
	\end{eqnarray}
	To prove \eqref{eq:int_3_part}, we can write
	\begin{multline}
		e^{\frac{B}{T}}\left(e^{b_T^\text{opt}}-1\right) - T\left(e^{\frac{B}{T}}-1\right) e^{b_T^\text{opt}}
		= \\ e^{\frac{B-b_T^\text{opt}}{T}}\left[e^{\frac{b_T^\text{opt}}{T}}\left(e^{b_T^\text{opt}}-1\right)-T\left(e^{\frac{b_T^\text{opt}}{T}}-1\right)e^{b_T^\text{opt}}\right] - e^{\frac{B-b_T^\text{opt}}{T}}Te^{b_T^\text{opt}}\left(1-e^{-\frac{B-b_T^\text{opt}}{T}}\right)
	\end{multline}
Notice that \eqref{eq:int_2} holds for every $B\ge 0$ and thus 
\begin{equation}
	e^{\frac{b_T^\text{opt}}{T}} (e^{b_T^\text{opt}}-1)  - T(e^{\frac{b_T^\text{opt}}{T}}-1) e^{b_T^\text{opt}} \le 0.
\end{equation}
Therefore, \eqref{eq:int_3_part} holds. By the similar argument, \eqref{eq:int_3_part_2} holds, too. Thus, the third integral in \eqref{eq:num_dDelta} is no greater than 0. Consequently, we obtain
\begin{equation}
	(\bar{J}_T^\text{eq})'(B)\bar{J}_T^\text{opt}(B) - \bar{J}_T^\text{eq}(B)(\bar{J}_T^\text{opt})'(B) \le 0.
\end{equation}
This shows that the monotonicity of $\Delta_T^\text{opt}(B)$ inductively.

    Second, the limits are calculated with $\bar{J}_T^\text{relax}(B)$ and $\bar{J}_T^\text{one}(B)$ since $\bar{J}_T^\text{opt}(B)$ converges to the former for large $B$ and to the latter for small $B$ by Theorem \ref{thm:relax_cost_conv} and Theorem \ref{thm:oneshot_cost_conv}, respectively.

\section{Derivation of High SNR Affine Approximation Parameters}
\label{sec:deriv_high_snr_approx}
From \eqref{eq:JT_eq}, the spectral efficiency of the equal-bit scheduler can be found by
\begin{equation} \label{eq:RT_eq}
    P=(e^{R_T^\text{eq}}-1)\nu_1\quad \text{or} \quad R_T^\text{eq}(P) = \log \left(1+\frac{P}{\nu_1}\right).
\end{equation}
Thus, at high SNR
\begin{equation}
    R_T^\text{eq}(P)=\log P -\log \nu_1 + o(1).
\end{equation}
Similarly, from \eqref{eq:barUt} and Theorem \ref{thm:relax_cost_conv}, the spectral efficiency of the optimal scheduler at high SNR is given by
\begin{equation}
    R_T^\text{opt}(P) = \log P - \log \G\left(\nu_T, \cdots, \nu_1\right) + o(1).
\end{equation}
At high SNR, the ergodic capacity can be approximately given by the uniform power control:
\begin{equation}
    \begin{split}
    R^\text{erg}(P) &\approx  \E\left[\log (1+gP)\right] \\
    &\approx \log P + \E\left[\log g\right] \\
    &= \log P - \log e^{\E\left[\log \left(\frac{1}{g}\right)\right]} \\
    &=\log P - \log \nu_\infty,
    \end{split}
\end{equation}
where the last equality follows from
\begin{equation}
    e^{\E\left[\log \left(\frac{1}{g}\right)\right]} = \lim_{x\to 0} e^{\frac{1}{x}\log \E\left[\left(\frac{1}{g}\right)^x\right]} = \lim_{m\to\infty} \left(\E\left[\left(\frac{1}{g}\right)^{\frac{1}{m}}\right]\right)^m = \nu_\infty.
\end{equation}

\section{Derivation of Low SNR Affine Approximation Parameters}
\label{sec:deriv_low_snr_approx}
From \eqref{eq:RT_eq},
\begin{equation}
    \dot{R}^\text{eq}(P) = \frac{1/\nu_1}{1+\frac{P}{\nu_1}}.
\end{equation}
By \cite{Verdu_IT02},
\begin{equation}
    \left(\frac{E_b}{N_0}\right)_{\min}^\text{eq} = \frac{\log 2}{\dot{R}^\text{eq}(0)}= (\log 2) \nu_1.
\end{equation}
Then second order analysis is given by
\begin{equation}
    \ddot{R}^\text{eq}(P) = \frac{-\left(1/\nu_1\right)^2}{\left(1+\frac{P}{
    \nu_1}\right)^2}
\end{equation}
Therefore, by \cite{Verdu_IT02},
\begin{equation}
    \mathcal{S}_0^\text{eq}= - 2\frac{\left(\dot{R}^\text{eq}(0)\right)^2}{\ddot{R}^\text{eq}(0)}=2.
\end{equation}
By Theorem \ref{thm:oneshot_cost_conv}, the average total energy cost of the optimal scheduler at low SNR is given by
\begin{equation}
    \bar{J}_T^\text{opt} =(e^B-1) \E\left[\min\left(\frac{1}{g_T}, \E\left[\min\left(\frac{1}{g_{T-1}}, \cdots \E\left[\min\left(\frac{1}{g_2},\E\left[\frac{1}{g_1}\right]\right)\right] \right)\right]\right)\right].
\end{equation}
With the per slot basis notations,
\begin{equation}
    TP = (e^{TR^\text{opt}}-1) \E\left[\min\left(\frac{1}{g_T}, \E\left[\min\left(\frac{1}{g_{T-1}}, \cdots \E\left[\min\left(\frac{1}{g_2},\E\left[\frac{1}{g_1}\right]\right)\right] \right)\right]\right)\right]
\end{equation}
and thus,
\begin{equation}
    R^\text{opt}(P)= \frac{1}{T}\log \left(1+\frac{TP}{ \E\left[\min\left(\frac{1}{g_T}, \E\left[\min\left(\frac{1}{g_{T-1}}, \cdots \E\left[\min\left(\frac{1}{g_2},\E\left[\frac{1}{g_1}\right]\right)\right] \right)\right]\right)\right]}\right).
\end{equation}
Therefore, we have
\begin{eqnarray}
    \dot{R}^\text{opt}(P) &=& \frac{\frac{1}{ \E\left[\min\left(\frac{1}{g_T}, \E\left[\min\left(\frac{1}{g_{T-1}}, \cdots \E\left[\min\left(\frac{1}{g_2},\E\left[\frac{1}{g_1}\right]\right)\right] \right)\right]\right)\right]}}{1+\frac{TP}{ \E\left[\min\left(\frac{1}{g_T}, \E\left[\min\left(\frac{1}{g_{T-1}}, \cdots \E\left[\min\left(\frac{1}{g_2},\E\left[\frac{1}{g_1}\right]\right)\right] \right)\right]\right)\right]}}\\
    \ddot{R}^\text{opt}(P) &=& \frac{\frac{T}{\left( \E\left[\min\left(\frac{1}{g_T}, \E\left[\min\left(\frac{1}{g_{T-1}}, \cdots \E\left[\min\left(\frac{1}{g_2},\E\left[\frac{1}{g_1}\right]\right)\right] \right)\right]\right)\right]\right)^2}}{\left(1+\frac{TP}{ \E\left[\min\left(\frac{1}{g_T}, \E\left[\min\left(\frac{1}{g_{T-1}}, \cdots \E\left[\min\left(\frac{1}{g_2},\E\left[\frac{1}{g_1}\right]\right)\right] \right)\right]\right)\right]}\right)^2}.
\end{eqnarray}
Thus,
\begin{eqnarray}
    \left(\frac{E_b}{N_0}\right)_{\min}^\text{one} &=& \frac{\log 2}{\dot{R}(0)} = (\log 2)  \E\left[\min\left(\frac{1}{g_T}, \E\left[\min\left(\frac{1}{g_{T-1}}, \cdots \E\left[\min\left(\frac{1}{g_2},\E\left[\frac{1}{g_1}\right]\right)\right] \right)\right]\right)\right] \\
    \mathcal{S}_0^\text{one} &=& -2 \frac{\left(\dot{R}^\text{one}(0)\right)^2}{\ddot{R}^\text{one}(0)}= \frac{2}{T}.
\end{eqnarray}
See \cite{Verdu_IT02} for $\left(\frac{E_b}{N_0}\right)_{\min}$ and $\mathcal{S}_0$ of the ergodic capacity.

\bibliographystyle{IEEEtran}
\bibliography{scheduling}
\end{document}